\documentclass[letterpaper,autoref]{lipics-v2021}

\nolinenumbers

\hideLIPIcs

\usepackage{amsmath,amssymb,amsfonts,latexsym}
\usepackage{color,graphicx}
\usepackage{url} 
\usepackage{fancybox,framed}
\usepackage{algorithm}
\usepackage[noend]{algpseudocode}
\usepackage{mathtools,thmtools}
\usepackage{enumerate,xspace}
\usepackage{siunitx} 
\usepackage[commandnameprefix=ifneeded]{changes}

\newcommand{\myomit}[1]{}

\newcommand{\eps}{\varepsilon}
\renewcommand{\epsilon}{\eps}
\newcommand{\etal}{\emph{et al.}\xspace}

\theoremstyle{plain}

\newenvironment{myquote}%
  {\list{}{\leftmargin=4mm\rightmargin=4mm}\item[]}%
  {\endlist}

\newcommand{\A}{\ensuremath{\mathcal{A}}}
\newcommand{\B}{\ensuremath{\mathcal{B}}}
\newcommand{\C}{\ensuremath{\mathcal{C}}}
\newcommand{\D}{\ensuremath{\mathcal{D}}}

\newcommand{\G}{\ensuremath{\mathcal{G}}}
\newcommand{\ig}{\G^{\times}}               

\newcommand{\cP}{\ensuremath{\mathcal{P}}}

\newcommand{\R}{\ensuremath{\mathcal{R}}}
\newcommand{\cS}{\ensuremath{\mathcal{S}}}

\newcommand{\REAL}{\ensuremath{\mathbb{R}}}
\newcommand{\Reals}{\REAL}

\newcommand{\Ex}[1]{\ensuremath{\mathbf{E}[#1]}}
\renewcommand{\Pr}[1]{\ensuremath{\mathbf{Pr}[#1]}}
\renewcommand{\leq}{\leqslant}
\renewcommand{\geq}{\geqslant}
\newcommand{\bd}{\partial}
\newcommand{\intr}{\mathit{int}}
\DeclareMathOperator{\polylog}{polylog}

\DeclarePairedDelimiter\floor{\lfloor}{\rfloor}

\DeclareMathOperator{\diam}{diam}

\newcommand{\sssp}{{\sc sssp}\xspace}
\newcommand{\diamp}{{\sc Diameter}\xspace}

\newcommand{\union}{\mathit{union}}

\newcommand{\vd}{\mbox{\sc vd}}
\newcommand{\dsource}{D_{\mathrm{src}}}

\newcommand{\stackpi}{\cS_{\pi}}
\newcommand{\stackpir}{\cS_{\pi}^*}
\newcommand{\lev}{\mathit{lev}}
\newcommand{\owner}{\mathit{owner}}

\newcommand{\myin}{\mathrm{in}}
\newcommand{\Finj}{F^{\myin}_j}
\newcommand{\Fxj}{F^{\times}_j}
\newcommand{\mystar}{\mathrm{star}}
\DeclareMathOperator{\rep}{rep}
\newcommand{\ec}{\operatorname{ecc}}
\newcommand{\pt}[1][v,Q(C)]{\ensuremath{P_{#1}} }

\newcommand{\home}[1]{\operatorname{home}(#1)}

\title{Single-Source Shortest Paths and Almost Exact Diameter in Pseudodisk Graphs} 
\newif
\ifShort

\author{Mark de Berg}{Department of Mathematics and Computer Science, TU Eindhoven, the Netherlands}{M.T.d.Berg@tue.nl}{https://orcid.org/0000-0001-5770-3784}{}
\author{Bart M.P.~Jansen}{Department of Mathematics and Computer Science, TU Eindhoven, the Netherlands}{B.M.P.Jansen@tue.nl}{https://orcid.org/0000-0001-8204-1268}{}
\author{Jeroen S.K.~Lamme}{Department of Mathematics and Computer Science, TU Eindhoven, the Netherlands}{J.S.K.Lamme@tue.nl}{https://orcid.org/0009-0005-8901-2271}{}

\authorrunning{M.~de Berg, B.M.P.~Jansen, and J.S.K.~Lamme}
\Copyright{Mark de Berg and Bart M.P.~Jansen and Jeroen S.K.~Lamme}
\funding{MdB and BJ are supported by the  Dutch Research Council (NWO) through Gravitation-grant NETWORKS-024.002.003.
         Part of the work by MdB was done while he was visiting IISc Bengaluru as Rukmini-Gopalakrishnachar Chair.}

\ccsdesc[100]{Theory of computation~Design and analysis of algorithms}
\keywords{Geometric intersection graphs, pseudodisk graphs, single-source shortest path, diameter}

\EventEditors{}
\EventNoEds{0}
\EventLongTitle{}
\EventShortTitle{}
\EventAcronym{}
\EventYear{}
\EventDate{}
\EventLocation{}
\EventLogo{}
\SeriesVolume{}
\ArticleNo{}

\begin{document}

\maketitle

\begin{abstract}
We study {\sc Single-Source Shortest-Path} (\sssp) on unweighted \emph{intersection graphs} whose
node set corresponds to a set $\D$ of $n$ constant-complexity objects in the plane. 
We prove \sssp can be solved in $O(U(n) \polylog n)$ expected time for any class 
of objects whose union complexity is~$U(n)$. In particular, we obtain an $O(n 2^{\alpha(n)} \log^2 n)$ 
algorithm for arbitrary pseudodisks,
and an $O(\lambda_{s+2}(n) \hspace{0.2mm}2^{O(\log^* n)} \log^2 n)$ algorithm for locally fat objects,
where $s$ is the maximum number of intersections between any pair of boundary arcs of the objects,
and $\lambda_t(n)$ is the the maximum length of a Davenport-Schinzel sequence of order $t$ 
on $n$ symbols. This significantly extends the class of objects
for which \sssp can be solved in $O(n \polylog n)$ time: so far, $O(n \polylog n)$
\sssp algorithms were not even known for pseudodisks that are fat \emph{and} convex \emph{and} similarly-sized.
\medskip

Our second result concerns the \diamp problem, which asks for the maximum distance
between any two nodes in a graph. Even for intersection graphs,
near-quadratic algorithms are already difficult to obtain, and
the $O(n^2 \polylog n)$ running time that follows from our \sssp algorithm
is the first near-quadratic running time for such general classes of intersection graphs.
Obtaining subquadratic running time is even more challenging. We prove that
the diameter of a set of arbitrary pseudodisks can be computed \emph{almost exactly},
namely up to an additive error of~2, in $\tilde{O}(n^{2-1/14})$ expected time.
This generalizes and speeds up a recent algorithm by Chang, Gao, and Le~(SoCG 2024) 
that works for similarly-sized disks (or similarly-sized pseudodisks
that are fat and satisfy a strong monotonicity assumption) and runs in 
$\tilde{O}(n^{2-1/18})$ time. 
To this end, we develop a so-called \emph{star-based $r$-clustering} for 
intersection graphs of pseudodisks---this is similar to an $r$-division for
planar graphs---which is interesting in its own right.
Our star-based $r$-clustering can also be used to
obtain an almost exact distance oracle for pseudodisks
that uses $O(n^{2-1/13})$ storage and has $O(1)$ query time.
\end{abstract}

\section{Introduction}
\label{sec:intro}

\subparagraph{The \sssp problem.}
{\sc Single-Source Shortest-Path} (\sssp) asks to compute
the distance from a given source node to all other nodes in a graph $\G=(V,E)$. 
In an unweighted graph, the problem can be solved in $O(n+m)$
time using breadth-first search, where $n := |V|$ and $m := |E|$.
\sssp is one of the most fundamental algorithmic graph problems---perhaps even \emph{the} most fundamental---and
it shows up as a subproblem in many applications.

We are interested in \sssp on (unweighted) \emph{intersection graphs}.
Let $\D$ be a set of $n$ objects in the plane. Then the intersection graph
$\ig[\D]$ has $\D$ as its node set, and it has an edge between two objects $D,D'\in \D$
iff $D$ intersects~$D'$. If the objects in $\D$ are (unit) disks, then $\ig[\D]$
is called a \emph{(unit-)disk graph}. Similarly, \emph{segment graphs} and
\emph{pseudodisk graphs} are intersection graphs where $\D$ is a set
of line segments or pseudodisks, respectively.
(A \emph{set of pseudodisks} is a set $\D$ of 
topological disks such that any two objects in~$\D$ are either 
disjoint or their boundaries intersect in at most two points~\cite{pseudodisk-union-86}.)
Since an intersection graph can have $\Omega(n^2)$ edges, solving \sssp by
computing its edges explicitly and then running a BFS takes quadratic time in the worst case.
The edges of an intersection graph are defined implicitly, however, so 
there is hope that \sssp on intersection graphs can be solved in subquadratic time.
This is indeed possible for many types of objects.

One approach to obtain a subquadratic \sssp algorithm is to implement BFS using a dynamic
(deletion-only) data structure for intersection queries
on the set~$\D$. Given an object $D\in \D$ at distance~$j$ from the source, we use
the data structure to find the not-yet visited objects that intersect~$D$
(which thus have distance $j+1$ from the source) and delete them from the data structure.
For line segments there exists such a data structure (a partition tree) that has
$\tilde{O}(n^{4/3})$ preprocessing time\footnote{We use the $\tilde{O}$-notation to suppress 
polylogarithmic factors.} and $\tilde{O}(n^{1/3})$ query and update time. 
Thus, \sssp on segment intersection graphs can be solved in $\tilde{O}(n^{4/3})$ time.
This is likely optimal, since Hopcroft's problem---given a set~$L$ of $n$ lines
and a set~$P$ of $n$ points, decide if there is a point-line incidence---can be solved
by running \sssp on $\ig[L\cup P]$ with one of the lines as the source.
A running time of $O(n^{4/3})$ is conjectured to be optimal for Hopcroft's problem,
and Erickson~\cite{DBLP:journals/dcg/Erickson96} actually proved an $\Omega(n^{4/3})$ lower
bound for so-called partitioning algorithms.

For other types of objects, however, it is possible to achieve $O(n\polylog n)$ running times.
For axis-aligned segments, for example, the approach sketched above immediately
gives an $O(n\polylog n)$ \sssp algorithm using standard data structures for orthogonal 
segment-intersection queries. $O(n\polylog n)$ algorithms also exist for disk graphs,
on which much of the research on \sssp in intersection graphs has focused. 
For unit disks, Cabello and Jej{\v c}i{\v c}~\cite{CabelloJ15} and Chan and Skrepetos~\cite{ChanS16}
gave $O(n \log n)$ algorithms, which are optimal in the algebraic decision-tree model. 
The semi-dynamic data structure of Efrat, Itai, and Katz~\cite{EfratIK01} 
also gives an $O(n \log n)$ algorithm.
For arbitrarily-sized disks, Kaplan~\etal~\cite{KaplanMRSS20} and Liu~\cite{doi:10.1137/20M1388371}
presented the first $O(n\polylog n)$ algorithms, obtaining a running time 
of~$O(n\log^4 n)$.  Klost~\cite{Klost23} recently improved this to~$O(n\log^2 n)$.
Even more recently, Brewer and Wang~\cite{DBLP:conf/esa/Brewer025} and
De~Berg and Cabello~\cite{DBLP:conf/esa/BergC25} independently presented
optimal $O(n\log n)$ \sssp algorithms on disk graphs.
The algorithms of Klost and of De~Berg and Cabello both rely on an
efficient intersection-detection data structure for disks, which can be
obtained from an additively weighted Voronoi diagram on the disk centers.
The algorithm of Brewer and Wang also relies on additively weighted
Voronoi diagrams and Delaunay triangulations. 

For pseudodisks, approaches
based on (weighted) Voronoi diagrams will not work---for that, the objects need to
be the same up to scaling and translation. Moreover, data structures with
$O(n\polylog n)$ preprocessing and $O(\polylog n)$ query time for intersection
queries do not exist for pseudodisks. Chang, Gao, and Le~\cite{DBLP:journals/corr/abs-2401-12881}
recently observed that the \sssp algorithm of Chan and Skrepetos~\cite{ChanS19} for unit disks,
can be adapted to pseudodisks that are fat and similarly-sized and have the following strong monotonicity property:
there is a point~$c$ inside each object~$D$ such that the parts 
of~$D$ on either side of the horizontal line through $c$ are $x$-monotone and
the parts of~$D$ on either side of the vertical line through $c$ 
are $y$-monotone.\footnote{This condition is not stated in~\cite{DBLP:journals/corr/abs-2401-12881},
but it is needed for the correctness (J.~Gao, personal communication).}
Observe that convexity of the pseudodisks does not guarantee this property.
Thus, the following question is still wide open: can \sssp on pseudodisk graphs be solved in 
$O(n \polylog n)$ time? More generally, which objects admit $O(n\polylog n)$
\sssp algorithms? Do they need to be fat (except for the simple case of axis-aligned segments)?
Or do they need to be similarly-sized, except when they admit efficient 
intersection-detection data structures? 
We answer these questions: all that is needed is that the
union complexity of the objects is near-linear.

\subparagraph{Our results and techniques on \sssp.}
We define an \emph{object} to be a topological disk whose boundary consists of 
constantly many $x$-monotone arcs, each of which is described by a polynomial
of constant maximum degree (plus its endpoints).
The assumption that the boundary arcs are $x$-monotone is without loss of generality,
since a constant-complexity arc can always be split into $O(1)$ $x$-monotone pieces.
We denote the maximum number of intersection points between any two boundary 
arcs by~$s$; because the boundary arcs have constant complexity, $s$~is also a constant. 

Our \sssp algorithm is efficient when the union complexity of the input set $\D$ is small:
its expected running time is $O(U(n)\polylog n)$, where $U(n)$ denotes the union 
complexity of the objects in~$\D$. More precisely, $U(\cdot)$ is a function such that
the combinatorial complexity of the union of any subset $\D'\subseteq \D$ is at most~$U(|\D'|)$.
It is well known that $U(n)=O(n)$ for pseudodisks~\cite{pseudodisk-union-86}, and that
$U(n)=O(n\hspace{0.2mm} 2^{O(\log^* n)})$ for so-called locally 
fat\footnote{We omit the somewhat technical definition of locally fat objects~\cite{DBLP:journals/dcg/Berg08};
intuitively an object is locally fat if it does not contain any long and thin parts.} 
objects~\cite{AronovBES14}. Thus, for these classes of objects
we obtain an $O(n \polylog n)$ algorithm. 

The following theorem states our result more precisely. The function $\lambda_{t}(n)$ 
in the theorem is the maximum length of a Davenport-Schinzel sequence of order $t$ 
on $n$ symbols~\cite{DBLP:books/daglib/0080837}, which is near-linear for any fixed constant~$t$.
In particular, $\lambda_{t}(n)= O(n\log^* n)$ for any constant~$t$~\cite{Szemeredi-DS};
more precise bounds, which depend on the exact value of~$t$, are known as well~\cite{DBLP:journals/jacm/Pettie15}.
\begin{restatable}{theorem}{maintheorem} \label{thm:sssp}
Let $\D$ be a set of $n$ constant-complexity objects from a family of objects with 
union complexity~$U(n)$, and such that any two boundary arcs intersect at most~$s$ times.
Then we can solve \sssp on the intersection graph~$\ig[\D]$ in 
$O(\lambda_{s+2}(U(n))\log^2 n)$
expected time. In particular, the expected running time is $O(\lambda_{s+2}(n)\log^2 n)$ for
pseudodisks and it is $O(\lambda_{s+2}(n)2^{O(\log^* n)}\log^2 n )$ for locally fat objects.
\end{restatable}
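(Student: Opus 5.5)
Our plan is to implement BFS level by level. For each level we need a batched ``red--blue'' intersection step. Let $F_j$ be the set of objects at distance exactly $j$ from the source, and let $R$ be the set of objects not yet reached. The next frontier $F_{j+1}$ is exactly the set of objects in $R$ that intersect some object of $F_j$. Since every object enters a frontier at most once and is removed from $R$ at most once, it suffices to perform each step in time roughly $\tilde O(\lambda_{s+2}(U(|F_j|)+\ldots))$ charged to $F_j$. We must also keep the work done on $R$ near-linear overall.

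\textbf{Reduction to two tests.} An object $D\in R$ intersects $\bigcup F_j$ in one of two ways. Either (a) some boundary arc of $D$ meets $\bigcup F_j$, or (b) $D$ lies entirely inside some object of $F_j$, or some object of $F_j$ lies inside $D$. Case (b), containment of an object of $F_j$ in $D$, is detected by testing a representative point of each object in $F_j$ against $D$. Containment of $D$ in $\bigcup F_j$ is detected by a point-location query of a representative point of $D$ in $\bigcup F_j$. For this we build the union $\bigcup F_j$, of complexity $U(|F_j|)$, by randomized incremental construction in $O(\lambda_{s+2}(U(|F_j|))\log n)$ expected time, together with a point-location structure.

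\textbf{Handling $R$ with a deletion-friendly structure.} Test (a) is symmetric but is charged to $R$, which is large. To avoid rescanning $R$ at every level, we reverse the roles. We store the unvisited objects $R$ in a semi-dynamic structure: a segment-tree/trapezoidal decomposition over the $x$-projections of the boundary arcs of $R$. The vertical decomposition of $\bigcup F_j$ has $O(U(|F_j|))$ trapezoids. For each trapezoid we query for some arc of $R$ crossing it, report it, and delete that whole object from $R$. Each query either reports an object (charged to that object, which is deleted) or fails (charged to the trapezoid). The vertical-segment edges of trapezoids are handled by a ray-shooting/stabbing variant. Range searching ``does an arc from a set of $x$-monotone algebraic arcs cross a query trapezoid'' in polylog time is not available in general. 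Instead we use the union structure itself: we work with the lower/upper envelopes of arcs of $R$ in canonical $x$-slabs of a segment tree, stored as kinetic-free envelopes of complexity $\lambda_{s+2}$. This gives an $O(\log^2 n)$ query and amortized update, with envelopes rebuilt lazily. This is where the $\lambda_{s+2}(\cdot)\log^2 n$ factor arises.

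\textbf{Main obstacle.} The main obstacle is making this deletion-only ``does any arc of $R$ cross region $\Delta$'' structure efficient. An arc crossing a trapezoid need not appear on an envelope. Our first attempt is the following. Since we only need objects intersecting $\bigcup F_j$, it suffices to check whether some object of $R$ has a boundary point inside $\bigcup F_j$ or vice versa. Symmetrically, we build $\bigcup$ of all of $R$ once, and update it via rebuilding with geometrically shrinking subsets, a logarithmic-method style argument. We then test arcs of $\partial\bigcup F_j$ against it, using the fact that an intersecting pair always has a witness vertex on the boundary of the union of the two sets. By Clarkson--Shor-type sampling, the expected cost over all levels sums to $O(\lambda_{s+2}(U(n))\log^2 n)$. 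We would fall back to the BFS-with-data-structure framework from the introduction if the witness argument fails.
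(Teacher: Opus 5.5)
There is a genuine gap. The proposal never gives a method for the one step that carries the whole difficulty: in each round, find the unvisited objects of $R$ that meet $\bigcup F_j$, paying only for $F_j$ and for the objects actually reached. The issues are the following.

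\begin{enumerate}
\item \emph{Test (b) scans $R$.} As written, it point-locates a representative point of every $D\in R$ in $\bigcup F_j$ and tests representative points of $F_j$ against every $D\in R$. That is $\Omega(|R|)$ work per round, so $\Theta(n^2)$ overall when there are $\Theta(n)$ rounds. The test is actually unnecessary after the first round: an object containing some $D'\in F_j$ also meets the neighbor of $D'$ in $F_{j-1}$, so it was already reached. You do not use this.
\item \emph{Test (a) relies on a structure that is not available.} You concede that the envelope-based segment tree fails, since an arc crossing a trapezoid need not appear on any envelope. More fundamentally, a deletion-only structure answering ``does some arc of $R$ cross this trapezoid'' in polylogarithmic time with near-linear preprocessing would solve Hopcroft's problem in near-linear time (take segment-like arcs and point-like trapezoids). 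As the paper notes, such intersection-detection structures are not available even for pseudodisks.
\item \emph{The fallback does not detect hidden objects.} Testing $\partial\bigcup F_j$ against $\bigcup R$ only reveals objects that appear on $\partial\bigcup R$. An object of $R$ meeting $\bigcup F_j$ can be completely covered by other objects of $R$. Your witness-vertex fact holds for a single pair $\{D,D'\}$, not for the unions of the two sets.
\item \emph{Peeling to expose hidden objects can be quadratic.} Take a small source disk at the common center of nested disks $B_1\subset\cdots\subset B_m$ whose boundaries all cross $m$ disjoint thin rectangles. At every step only the outermost remaining disk is on $\partial\bigcup R$, and each deletion creates $\Theta(m)$ new union vertices. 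The logarithmic method does not help, because a union does not answer this reporting problem.
\item \emph{The Clarkson--Shor appeal has nothing to act on.} Nothing in your scheme is random: the BFS levels, and hence the sets whose unions you rebuild, are fixed by the input.
\item \emph{The final fallback is too slow.} The query-structure BFS from the introduction gives bounds like $\tilde O(n^{4/3})$, not $O(\lambda_{s+2}(U(n))\log^2 n)$.
\end{enumerate}

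The missing idea is a precomputed random subdivision that does two things: it bounds how often each object can be a candidate, and it lets candidates be tested without any intersection-detection structure. The paper builds a random stacking. Drawing the objects on top of each other in random order subdivides $\bigcup\D$ into faces, each contained in an owner object. The stacking has expected complexity $O(U(n))$, and the conflict lists $K_f$ (objects whose boundary meets $f$) have expected total size $O(U(n)\log n)$. These bounds combine the probability, over the random order, that a vertex of level $\ell$ survives with the Clarkson--Shor bound on the number of vertices of level at most $\ell$.

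Since $f\subseteq\owner(f)$, a face meets $\bigcup L_{j-1}$ in at most three rounds. The candidates are $K_f\setminus L_{<j}$ over the faces $f$ meeting $\bigcup L_{j-1}$, found by a DFS in the dual graph seeded by the conflict lists of $L_{j-1}$. This keeps the total candidate work at $O(U(n)\log n)$.

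The paper then simplifies the stacking so that faces are simply connected, and shows that the regions of $f$ disjoint from $\bigcup L_{j-1}$ are also simply connected. This lets both the conflict lists and the per-face tests be computed with a variant of the Har-Peled--Sharir red--blue intersection algorithm that only needs the red arcs to form a connected subdivision. That is where the $\lambda_{s+2}(\cdot)\log^2 n$ bound comes from.
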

At a very high level, our algorithm follows the same approach as used by Klost~\cite{Klost23}
and by De~Berg and Cabello~\cite{DBLP:conf/esa/BergC25}. Let $\dsource$ be the source object,
define $L_j\subset \D$ to be the set of objects at distance~$j$ from~$\dsource$, and define
$\union(L_j)$ to be the union of the objects in~$L_j$.
Then, instead of querying with each object~$D\in L_j$ to discover the objects in~$L_{j+1}$, 
we do the following: we identify a set $\C_{j+1}\supseteq L_{j+1}$ of \emph{candidate objects},
and we determine which of them intersect $\union(L_j)$ (and are thus in~$L_{j+1}$). 
The main challenges are 
(i)~to ensure that each object is a candidate only a few times, and
(ii)~to determine efficiently which of the candidates in $\C_{j+1}$ intersect~$\union(L_j)$.
Klost~\cite{Klost23}, and De~Berg and Cabello~\cite{DBLP:conf/esa/BergC25}, heavily use the fatness of the objects
to overcome~(i), and they use Voronoi diagrams (or, for fat triangles, intersection-detection data structures)
to overcome~(ii). Hence, we must develop completely different tools. 

Our main tool, which is useful to overcome both challenges, is what we call a \emph{random stacking}.
This is a subdivision of $\union(\D)$ obtained by adding the objects ``on top of each other''
in random order; see \autoref{fig:stacking}. The important properties of a random stacking
are that the expected number of faces of the subdivision is~$O(U(n))$, that the expected
number of crossings between object boundaries and face boundaries is~$O(U(n)\log n)$,
and that each face~$f$ is fully contained in at least one of the objects. 
Our \sssp algorithm will be guided by the stacking: to compute $L_{j+1}$ we determine
the faces of the stacking that intersect $\union(L_{j})$, we collect the objects
whose boundaries intersect the faces---these form the candidate set~$\C_{j+1}$---and 
then we determine which candidates intersect~$\union(L_{j})$.
The fact that each face is fully contained in at least one object helps us
to guarantee property~(i) above. Turning these ideas into an $O(n\polylog n)$ algorithm
is not easy, however, and requires several other ideas. For example,
we do not know how to compute the intersections between the object boundaries and the face boundaries
of a random stacking fast enough. Hence, we have to simplify the stacking
so that its faces do not have holes. Second, determining which candidates
intersect~$\union(L_{j})$ is difficult, because $\union(L_{j})$ need not be connected and so
$\Reals^2\setminus \union(L_{j})$ can have holes.
We therefore do this separately for each face~$f$, which helps
because we can prove that each component of $f\setminus \union(L_{j})$ is simply connected. 

\subparagraph{The \diamp problem.}
The \diamp problem, which asks for the maximum distance between any two nodes in a graph,
is another fundamental algorithmic problem on graphs. In general graphs, it can be solved
in $O(nm)$ time by running \sssp $n$ times, once with each node as the source. One can also use fast matrix multiplication
to obtain an $O(n^{\omega})$ time algorithm, where $\omega \approx 2.37$ is 
the matrix-multiplication exponent, which is faster for
dense graphs. A major open problem has been whether there is a truly subquadratic algorithm for 
sparse graphs. Cabello~\cite{DBLP:journals/talg/Cabello19} achieved a breakthrough in this area 
by showing that \diamp on planar graphs can be solved in $\tilde{O}(n^{11/6})$ expected time.
(His algorithm even works for weighted graphs.) This was improved to $\tilde{O}(n^{5/3})$
by Gawrychowski~\etal~\cite{DBLP:journals/siamcomp/GawrychowskiKMS21}.

For intersection graphs the situation is less rosy: even deciding
whether the diameter of  an intersection graph of unit-length segments
is at most~3 cannot be done in $O(n^{2-\eps})$ time for any $\eps>0$, 
assuming the Strong Exponential-Time Hypothesis~\cite{DBLP:conf/compgeom/BringmannKKNP22}.
The same holds for unit-size equilateral triangles.
For unit disks, Chan~\etal~\cite{DBLP:conf/focs/ChanCGKLZ25} recently presented
an $O(n^{2-1/18+o(1)})$ algorithm for \diamp. (For arbitrarily-sized axis-aligned squares
and axis-aligned unit squares, they obtain $\tilde{O}(n^{2-1/12})$ and $O(n^{2-1/8+o(1)})$ 
running time, respectively.) It is still open whether
a subquadratic algorithm for \diamp exists for arbitrarily-sized disks.
In fact, even near-quadratic algorithms are rare: so far they only exist
for classes of objects for which \sssp admits a near-linear algorithm.
Thus, our \sssp algorithm for objects with near-linear union size provides
the first near-quadratic exact algorithm for \sssp for pseudodisks
and for general fat objects.

The difficulty of \diamp led Chang, Gao and Le~\cite{Chang0024}
to study the problem of computing the diameter \emph{almost exactly}, namely,
up to an additive constant. They presented a subquadratic algorithm for 
similarly-sized disks (and for similarly-sized
fat pseudodisks satisfying a strong monotonicity condition; see above). Their algorithm runs in~$\tilde{O}(n^{2-1/18})$ time
and it computes the diameter up to an additive error of~2. In the arXiv version
of the paper~\cite{DBLP:journals/corr/abs-2401-12881}, they improve the error to~1.
To obtain this result, they first prove that the so-called \emph{distance VC-dimension} 
and the \emph{distance-encoding VC-dimension}
of a pseudodisk graph are at most~4; see~\cite[Theorems 12 and 14]{Chang0024}. Their proofs are purely topological,
so they do not require the pseudodisks to be fat, convex, or similarly sized.
They then describe a framework to compute the diameter almost exactly,
which requires two ingredients: an algorithm that computes a so-called 
\emph{clique-based $r$-clustering} for the given graph class---this is a generalization
of \emph{$r$-divisions}~\cite{DBLP:journals/siamcomp/Frederickson87} for planar graphs---and an $O(n\polylog n)$ \sssp algorithm.
They present an algorithm that computes a clique-based clustering
for similarly-sized fat pseudodisks, and an \sssp algorithm for similarly-sized fat
objects satisfying a monotonicity condition.

\subparagraph{Our results on \diamp.}
To obtain a \diamp algorithm for pseudodisks, we need to supply two
ingredients: a suitable $r$-clustering and an \sssp algorithm. The latter is the
first contribution of our paper, as discussed above. For the former we observe
that the subgraphs that make up the boundary in the $r$-clustering need not be
cliques, as was the case in~\cite{Chang0024}: it 
suffices that each of these subgraphs has a representative node that is a neighbor of
any other node in the subgraph. Thus, instead of working with a clique-based 
$r$-clustering, we may also work with a \emph{star-based $r$-clustering}. 
We show how to compute a star-based $r$-clustering for a
pseudodisk graph---in fact, our algorithm works for any class of objects with
near-linear union complexity---with similar quality guarantees as the clique-based
$r$-clustering of Chang, Gao and Le~\cite{Chang0024}; interestingly, 
our algorithm for computing an $r$-clustering also uses random stackings. 
Since $r$-divisions for planar graphs have found many applications,
we expect that our star-based $r$-clusterings may find other applications as well.

Our star-based $r$-clustering has a stronger bound on the number
of stars that comprise the boundary of a cluster, as compared to the
number of cliques that comprise the boundary of a cluster in the clique-based
clustering of~\cite{Chang0024}. This allows us to not only obtain an
algorithm for a much larger class of objects, but it also allows us 
to improve the $\tilde{O}(n^{2-1/18})$ running time,
as stated in the following theorem.
\begin{restatable}{theorem}{diamtheorem} \label{thm:diam}
Let $\D$ be a set of $n$ constant-complexity pseudodisks.
Then we can solve \diamp on the intersection graph~$\ig[\D]$ in $\tilde{O}(n^{2-1/14})$ 
expected time, up to an additive error of~2. 
\end{restatable}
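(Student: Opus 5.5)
We plan to instantiate the framework of Chang, Gao, and Le~\cite{Chang0024} with two ingredients: the \sssp algorithm of Theorem~\ref{thm:sssp} and a new star-based $r$-clustering. Their framework builds on the bound of~4 on the distance VC-dimension and the distance-encoding VC-dimension of pseudodisk graphs~\cite[Theorems 12 and 14]{Chang0024}. Those proofs are purely topological, so they apply to arbitrary pseudodisks and can be used as black boxes.

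First we observe that the framework only needs the following property of each boundary piece of a cluster: it contains a representative node adjacent to every other node of the piece. Any two nodes in such a star are then at distance at most~2. For every node~$v$, the distances from $v$ to the piece's nodes therefore differ from $d(v,\rep)$ by at most~1. This slack is what yields the additive error of~2 instead of~1.

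Concretely, the scheme is as follows. Compute a star-based $r$-clustering in which the clusters have total size $\tilde{O}(n)$, each cluster has $O(r)$ nodes, and the boundary of each cluster is covered by few stars; we aim for $O(\sqrt{r})$ stars per cluster, or $\tilde{O}(n/\sqrt{r})$ stars overall. We build this clustering from a random stacking. We sample a subset of objects, take the faces of its stacking, and group them into clusters by a planar-separator style division of the dual of the stacking's arrangement; its complexity is $O(U(n))=O(n)$ for pseudodisks. The boundary objects of a cluster are those crossing the separator faces. Each face lies in some object, and that object serves as the center of a star containing all objects meeting the face.

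Next we run \sssp from every star representative. This takes $\tilde{O}(n)$ time per source, so $\tilde{O}(n^2/\sqrt{r})$ in total. For each node~$v$ and cluster~$C$ we form the pattern vector of offsets $d(v,\rep_i)-d(v,\rep_1)$ over the $k=O(\sqrt{r})$ stars bounding~$C$. The VC-dimension bound of~4 implies there are only $O(k^{4})$ distinct patterns per cluster. For each pattern we compute, by a cluster-internal computation, the farthest node inside~$C$ to within additive error~2. This costs $O(k^4\cdot r\polylog)$ per cluster and is then looked up for every $v$ in $\tilde{O}(k)$ time.

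Balancing gives the total
\[
  \tilde{O}\bigl(n^2/\sqrt{r} + (n/r)\,r^{2}\,r + n\cdot (n/r)\sqrt{r}\bigr).
\]
With the exponents above this optimizes to roughly $n^{2-1/14}$. The exact exponent depends on the boundary bound; the improvement over $n^{2-1/18}$ comes from using $O(\sqrt{r})$ stars rather than the larger clique count.

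The main obstacle is constructing the star-based $r$-clustering with the stated guarantees for arbitrary pseudodisks, especially proving that each cluster is bounded by only $O(\sqrt{r})$ stars. To attempt this, we would take a random stacking of a sample of size about $n/r$ and use the $O(U(n)\log n)$ bound on boundary--face crossings to control cluster sizes. We would then apply a separator on the planar dual and charge the boundary stars of a cluster to separator faces.
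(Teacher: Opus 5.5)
Your skeleton matches the framework: \sssp from all star centers, patterns with respect to $S(C)$, and distance-encoding VC-dimension~4. The quantitative core, however, does not hold up. The pattern-counting lemma of Chang, Gao, and Le gives $O((k\Delta)^4)$ patterns, not $O(k^4)$, where $\Delta$ bounds $d(q_0,q_i)$. This is precisely why clusters must induce connected subgraphs: then $\Delta\le|C|\le r$. With $k=|S(C)|=O(\sqrt r)$ there are $O((r\sqrt r)^4)=O(r^6)$ patterns per cluster. Evaluating $d(P,u)=\min_i\{d(u,q_i)+P[i]\}$ costs $\Theta(|S(C)|)$ per pair $(P,u)$, so this phase costs $\sum_C r^6\cdot r\cdot|S(C)|=O(nr^{13/2})$. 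You need the global bound $\sum_C|S(C)|=O(n/\sqrt r)$ here and for the number of \sssp sources, \emph{and} the local bound $O(\sqrt r)$ for the pattern count; it is not ``or''. Balancing against $n^2/\sqrt r$ gives $r=n^{1/7}$ and $\tilde O(n^{2-1/14})$. Your displayed sum is $\tilde O(n^2/\sqrt r+nr^2)$, which balances at $r=n^{2/5}$ to $n^{2-1/5}$. That would beat the theorem, which signals the missing factors; it does not derive $n^{2-1/14}$. The number of clusters is also $O(n/\sqrt r)$, not $O(n/r)$, because splitting regions into connected components is controlled only by showing that each component contains a star center. Finally, for $v\in C$ the pattern estimate can be off by more than~2, since a shortest path staying inside $C$ need not pass near a star center. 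You must also compute all distances in $\ig[C]$ by running \sssp $|C|$ times on $\ig[C]$ (total $\tilde O(nr^{3/2})$) and take the minimum of the two estimates.

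The clustering itself is left open, and the sampling route you sketch does not deliver it. Objects disjoint from the union of a sample of size $n/r$ are covered by no face. Nothing you state bounds the number of objects meeting a group of faces by $O(r)$, so $|C|\le r$ is not guaranteed. Also, ``the boundary is covered by few stars'' is not what correctness needs. You need that every object outside $C$ adjacent to an \emph{interior} object of $C$ is also adjacent to a star center of $C$; this is what implies that every path leaving $C$ passes through or next to $S(C)$.

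The paper obtains this without sampling:
\begin{enumerate}
\item Take the random stacking of all of $\D$, which has complexity $O(n)$.
\item Add to its dual one node per object that owns no face, attached to the face containing that object's representative point.
\item Apply Frederickson's $r$-division to this $O(n)$-node planar graph. This directly gives local boundary complexity $O(\sqrt r)$ and global boundary complexity $O(n/\sqrt r)$.
\item Let each region $C^*$ yield the at most $r$ objects represented by its nodes, declare the objects represented by boundary nodes of $C^*$ to be star centers, and split into connected components.
\end{enumerate}
The separation property follows by tracing a curve $\gamma\subset D_1\cup D_2$ from a face representing $D_1$ to one representing $D_2$, using that owners cover their faces. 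Let $g$ be the node after which the dual path of $\gamma$ first leaves $C^*$. Then $g$ is a boundary node, and $\owner(g)$ lies in the same component as $D_1$, since consecutive faces have intersecting owners. Finally, $g$ meets $D_1$, which contradicts that $D_1$ is interior, or it meets $D_2$, which is what you need.
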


\subparagraph{Distance oracles.}
A distance oracle for a graph $G=(V,E)$ is a data structure that can 
report the distance between any two query nodes~$u,v \in V$. 
For arbitrary graphs, exact distance oracles require $\Omega(n^2)$ bits of storage in the worst case,
irrespective of query time~\cite{DBLP:journals/jacm/ThorupZ05}.
Hence, research on distance oracles focused on special graph classes such as planar graphs, 
where the goal is to obtain sublinear query time with subquadratic storage. 
The survey by Sommer~\cite{DBLP:journals/csur/Sommer14} gives an overview of the work done until 2014,
and the paper by Charalampopoulos~\etal~\cite{DBLP:journals/jacm/CharalampopoulosGLMPWW23} and the references therein 
give an overview of more recent work.

For geometric intersection graphs only a few results are known~\cite{DBLP:journals/algorithmica/AronovBT25,DBLP:conf/focs/ChanCGKLZ25,chan-skrepetos,DBLP:journals/corr/abs-2401-12881,DBLP:conf/isaac/BergJL25,gao-zhang}. 
Both Gao and Zhang \cite{gao-zhang} and Chan and Skrepetos \cite{chan-skrepetos} show that an $(1+\eps)$-approximate distance oracle 
with $O(n\log n)$ storage\footnote{Here (and in the results discussed later) the storage is
not measured in bits, but in the real RAM model.}
and $O(1)$ query time exists for unit-disk graphs;
their oracle also works in the weighted setting.
Chang, Gao and Le~\cite{DBLP:journals/corr/abs-2401-12881} show that their technique
for computing the diameter almost exactly, can also be used to obtain a distance oracle
for similarly-sized fat pseudodisks
that uses $O(n^{2-1/18})$ storage, has query time~$O(1)$, and an additive error of~1.
The paper by Chan~\etal~\cite{DBLP:conf/focs/ChanCGKLZ25} that provides the first
exact subquadratic \diamp algorithm  for unit disks also provides the first
exact distance oracles for unit disks; it uses $O(n^{2-1/20+o(1)})$ storage
and has $O(\log n)$ query time. (For axis-aligned squares they obtain better bounds.)
We show that our star-based $r$-clustering can be used to extend the
distance oracle of Chang, Gao and Le to arbitrary pseudodisks, and improve the storage bound
at the same time.
Our oracle uses $O(n^{2-1/13})$ storage---note that this bound is smaller than
the computation time of our \diamp algorithm---and has $O(1)$ query time,
and it can report distances with an additive error to~2.

\subparagraph{Preliminaries.}
To focus on the algorithmic and combinatorial aspects of the problem,
and not on algebraic issues, we work in the real-RAM model of computation and
assume that all basic operations on boundary arcs---computing
the intersection points between two arcs, testing if a point lies above
an arc, et cetera---can be carried out in~$O(1)$ time. 
This is common practice in computational geometry. 

We denote the boundary and the interior of a compact set~$R$---this can be an object, 
a union of objects, or a face in a subdivision---by $\bd R$ and  $\intr(R)$, respectively. 
For a subdivision $\cS$, we use $|\cS|$ 
to denote its (combinatorial) complexity, that is, its total number of vertices,
edges (which can be constant-complexity arcs), and faces. Similarly, 
$|f|$ denotes the complexity of a face~$f$. With a slight abuse of nation,
we sometimes write $f\in\cS$ when $f$ is a face of~$\cS$.

To simplify the presentation, we assume that the input set $\D$ of objects is in 
general position.
In particular, no three object boundaries pass through
a common point and all intersections between object boundaries are proper intersections.
(The latter assumption means that for any two distinct objects $D,D'\in \D$, the
set $\bd D \cap \bd D'$ consists of finitely many points and that there are no tangencies
between object boundaries.)

\section{Single-source shortest paths}
\label{sec:sssp}
Before we present our \sssp algorithm, we first develop the main technical tool on which
the algorithm is based: random stackings.

\subsection{Random stackings}

\subparagraph{Stackings.}
Let $\pi$ be a permutation of~$\{1,\ldots,n\}$, which defines an ordering of the objects 
in our input set~$\D$.
(See the introduction for the definition of ``object''.)
Imagine drawing the objects one by one in the order given by~$\pi$, such that each object 
is drawn on top of the previous ones, that is, each object
hides the parts of the objects already drawn that are covered by it. We call the resulting
subdivision of $\union(\D)$ a 
\emph{stacking}\footnote{Note that 
$\stackpi(\D)$ is a subdivision of $\union(\D)$, not of~$\Reals^2$.}  
of the set~$\D$,
and we denote it by~$\stackpi(\D)$; see \autoref{fig:stacking}(i).
\begin{figure}
    \centering
    \includegraphics{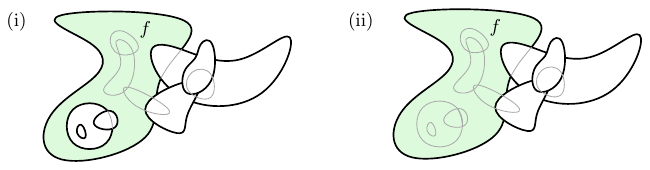}
    \caption{(i) A stacking $\stackpi(\D)$ of eleven objects. The face boundaries are shown thick and in black;
                 the parts of the object boundaries that are hidden are shown thin and in gray.
    (ii) The simplified stacking; the green face $f$ has absorbed the faces
                 that were fully contained in its hole.}
    \label{fig:stacking}
\end{figure}
More formally, let $D_1,\ldots,D_n$ be the ordering on $\D$ as specified by~$\pi$, 
and define $\D_{>i} := \{ D_{i+1},\ldots, D_n \}$. Then the interiors of the faces of $\stackpi(\D)$
are the connected components of the sets $\intr(D_{i}) \setminus \union(\D_{>i})$,
over all $1\leq i\leq n$. Similarly, the edges of $\stackpi(\D)$ contributed by $D_i$
are the connected components of the sets 
$\{ e \setminus \union(\D_{>i}) : \mbox{ $e$ is a boundary arc of $D_i$} \}$,
and the vertices of $\stackpi(\D)$ are the points in the 
sets $\bd D_{i} \cap \bd(\union(\D_{>i} ))$ plus the vertices
of~$D_i$ that are not hidden by~$\union(\D_{>i})$, over all $1\leq i\leq n$.
A straightforward but important property of $\stackpi(\D)$ is the following:
for each face $f$ in $\stackpi(\D)$ there is an object~$D_i\in \D$ such that $f\subset D_i$.
We call this object the  \emph{owner} of~$f$ and we denote it by~$\owner(f)$.
If there are multiple such objects, we can assign any one of them to be the owner. 
(Our algorithm will pick $\owner(f)$ as the last object in the order specified by~$\pi$
that contains~$f$.) Note that an object can own multiple faces.
\medskip

To be able to work efficiently with~$\stackpi(\D)$, 
we want its complexity $|\stackpi(\D)|$ to be small. In the worst case, $|\stackpi(\D)|=\Theta(n^2)$, even when 
$\D$ is a set of disks, but when we generate $\stackpi(\D)$ from a random 
permutation---we call this a \emph{random stacking}---then $|\stackpi(\D)|$ is expected to be small
when the objects in~$\D$ have small union complexity.
\begin{lemma}\label{lem:size-of-stacking}
Let $\D$ be a set of $n$ objects from a family of objects with union complexity~$U(n)$.
Let $\pi$ be a random permutation. Then $\Ex{|\stackpi(\D)|}=O(U(n))$.
\end{lemma}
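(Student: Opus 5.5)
Our approach is a backwards-analysis argument that charges each vertex of $\stackpi(\D)$ to a vertex of the union of a suffix of the ordering. By Euler's formula for planar subdivisions, it suffices to bound the expected number of vertices of $\stackpi(\D)$. Edges and faces are linear in the number of vertices, plus an $O(n)$ term for faces or edges without vertices, such as an object whose boundary is entirely visible and disjoint from all later objects. Each object's boundary contributes only $O(1)$ arc endpoints, so visible object vertices contribute $O(n)=O(U(n))$ in total; here we assume, as usual, that $U(n)=\Omega(n)$.

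The interesting vertices are intersection points $p\in\bd D_i\cap\bd D_k$ with $i<k$ that appear in $\stackpi(\D)$. Such a point is a vertex only if $p$ is not covered by $\union(\D_{>i})$ other than through its boundary. Hence $p$ lies on $\bd(\union(\D_{\geq i}))$. That is, $p$ is a vertex of the union of the suffix $\D_{\geq i}=\{D_i,\ldots,D_n\}$, and one of the two objects defining it is $D_i$, the first (lowest) object of that suffix. Since $D_k$ hides nothing of $D_i$ at $p$ only if $p$ lies on the boundary of $D_k$, no other configuration arises.

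The key step is then the standard backwards analysis. Fix $m=n-i+1$. The set $\D_{\geq i}$ is a uniformly random subset of size $m$, and conditioned on that set, $D_i$ is a uniformly random element of it. The union of $\D_{\geq i}$ has at most $U(m)$ vertices, and each union vertex is defined by two objects. Therefore the expected number of union vertices having $D_i$ as a defining object is at most $2U(m)/m$.

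Summing over $i$ gives
\[
\Ex{|\stackpi(\D)|} \;=\; O(n) + \sum_{m=1}^{n} O\!\left(\frac{U(m)}{m}\right).
\]
The main obstacle is making this sum $O(U(n))$ rather than $O(U(n)\log n)$. For that we use that $U(m)/m$ is nondecreasing, which holds for all reasonable union-complexity bounds, e.g., $U(m)=O(m)$ for pseudodisks or $m\,2^{O(\log^* m)}$ for locally fat objects. If this monotonicity is granted, the bound $U(m)/m\le U(n)/n$ only yields $\sum_m U(m)/m\le U(n)$ when $U$ is essentially linear, so some care is needed in general.

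I would therefore refine the charging so that the sum telescopes. Union vertices of $\D_{\geq i}$ that involve $D_i$ are exactly those that disappear when $D_i$ is removed. The count of vertices disappearing is at most the count of vertices that are new in $\union(\D_{\ge i})$ relative to $\union(\D_{>i})$. Vertices of $\union(\D_{>i})$ hidden by $D_i$ also disappear, but those are not stacking vertices. Bounding only the created vertices, the expected number is at most $\Ex{|\union(\D_{\geq i})|}-\Ex{|\union(\D_{>i})|}$ plus the expected number of vertices destroyed at step $i$. Each vertex is destroyed at most once after being created, so the total over $i$ telescopes to at most $2\max_m U(m)=O(U(n))$. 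I expect that justifying this telescoping accounting carefully, to avoid the extra $\log n$, is the main technical point.
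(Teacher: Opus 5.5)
Your suffix-based backwards analysis is correct, and it already proves the lemma. The obstacle you describe afterwards does not exist, and the telescoping fix you propose for it is invalid. Under the assumption you state, $U(m)/m\leq U(n)/n$ for all $m\leq n$, so
\[
\sum_{m=1}^{n}\frac{2U(m)}{m}\;\leq\;\sum_{m=1}^{n}\frac{2U(n)}{n}\;=\;2U(n).
\]
This does not require $U$ to be essentially linear. An extra $\log n$ factor only appears if you use the weaker estimate $U(m)\leq U(n)$. You can stop right after your displayed sum.

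Delete the last paragraph, because its key step fails. Let $c_i$ and $d_i$ be the numbers of union vertices created and destroyed when $D_i$ is added to $\union(\D_{>i})$. Then $\sum_i c_i=|\union(\D)|+\sum_i d_i$. The fact that each vertex is destroyed at most once after being created only gives $\sum_i d_i\leq\sum_i c_i$. But $\sum_i c_i$ is, up to the $O(n)$ object vertices, exactly the number of vertices of $\stackpi(\D)$, so the argument is circular.

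It also cannot be repaired. It never uses that $\pi$ is random, so it would give $|\stackpi(\D)|=O(U(n))$ for \emph{every} permutation. Yet a stacking of disks can have $\Theta(n^2)$ complexity. For example, take $n/2$ huge disks whose boundaries look like nearly horizontal lines crossing a row of $n/2$ small disjoint disks. Put the huge disks first in the order, each later one lying lower, and the small disks last.

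Compared to the paper, you take a different route. The paper fixes an intersection point $p$ and computes the probability $2/((\lev(p)+1)(\lev(p)+2))$ that both defining objects come after every object containing $p$. It bounds the number of intersection points of level at most $\ell$ by $O(U(n)(\ell+1))$ via Clarkson--Shor, and then sums. You instead observe that the stacking vertices on $\bd D_i$ created by later objects are vertices of $\union(\D_{\geq i})$ incident to $D_i$. You then use that $D_i$ is a uniform element of a uniform random $m$-subset.

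Your argument is essentially the Clarkson--Shor sampling argument done inline, so it is more self-contained. Both proofs need the same regularity of $U$: the paper uses it implicitly to turn the Clarkson--Shor bound $O(\ell^2 U(n/\ell))$ into $O(\ell\,U(n))$. The paper's per-point, level-based formulation has one advantage: it carries over verbatim to the next lemma, where the same reasoning with probability $1/(\lev(p)+1)$ bounds the total conflict-list size.
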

\begin{proof}
Let $V$ be the set of vertices of $\stackpi(\D)$ that are an intersection
between two object boundaries. The other vertices of $\stackpi(\D)$ are object vertices, 
of which there are~$O(n)$ since the objects have constant complexity.
Since $|\stackpi(\D)|=O(|V|+n)$, it suffices to bound $\Ex{|V|}$.
Let $D_1,\ldots,D_n$ be the ordering of the objects in~$\D$, as specified by~$\pi$.
Let $P:= \{ \bd D_i \cap \bd D_j : 1\leq i<j\leq n\}$ be the set of all intersection points of the 
boundaries of the objects in~$\D$. (Because the object boundaries have
constant complexity, and we assume general position, $|P|=O(n^2)$.)
Clearly, $V\subseteq P$. Let $\D(p) := \{ D_i\in \D : p\in \intr(D_i)\}$ denote the objects
containing the point $p \in P$ in their interior, and define $\lev(p) := |\D(p)|$ to be the level of~$p$.
Then a point $p\in \bd D_i \cap \bd D_j$ shows up as a vertex of $\stackpi(\D)$ iff
both $D_i$ and $D_j$ come after any object $D_k\in \D(p)$ in the order specified by~$\pi$.
This implies that 
\[
\Pr{p\in V} = \tfrac{2}{(\lev(p)+2)(\lev(p)+1)}.
\]
Hence, if we define $P_{\ell} := \{ p\in P : \lev(p) =\ell\}$, then
\[
\Ex{|V|} = \sum_{p\in P} \Pr{p\in V} 
         = \sum_{p\in P} \tfrac{2}{(\lev(p)+2) (\lev(p)+1)}
         = \sum_{\ell=0}^{n-2}|P_{\ell}| \cdot \tfrac{2}{(\ell+2) (\ell+1)}.
\]
Observe that for a given number of intersection points, this expression 
is maximized when the levels of the point in $P$ are as low as possible.
Moreover, for any $0\leq \ell^*\leq n-2$ we have
\[
\sum_{0\leq \ell\leq \ell^*} |P_{\ell}| = O(U(n) \cdot (\ell^*+1))
\]
by the Clarkson-Shor technique~\cite{DBLP:journals/dcg/ClarksonS89}; 
see also \cite[Theorem~1.1]{DBLP:journals/cpc/Sharir03}.
This implies that to bound  $\Ex{|V|}$, we may assume that $|P_{\ell}|=\Theta(U(n))$ for all~$\ell$.
This gives
\[
\Ex{|V|} = \sum_{\ell=0}^{n-2}|P_{\ell}| \cdot \tfrac{2}{(\ell+2) (\ell+1)}
         =  O\left( U(n) \cdot\sum_{\ell=0}^{n-2}\tfrac{2}{(\ell+2) (\ell+1)} \right) = O(U(n)),
\]
which finishes the proof.
\end{proof}

\subparagraph{Conflict lists.}
Every face~$f$ of $\stackpi(\D)$ is fully contained in at least one object from~$\D$, but
there can also be objects in $\D$ whose boundary intersects~$f$. We say that such objects
are \emph{in conflict with} $f$ and we define
\[
K_f := \{ D\in \D : \bd D \cap f \neq \emptyset \}
\]
to be the set of objects in conflict with~$f$.
Note that $K_f$ not only contains the objects whose boundary intersects $\intr(f)$,
but (perhaps counter-intuitively) also those that contribute an edge to~$\bd f$.
Conflict lists\footnote{The term \emph{conflict list}
is also used in the analysis of randomized incremental construction (RIC) algorithms. 
However, our stackings depend on the order in which the objects are added and our faces
have non-constant complexity, so we cannot use standard results on RIC.} 
play an important role in our \sssp algorithm, and we need their total size to be small. 
We also need a bound on the number
of proper intersections\footnote{A proper intersection
between an object boundary $\bd D_i$ and a face boundary~$\bd f$,
is a point $p\in \bd D_i\cap \bd f$ such that $\bd D_i$ enters the interior of~$f$ at point~$p$.} 
between object boundaries and face boundaries at various points in our analysis.
Fortunately, the next lemma shows that the number of such intersections---and, 
hence, the total size of the conflict lists---is expected to be small.
\begin{lemma} \label{lem:size-of-conflict-sets}
Let $\D$ be a set of $n$ objects from a family of objects with union complexity~$U(n)$.
Let $\pi$ be a random permutation and let $F$ be the set of faces of $\stackpi(\D)$.
For a face $f\in F$, define
$
K^*_f  :=  \{ p:  \mbox{$p$ is a proper intersection between $\bd D_i$ and $\bd f$ for some $D_i\in D$}   \}.
$
Then $\Ex{\sum_{f\in F} |K^*_f|}=O(U(n)\log n)$ and $\Ex{\sum_{f\in F} |K_f|}=O(U(n)\log n)$.
\end{lemma}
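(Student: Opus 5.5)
The plan is to charge every proper intersection to a point of the set $P$ of pairwise boundary intersections, exactly as in the proof of \autoref{lem:size-of-stacking}. The only change is that the charged event now has probability about $1/(\lev(p)+1)$ rather than about $1/(\lev(p)+1)^2$. Summing this over levels with the Clarkson--Shor bound then gives a harmonic sum, which is where the $\log n$ comes from.

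\textbf{Bounding the proper intersections.} Let $p\in K^*_f$ be a proper intersection between $\bd D_i$ and $\bd f$. Since $p\in\bd f$, it lies on an edge (or at a vertex) of $\stackpi(\D)$. By general position we may assume $p$ lies in the relative interior of an edge, contributed by some object $D_j$ with $j\neq i$. Then $p\in \bd D_i\cap\bd D_j$, so $p\in P$.

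For the edge of $D_j$ through $p$ to be visible in the stacking, $D_j$ must come after every object of $\D(p)$ in the order~$\pi$. For fixed $p$ and fixed $D_j$, this happens with probability $1/(\lev(p)+1)$.

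Each point $p\in P$ lies on exactly two object boundaries, so it can play the role of the edge point in at most two ways. For each of these, $p$ lies on the boundary of at most two faces (the faces on either side of the visible edge). Hence each $p\in P$ is counted $O(1)$ times in $\sum_f|K^*_f|$, and
\[
\Ex{\textstyle\sum_{f\in F}|K^*_f|} = O\Bigl(\textstyle\sum_{\ell=0}^{n-2}|P_\ell|\cdot\tfrac{1}{\ell+1}\Bigr).
\]

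I then reuse the argument from \autoref{lem:size-of-stacking}. The Clarkson--Shor bound $\sum_{\ell\leq\ell^*}|P_\ell|=O(U(n)(\ell^*+1))$ holds, and the weights $1/(\ell+1)$ are decreasing. So the sum is maximized by taking $|P_\ell|=\Theta(U(n))$ for every~$\ell$, which gives
\[
O\Bigl(U(n)\textstyle\sum_{\ell}\tfrac{1}{\ell+1}\Bigr)=O(U(n)\log n).
\]
Formally, the decreasing-weights step is an Abel summation.

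\textbf{Bounding the conflict lists.} Take $D\in K_f$. Using general position (no tangencies and no boundary passing through a vertex of the stacking), exactly one of the following holds.
\begin{itemize}
\item $\bd D$ crosses $\bd f$ properly. Then $D$ is charged to one of its points in $K^*_f$.
\item $D$ contributes an edge to $\bd f$. The number of such $D$ is at most $|f|$.
\item $\bd D\subset\intr(f)$. Since $\bd D$ is connected, this can happen for at most one face $f$ per object $D$, so such pairs contribute at most $n$ in total.
\end{itemize}
Therefore
\[
\textstyle\sum_f|K_f|\leq \sum_f|K^*_f| + |\stackpi(\D)| + n .
\]
Taking expectations and applying \autoref{lem:size-of-stacking} together with $n=O(U(n))$ gives $O(U(n)\log n)$.

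\textbf{Main obstacle.} The delicate step is the charging in the first part. I need to verify that the visibility condition on $D_j$ is indeed necessary for $p$ to lie on a face boundary, and that the multiplicity per point of $P$ is constant. In particular, there must be no hidden dependence on $D_i$'s position in $\pi$: $D_i$ may come before or after $D_j$, and the argument has to use only the necessary condition on $D_j$.
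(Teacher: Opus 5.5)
Your proposal is correct and follows essentially the same route as the paper: each proper crossing is charged to a point $p\in P$ whose boundary-contributing object must come after all of $\D(p)$ in~$\pi$, which gives probability at most $1/(\lev(p)+1)$ and hence a harmonic sum via the Clarkson--Shor bound; $K_f$ is then split into crossing objects, edge-contributing objects, and objects whose boundary lies inside~$f$. One minor imprecision does not affect the argument: general position does not let you assume $p$ is interior to an edge, because if both objects through $p$ come after all of $\D(p)$ then $p$ is a vertex of $\stackpi(\D)$; still, the necessary visibility condition on the object whose arc bounds $f$ at $p$ holds, and so does the $O(1)$ multiplicity per point.
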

\begin{proof}
Consider a face $f\in F$. Any object $D_i\in K_f$ is either completely contained in $f$
or its boundary~$\bd D_i$ intersects~$\bd f$. Trivially, for any object $D$ there is at most one face $f$ 
of the former type. Moreover, the total total number of edges of the
faces in $F$ is $O(U(n))$, so it remains to count the number of
objects whose boundary properly intersects a face boundary.
Thus, if we can show that $\Ex{\sum_{f\in F} |K^*_f|}=O(U(n)\log n)$ then we also
show that $\Ex{\sum_{f\in F} |K_f|}=O(U(n)\log n)$.
The proof that $\Ex{\sum_{f\in F} |K^*_f|}=O(U(n)\log n)$ is similar to the proof of \autoref{lem:size-of-stacking}, 
as explained next.

Let $P$ be the set of all intersection points between the object boundaries, 
let $\D(p) := \{ D_i\in \D : p\in \intr(D_i)\}$ denote the objects containing a point~$p\in P$ 
in their interior, let $\lev(p) := |\D(p)|$ be the level of~$p$, and let
$P_{\ell} := \{ p\in P : \lev(p) =\ell\}$. Define $K^* := \bigcup_f K^*_f$
and consider a point $p\in K^*$ contributed by a pair $(D_i, f)$. Let $D_j$ be 
the unique object
such that $p\in \bd D_i \cap \bd D_j$. Thus, $D_j$ contributes an
edge to $\bd f$ that is intersected by~$\bd D_i$. For this to happen,
$D_j$ must come after any object $D_k\in \D(p)$ in the order specified by~$\pi$.
Hence, for a point $p\in P$ we have
\[
\Pr{ p\in K^*} \leq \tfrac{1}{\lev(p)+1}.
\]
Following the same reasoning as in the proof of \autoref{lem:size-of-stacking},
we obtain
\[
\Ex{|K^*|} = \sum_{\ell=0}^{n-2}|P_{\ell}| \cdot \tfrac{1}{\ell+1}
         =  O\left( U(n) \cdot\sum_{\ell=0}^{n-2}\tfrac{1}{\ell+1} \right) = O(U(n)\log n),
\]
which finishes the proof.
\end{proof}

\subparagraph{Computing a stacking.}
The stacking $\stackpi(\D)$ can be computed by a divide-and-conquer algorithm.
To this end, let $D_1,\ldots,D_n$ be the ordering of the objects as specified by~$\pi$.
We recursively compute a stacking~$\cS_1$ for the ordered set $\D_1 := \{D_1,\ldots,D_{\floor{n/2}} \}$
and a stacking~$\cS_2$ for the ordered set $\D_2 := \{D_{\floor{n/2}+1},\ldots, D_n\}$.
Next, we compute the overlay of $\cS_1$ and $\cS_2$, where we label 
each face in the overlay with the face(s) of $\cS_1$ and/or $\cS_2$ containing it. 
This can be done in $O((n_1+n_2)\log (n_1+n_2)+k)$ expected time by Mulmuley's 
algorithm~\cite{DBLP:journals/jacm/Mulmuley91},
where $n_1$ and $n_2$ are the complexity of $\cS_1$ and $\cS_2$, respectively,
and $k$ is the number of intersections between the edges of $\cS_1$ and the edges of $\cS_2$. 
Observe that we can easily compute owners for the new faces:
faces labeled by a face from $\D_1$ inherit the owner from that face, 
and all other faces inherit their owner from the corresponding face in~$\D_2$.
To finish the construction of~$\stackpi(\D)$,
it remains to merge adjacent faces of the overlay that are contained
in the same face of~$\cS_2$. (The objects of $\D_2$  hide those of~$\D_1$,
so we must undo the partitioning of the faces in $\cS_2$ caused by the edges of~$\cS_1$.)
We obtain the following lemma.
\begin{lemma}\label{lem:compute-stacking}
Let $\D$ be a set of $n$ objects from a family of objects with union complexity~$U(n)$.
Then we can compute in $O(U(n)\log^2 n)$ expected time a stacking $\stackpi(\D)$, 
including an owner for each face of $\stackpi(\D)$, such that $|\stackpi(\D)|=O(U(n))$.
\end{lemma}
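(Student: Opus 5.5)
The algorithm is the divide-and-conquer procedure just described, run with a \emph{random} permutation~$\pi$. The size bound $|\stackpi(\D)|=O(U(n))$ then holds in expectation by \autoref{lem:size-of-stacking}. If a worst-case guarantee is wanted, we restart with a fresh permutation whenever the complexity exceeds $c\cdot U(n)$ for a suitable constant~$c$; by Markov's inequality this happens with probability at most $1/2$, so only $O(1)$ expected restarts are needed.

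Correctness of the merge step is the easy part. Take a subproblem on a contiguous block $D_a,\ldots,D_b$ of the random order, split into a first half $\D_1$ and a second half $\D_2$. Every point of $\union(\D_2)$ is visible in the stacking exactly as it is in~$\cS_2$. Every point outside $\union(\D_2)$ is visible as in~$\cS_1$. Hence the faces of the stacking are exactly
\begin{itemize}
\item the faces of~$\cS_2$, and
\item the connected pieces of faces of~$\cS_1$ that lie outside $\union(\D_2)$.
\end{itemize}
This is precisely what we get from the overlay after merging the overlay faces that lie inside a common face of~$\cS_2$. The owners are inherited as stated in the text. The merging takes time linear in the overlay complexity, using a traversal of the overlay's doubly-connected edge list that deletes edges of $\cS_1$ lying in the interior of faces of~$\cS_2$.

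The main step is the running-time analysis. At recursion depth~$d$ there are $2^d$ subproblems, each on a contiguous block of $m\approx n/2^d$ objects. Since $\pi$ is random, each block is a uniformly random subset of size~$m$ in a uniformly random order. By \autoref{lem:size-of-stacking}, both $n_1$ and $n_2$ have expected value $O(U(m))$. Assuming $U(m)/m$ is nondecreasing, as is standard, the sum over a level is $O(2^d U(n/2^d))=O(U(n))$.

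The obstacle is bounding~$k$, the number of crossings between edges of $\cS_1$ and edges of~$\cS_2$, which are vertices of the overlay. I would charge each such crossing to an intersection point $p\in\bd D_i\cap\bd D_j$ with $D_i\in\D_1$ and $D_j\in\D_2$, and argue that $p$ appears only if two things hold:
\begin{itemize}
\item $D_i$ comes after all other objects of $\D_1$ containing $p$, since $p$ is on a visible edge of~$\cS_1$;
\item $D_j$ comes after all other objects of $\D_2$ containing $p$.
\end{itemize}
A point of level $\ell$ within the subproblem then survives with probability $O(1/(\ell+1)^2)$, because these two events are essentially independent given the random split. The same Clarkson--Shor argument as in \autoref{lem:size-of-stacking} then gives $\Ex{k}=O(U(m))$. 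Even the cruder bound via \autoref{lem:size-of-conflict-sets} would only lose a logarithmic factor, so I would try the direct charging argument first.

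Each merge therefore costs $O(U(m)\log m)$ expected time with Mulmuley's algorithm. Summing over the $O(\log n)$ recursion levels gives $O(U(n)\log^2 n)$ expected time.
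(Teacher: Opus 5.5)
Your proof is correct and follows the paper's overall route: divide-and-conquer along the random order, Mulmuley's overlay, merging the overlay pieces back into faces of $\cS_2$, inheriting owners, summing over $O(\log n)$ recursion levels, and restarting to turn the expected size bound into a guaranteed one. Your explicit merge-correctness argument and the explicit assumption that $U(m)/m$ is nondecreasing only spell out what the paper leaves implicit.

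The one step where you deviate is the bound on $k$.

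\emph{The paper's bound.} The paper does no fresh charging argument. It observes that each crossing of an edge of $\cS_1$ with an edge of $\cS_2$ is a proper intersection of an object boundary with a face boundary of the merged stacking, that is, a point of some $K^*_f$. So \autoref{lem:size-of-conflict-sets} directly gives $\Ex{k}=O(U(n)\log n)$.

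\emph{Your remark about this bound.} You say the cruder bound ``would only lose a logarithmic factor.'' In fact it loses nothing. Mulmuley's algorithm already costs $O((n_1+n_2)\log(n_1+n_2))$, so an additive $k=O(U(m)\log m)$ is absorbed and each merge still costs $O(U(m)\log m)$.

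\emph{Your sharper bound.} Your bound $\Ex{k}=O(U(m))$ is also correct, but ``essentially independent'' should be made precise. Condition on the split and on $D_i\in\D_1$, $D_j\in\D_2$. Then the two events are exactly independent, with probabilities $1/(a+1)$ and $1/(b+1)$, where $a+b=\ell$. Using $\frac{1}{(a+1)(b+1)}=\frac{1}{\ell+2}\bigl(\frac{1}{a+1}+\frac{1}{b+1}\bigr)$ together with $\Ex{1/(a+1)}=O(1/(\ell+1))$ for the hypergeometric split, you get the claimed $O(1/(\ell+1)^2)$.

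\emph{What each route buys.} Your argument shows that the overlay itself has expected complexity $O(U(m))$, which is a nice extra fact. It does not improve the lemma, and the paper's reuse of \autoref{lem:size-of-conflict-sets} is shorter.
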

\begin{proof}
Consider the algorithm described above. Because $\D$ is randomly ordered, 
we can assume in our analysis that $\D_1$ and $\D_2$ are randomly ordered as well.
Thus, $\Ex{|\cS_1|} = O(U(n))$ and $\Ex{|\cS_2|}= O(U(n))$. 
Now observe that if an edge~$e_1$ of $\cS_1$ intersects an edge $e_2$ of $\cS_2$,
then $e_1\cap e_2$ is a point of one of the sets $K^*_f$ defined in \autoref{lem:size-of-conflict-sets}.
Hence, $\Ex{k} = O(U(n)\log n)$, where $k$ denotes the number of intersections between 
the edges of $\cS_1$ and the edges of $\cS_2$. The expected time needed for the merge step
is therefore $O(U(n)\log n)$. Since the recursion depth is $O(\log n)$ and each object 
is handled once at each level of the recursion, by linearity of expectation 
the total expected running time of our algorithm is $O(U(n)\log^2 n)$.

The algorithm as described so far computes a stacking whose \emph{expected}
complexity is~$O(U(n))$. This can be turned into an algorithm that is guaranteed
to compute a stacking with the desired complexity in a standard manner:
simply repeat the algorithm until a stacking of at most twice the expected complexity
is reached. This increases the expected running time by a factor~2.
\end{proof}

\subparagraph{Simplified stackings.}
Computing a stacking can be done efficiently, as we just saw, but
computing the conflict lists is more challenging. We would like to do this 
by tracing the boundary $\bd D_i$ of each object $D_i\in\D$ through~$\stackpi(\D)$, and add $D_i$ to the
conflict list of each face~$f$ that we encounter. The problem is that a face~$f$ may
contain many holes, and then it is hard to figure out which holes are being crossed by~$\bd D_i$.
Indeed, given a set of $n$ point-size holes and a set of $n$ lines,
it seems impossible to determine in $o(n^{4/3})$ time which holes are crossed 
by which lines, because this would solve Hopcroft's 
problem.\footnote{This is not a formal lower bound, because the problem 
that we have to solve is not for an arbitrary input, but for an input 
arising from a random stacking. We have not been able to exploit this, however.}
We will therefore work with a simplified subdivision, for which it will be 
easier to compute its conflict lists. 

The new subdivision, which we call a \emph{simplified stacking}, 
is obtained by removing faces that are 
contained in a hole of some other face, and then adding these faces
to the outermost face containing them; see \autoref{fig:stacking}(ii) for an illustration.
More precisely, the new subdivision of $\union(\D)$, 
which we denote by $\stackpir(\D)$, is defined as follows: 
For each face $f$ of $\stackpi(\D)$ that does not lie inside a hole of some other face~$g$ of $\stackpi$,
there is a face $f^*$ in $\stackpir(\D)$ that is the union of $f$ and all the faces
of $\stackpi(\D)$ that are enclosed by~$f$.
\begin{lemma} \label{obs:simplify}
The simplified stacking $\stackpir(\D)$ of a stacking $\stackpi(\D)$ is a subdivision
of $\union(\D)$ such that each face $f^*$ of $\stackpir(\D)$ has the following properties:
\begin{enumerate}[(i)]
\item $f^*$ is the union of one or more faces of $\stackpi(\D)$, 
\item $f^*$ is simply connected, 
\item $f^*$ has an owner, that is, there is an object $D\in \D$ such that $f^*\subseteq D$.
\end{enumerate}
Furthermore, $\stackpir(\D)$ can be computed from $\stackpi(\D)$ in $O(|\stackpi(\D)|)$ time.
\end{lemma}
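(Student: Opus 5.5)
We prove the three properties by working with the ``enclosure'' relation between faces, and then treat the algorithmic claim separately. A face $g$ of $\stackpi(\D)$ is \emph{enclosed} by a face $f$ if $g$ lies in a bounded component (a hole) of $\Reals^2\setminus f$. The hole of $f$ containing $g$ is the closure of a bounded component of $\Reals^2\setminus f$. It is therefore a simply connected compact region whose boundary is a closed curve made of edges of~$f$.

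First we show that the faces of $\stackpir(\D)$ partition $\union(\D)$. For each face $g$ of $\stackpi(\D)$, consider the set of faces enclosing it (including $g$ itself). Enclosure is transitive: if $g$ lies in a hole $H$ of $f$, and $f$ lies in a hole $H'$ of $f'$, then $H\subseteq H'$. Indeed, $H$ is bounded by points of $f\subset H'$, and $H'$ is simply connected. Two holes of different faces are either nested or have disjoint interiors, since the faces are interior-disjoint and each hole boundary lies in its face. Hence the faces enclosing $g$ are totally ordered by containment of their corresponding holes. There is also a unique outermost such face, because there are finitely many faces and the hole regions form a laminar family. So every face $g$ belongs to exactly one $f^*$, namely the one of its outermost encloser. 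This gives (i), and shows that $\stackpir(\D)$ is a subdivision of $\union(\D)$.

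For (ii), we note that $f^*$ equals $f$ together with all its holes. Every point in a hole $H$ of $f$ lies in $\union(\D)$: $H$ is simply connected, bounded by $\bd f\subset\union(\D)$, and any point of $H$ not covered by $\D$ would lie in a bounded component of the complement of the union. The face of $\stackpi(\D)$ surrounding such a gap would then be enclosed by $f$, but the gap itself is not a face. This is the delicate point. I would resolve it by defining $f^*$ as $f$ together with the faces of $\stackpi(\D)$ inside its holes, and checking simple connectivity of $f^*$ together with $\Reals^2\setminus\union(\D)$ locally. A cleaner route is to show that the owner $D$ of $f$ satisfies $H\subseteq D$. Since $f\subseteq D$ and $D$ is a topological disk (so simply connected), any closed curve in $f$ bounds a region inside $D$. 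Applying this to $\bd H$ gives $H\subseteq D$. So the holes of $f$ are entirely covered by $D$, which immediately yields $f^* = f\cup\bigcup H \subseteq D$. That is (iii) with $\owner(f^*)=\owner(f)$. It also shows $f^*$ has no holes, since filling all bounded complementary components of a connected compact set makes it simply connected. This settles (ii) and sidesteps the gap issue. I expect this filling argument (Jordan-curve reasoning for hole boundaries inside a topological disk) to be the main technical point.

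For the algorithm, we traverse the doubly-connected edge list of $\stackpi(\D)$. For each face we identify its outer boundary component and its inner (hole) boundary components; we also build a tree whose nodes are the faces and the boundary cycles. Each hole cycle of $f$ is adjacent, on its inner side, to faces enclosed by $f$. We mark all faces reachable by BFS across edges starting from inside the hole cycles of unmarked outer faces. Equivalently, a face is kept as an $f^*$ iff its outer boundary cycle is not an inner cycle of any other face; this is determined locally per cycle in the DCEL, since each cycle knows the face on its other side. The kept faces discard their inner cycles, and every other face is deleted along with its edges. Each edge and face is touched $O(1)$ times, giving $O(|\stackpi(\D)|)$ time.
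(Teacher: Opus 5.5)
Your proof of (i)--(iii) is correct and is essentially the paper's argument. The holes of a face $f$ lie in the simply connected object $\owner(f)$, hence in $\union(\D)$. So they are tiled by the faces that $f$ absorbs, and $f^*\subseteq\owner(f)$ has no holes. Your laminarity argument, showing that every face is absorbed by exactly one outermost face, adds detail that the paper simply calls immediate from the construction.

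The one real problem is in your algorithmic paragraph. The claimed equivalence ``a face is kept iff its outer boundary cycle is not an inner cycle of any other face'' is false, because a boundary cycle generally does not have a single face on its other side.

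\emph{Counterexample.} Take a large disk $A$ and two overlapping disks $B,C$ strictly inside $A$, stacked in the order $A,B,C$. The face $A\setminus\intr(B\cup C)$ has the single hole $B\cup C$, which is tiled by the faces $B\setminus\intr(C)$ and $C$. Each of these has an outer cycle bordered partly by the face of $A$ and partly by the other of the two. So neither outer cycle is an inner cycle of any face, and your test keeps both, although both must be merged into the face of $A$. Once the face of $A$ discards its inner cycle, the output is not even a subdivision.

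Enclosure is not a local property of cycles, so some propagation is unavoidable. Keep only your BFS formulation, and make ``unmarked outer faces'' precise. Otherwise faces inside nested holes are re-traversed once per enclosing face, which is not linear when holes are deeply nested. One way is a single traversal rooted at the unbounded face:
\begin{itemize}
\item Crossing an edge of an empty region, or an outer-boundary edge of a kept face, leads to a kept face or an empty region.
\item Crossing a hole edge of a kept face $f$, and afterwards any edge of a face already absorbed into $f^*$ other than an edge of $f$, leads to a face absorbed into $f^*$. This never leaves the hole, since the hole's boundary lies in $f$ and the hole lies inside $\owner(f)$, so it contains no empty region.
\end{itemize}
This is what the paper's brief ``traverse the doubly-connected edge list using the hole pointers'' amounts to.
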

\begin{proof}
Property~(i) is immediate from the construction. To prove property~(ii),
consider a face~$f^*$ in $\stackpir(\D)$ and its corresponding face $f$ in $\stackpi(\D)$. 
The holes in~$f$ (if any) are contained in $\owner(f)$---and,
hence, in $\union(\D)$---because $\owner(f)$ is simply connected. 
This implies that these holes are fully covered by other faces in~$\stackpi(\D)$, which
become part of~$f^*$, and so $f^*$ no longer has holes.
Moreover, $f^* \subseteq \owner(f)$, which proves property~(iii).

Computing $\stackpir(\D)$ from $\stackpi(\D)$ in $O(|\stackpir(\D)|)$ time is trivial.
Indeed, the standard representation of planar subdivisions is a doubly-connected edge 
list~\cite{bcko-cgaa-08}, which contains for each face $f$ a list of pointers
to (an edge of) each of its holes. Thus, we can easily traverse $\stackpi(\D)$
and remove all information regarding faces that are no longer present in~$\stackpir(\D)$.
\end{proof}

\subparagraph{Computing the conflict lists of a simplified stacking.}
Because the faces of $\stackpir(\D)$ are simply connected, we can compute
their conflict sets efficiently, as explained next. Note that we can easily compute
for each boundary arc of an object $D_i\in \D$ in which face of~$\stackpir(\D)$ its left endpoint
is contained, using point location.
 (In fact, the algorithm we present below will
 compute this as well, so there is no need to do it separately.)
We also know which boundary arcs contribute to which face boundary.
Thus, to compute the conflict lists of the faces in $\stackpir(\D)$ it remains to
compute the intersections between the face boundaries and the boundary arcs of the objects.
This boils down to solving the 
following variant of the \emph{red-blue intersection problem}: given a set $\R$ of $x$-monotone
red arcs that are the edges of a 
connected\footnote{We assume for simplicity that $\union(\D)$
is connected, so that $\stackpir(\D)$ is also connected. If not, we simply run the
algorithm separately on each connected component of~$\union(\D)$.}
planar subdivision, and a set~$\B$ of 
$x$-monotone blue arcs, compute all intersections between a red arc and a blue arc.
We call these intersections \emph{purple}. Note that there are no intersections between
two red arcs (except at shared endpoints, which we do not consider to be intersections). 
There can be many intersections among the blue arcs---we call them \emph{blue intersections}---and
our goal is to avoid spending time on them. 
\medskip

Let $m := |\R|+|\B|$ and let $k$ be the number of purple intersections.
The general red-blue intersection problem requires $\Omega(m^{4/3}+k)$ time,
even when $\R$ is a connected set of line segments and $\B$ is a set of points. 
There are, however, variants that can be solved in $O((m+k)\polylog m)$ time.
In particular, 
Basch, Guibas, and Ramkumar~\cite{DBLP:journals/algorithmica/BaschGR03} 
showed that the version where the red set is connected and the blue set is 
connected---more formally, the union of the arcs in the red (resp.~blue) set should be connected---can 
be solved in $O(\lambda_{s+2}(m+k)\log^3 m)$ time. 
The running time was later improved to $O(\lambda_{s+2}(m+k)\log m)$ expected time
by Har-Peled and Sharir~\cite{DBLP:journals/dcg/Har-PeledS01}.
In our setting, $\B$ is not necessarily connected, but because
$\R$ is the edge set of a connected subdivision, we can still use the
technique of Har-Peled and Sharir. The key insight is that Har-Peled
and Sharir need $\B$ to be connected to be able to process $\R$ efficiently, but
processing $\R$ is easy anyway when
the red arcs do not cross each other.
\begin{theorem}\label{thm:red-blue}
Let $\R$ be a set of $x$-monotone red arcs that are the edges of a connected planar subdivision, 
and let~$\B$ be a set of $x$-monotone blue arcs. 
Let $s$ be the maximum number of intersections between any two arcs in $\R\cup\B$. 
Then we can solve the red-blue intersection problem on $\R\cup \B$ in $O(\lambda_{s+2}(m+k) \log m)$
expected time, where $m := |\R|+|\B|$ and $k$ is the number of purple intersections.
\end{theorem}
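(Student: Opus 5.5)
We adapt the randomized divide-and-conquer algorithm of Har-Peled and Sharir~\cite{DBLP:journals/dcg/Har-PeledS01}, which works with trapezoidal decompositions restricted to a vertical strip. First we recall its structure. We recursively split the plane into vertical slabs. At each level we maintain, for each slab~$\sigma$, a vertical decomposition of the red arcs clipped to~$\sigma$. We also maintain the blue arcs crossing~$\sigma$, together with the \emph{lower and upper envelopes} of certain blue sub-collections. The connectivity of a colour class is used in only one place: to trace an arc of that colour through the decomposition of the other colour by walking along it, rather than by locating each arc from scratch. Only then are the costs charged to purple intersections and to envelope complexities, and the envelope complexities account for the $\lambda_{s+2}$ factor. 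In our setting the roles are asymmetric. We never need to construct an arrangement of the blue arcs, since blue--blue crossings are exactly what we must avoid. Instead we only trace blue arcs through the \emph{red} structure.

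The concrete plan has three steps. \textbf{(1) Preprocess the red set.} Because the red arcs are pairwise non-crossing and form a connected subdivision, its vertical decomposition $V(\R)$ has size $O(|\R|)$ and can be built in $O(|\R|\log |\R|)$ time, for example by a sweep or by Chazelle-type triangulation of connected subdivisions. This replaces the expensive recursive step in~\cite{DBLP:journals/dcg/Har-PeledS01} that needed $\B$ to be connected. \textbf{(2) Locate the blue arcs.} For each blue arc we locate its left endpoint in $V(\R)$ by point location, in $O(\log m)$ time per arc. \textbf{(3) Trace the blue arcs.} We walk along each blue arc through the trapezoids of $V(\R)$. Each step either crosses a vertical wall of a trapezoid or crosses a red arc, and the latter is a purple intersection that we report.

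The difficulty lies in step~(3). A blue arc may cross many vertical walls without hitting any red arc, so a naive walk costs the number of blue--wall crossings, which can be $\Theta(m^2)$. This is where we fall back on the Har-Peled--Sharir machinery. We use a random sample of the red edges, which are connected, so that the connectivity-based analysis of the \emph{red} side goes through unchanged. We build a $(1/r)$-cutting-like vertical decomposition of the sample, in which each cell meets few red edges. For each cell, we handle the blue arcs entering it by the following bound. Within a cell, blue arcs that cross no red edge are dismissed in time proportional to the complexity of the lower and upper envelopes of the blue pieces clipped to that cell. These envelopes have complexity $\lambda_{s+2}$ in the number of pieces. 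Blue arcs that do cross red edges are charged to purple intersections. The blue pieces per cell are obtained by cutting blue arcs at sample boundaries, and these cut points are themselves purple intersections with sample edges. Hence the total number of pieces is $O(m+k)$, giving $O(\lambda_{s+2}(m+k))$ work per level. The recursion has $O(\log m)$ levels in expectation, so the total expected time is $O(\lambda_{s+2}(m+k)\log m)$. I expect the main obstacle to be verifying that every place where~\cite{DBLP:journals/dcg/Har-PeledS01} uses connectivity of $\B$ is either a red-side operation, which is now trivial because red arcs do not cross, or can be replaced by charging to purple intersections. A blue arc may also lie entirely inside a cell of the sample without touching red. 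We handle this case by the endpoint location of step~(2), so it contributes no purple intersections and only $O(\log m)$ per arc.
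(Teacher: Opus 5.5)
Your step~(1) contains the right asymmetry, and it is the observation the paper relies on: red arcs do not cross, so all of $\vd(\R)$ has size $O(m)$. Whatever Har-Peled and Sharir needed the connectivity of $\B$ for on the red side therefore becomes trivial. The rest of the plan, however, has a genuine gap.

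Tracing the disconnected blue arcs through $\vd(\R)$ costs $\Theta(m^2)$ even when $k=0$, as you observe. Your sampling fallback reintroduces exactly this cost. The cells of the vertical decomposition of a red sample are bounded by sample edges \emph{and by vertical walls}, and a blue arc leaving a cell through a wall is not a purple intersection. For example, let the red subdivision be a comb with downward teeth, and let $\B$ consist of $m$ long horizontal segments stacked below the teeth. Every blue segment crosses the $\Theta(r)$ walls that the sampled teeth send downward, giving $\Theta(mr)$ pieces while $k=0$.

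So the count ``$O(m+k)$ pieces'' is false. Since a cell can still contain $\Theta(m)$ blue pieces, the recursion also does not shrink on the blue side. The step that dismisses blue pieces crossing no red edge ``in time proportional to their envelopes'' has no mechanism: the upper and lower envelopes of the blue pieces in a cell do not tell you which pieces cross a red edge lying between them. Finally, a random sample of the edges of a connected subdivision is not itself connected, so no connectivity-based analysis transfers to the sample.

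The missing idea is to use the connectivity of $\R$ in the other direction: walk the \emph{red} set through the \emph{blue} arrangement, never the blue arcs through the red structure. The algorithm of Har-Peled and Sharir is a simultaneous left-to-right sweep over $\A(\R)$ and $\A(\B)$ that maintains red--blue visibility pairs along the sweep line. Every purple point is preceded by its pair becoming a visibility pair, and the vertical visibility between the two arcs lies in a trapezoid that is hot in both $\vd(\R)$ and $\vd(\B)$.

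The hot blue trapezoids (those met by red arcs) number $O(\lambda_{s+2}(m+k))$. Cutting the blue arcs at the $k$ purple points merges all blue faces met by the connected red set into a single face of an arrangement of $m+k$ arcs. These trapezoids can be built online during the sweep in $O(\lambda_{s+2}(m+k)\log m)$ expected time, using only the connectivity of $\R$. On the red side one simply takes all of $\vd(\R)$. The sweep's cost is charged to hot trapezoids of both colours and to purple points, which gives the bound.

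This is where the $\lambda_{s+2}(m+k)$ term actually comes from, and it is how the $\Theta(m^2)$ wall crossings are avoided. Your plan never builds the blue-side structure, so it has no substitute for either.
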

\begin{proof}
We first sketch the algorithm of Har-Peled and Sharir~\cite{DBLP:journals/dcg/Har-PeledS01},
which was designed for the setting where both $\R$ and $\B$ are connected sets, and then we 
adapt it to our setting.
\medskip

The algorithm sweeps a vertical line~$\ell$ from left to right over the plane, with the goal of
maintaining \emph{visibility pairs}, which are pairs consisting of a red arc~$\rho$
and a blue arc~$\beta$ that see each other along $\ell$. (More formally, a visibility pair
at a given position of~$\ell$ is a pair $(\rho,\beta)\in\R\times \B$ such that both $\rho$ and $\beta$ intersect $\ell$
and the segment that connects $\rho$ to $\beta$ along~$\ell$ does not intersect any other arc.)
Observe that for any purple intersection~$x\in \rho\cap\beta$, the pair
$(\rho,\beta)$ is a visibility pair just before the sweep line reaches~$x$,
so it is enough 
if we test a pair of curves for intersection
whenever it become a visibility pair. 

Maintaining the visibility pairs is done as follows. 
Let $\vd(\R)$ denote the vertical decomposition of the red arrangement~$\A(\R)$,
and let $\vd(\B)$ denote the vertical decomposition of the blue arrangement~$\A(\B)$.
With a slight abuse of terminology, we refer to the faces in these vertical decompositions 
as (red or blue) \emph{trapezoids}. A red trapezoid is called \emph{hot} if it is 
intersected by a blue arc, and a blue trapezoid is called hot if it is intersected by a red arc.
The algorithm sweeps over the two arrangements $\A(\R)$ and $\A(\B)$
in parallel---in other words, it performs two simultaneous sweeps that use the same sweep line~$\ell$---and 
while doing so, it constructs the hot trapezoids that intersect the sweep line.
(It does not construct the entire arrangements~$\A(\R)$ and~$\A(\B)$,
which would be too costly.)
Note that if $(\rho,\beta)$ is a visibility pair, then the vertical visibilities between 
$\rho$ and $\beta$ occur inside trapezoids that are hot in $\vd(\R)$ as well as in~$\vd(\B)$.
Thus, as explained by Har-Peled and Sharir~\cite{DBLP:journals/dcg/Har-PeledS01},
we can detect all visibility pairs---and, hence, all purple intersections---if we can
construct all hot trapezoids.

Har-Peled and Sharir then argue (Lemma~7.2) that the total number of hot blue trapezoids 
is $O(\lambda_{s+2}(m+k))$ and that they can be constructed in $O(\lambda_{s+2}(m+k)\log m)$
expected time in an online fashion during the sweep. In their argument, it is essential 
that the red arcs form a connected set. The red arcs are also connected in our setting, so we can 
implement the construction of the blue hot trapezoids in exactly the same manner. 
Har-Peled and Sharir can apply the same argument to the hot red trapezoids, because in
their setting the blue arcs also form a connected set. In our setting, the blue set need
not be connected. But since the red arcs
are the edges of a subdivision, the entire vertical decomposition $\vd(\R)$ has
$O(m)$ complexity.
Thus, the number of hot red trapezoids is not larger than for Har-Peled and Sharir. 
This suffices to reach the same time bound, as they present an argument 
that charges the time spent for the simultaneous sweep to hot red or blue trapezoids and purple intersections.

We conclude that the expected running time in our setting is the same
as in the setting of Har-Peled and Sharir, namely~$O(\lambda_{s+2}(m+k)\log m)$, which
proves the theorem.
\end{proof}
Using \autoref{thm:red-blue} to compute the conflict lists, we obtain the following result.
\begin{theorem}\label{thm:conflict-list-computation}
Let $\D$ be a set of $n$ objects from a family of objects with union complexity~$U(n)$.
Then we can compute in $O(U(n)\log^2 n)$ expected time a simplified stacking $\stackpir(\D)$, 
including an owner for each face of $\stackpir(\D)$, such that $|\stackpir(\D)|=O(U(n))$,
and $\sum_{f\in F} |K^*_f|=O(U(n)\log n)$ and  $\sum_{f\in F} |K_f|=O(U(n)\log n)$, 
where $F$ is the set of faces of $\stackpir(\D)$.
Moreover, we can compute all conflict lists $K_f$ in total expected time~$O( \lambda_{s+2}(U(n)) \log^2 n)$.
\end{theorem}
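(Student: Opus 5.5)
We first compute a stacking $\stackpi(\D)$ for a random permutation $\pi$, with owners, using \autoref{lem:compute-stacking}. This takes $O(U(n)\log^2 n)$ expected time and gives $|\stackpi(\D)|=O(U(n))$. We then apply \autoref{obs:simplify} to obtain $\stackpir(\D)$ in linear time. Each face of $\stackpir(\D)$ is a union of faces of $\stackpi(\D)$ and inherits the owner of its outermost face. Hence $|\stackpir(\D)|\leq|\stackpi(\D)|=O(U(n))$.

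Next we bound the conflict sizes. The edges of $\stackpir(\D)$ form a subset of the edges of $\stackpi(\D)$. Every proper intersection of some $\bd D_i$ with the boundary of a face $f^*$ of $\stackpir(\D)$ is therefore a proper intersection with the boundary of some face of $\stackpi(\D)$. So $\sum_{f^*}|K^*_{f^*}|$ is at most the corresponding sum for $\stackpi(\D)$, which has expectation $O(U(n)\log n)$ by \autoref{lem:size-of-conflict-sets}.

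For the conflict lists $K_{f^*}$ we repeat the argument of that lemma. An object in $K_{f^*}$ either contributes an edge to $\bd f^*$, or crosses $\bd f^*$ properly, or lies entirely in $\intr(f^*)$. The first two kinds number $O(U(n))$ and $O(U(n)\log n)$ in total. For the third kind, $f^*$ is simply connected, so such a $D$ lies in exactly one face. Summing gives expected $O(U(n)\log n)$.

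To turn these expectations into guarantees, we check the sums once the lists are computed and restart whenever a bound exceeds twice its expectation. By Markov's inequality each restart succeeds with constant probability, so this multiplies the expected running time by only a constant. The three bounds must hold simultaneously, so we use constant $3$ instead of $2$ in the Markov step, which still leaves constant success probability.

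To compute the lists, we let the red set $\R$ be the edges of $\stackpir(\D)$ and the blue set $\B$ be the $x$-monotone boundary arcs of all objects. By the footnote assumption the red set is connected, since each connected component of $\union(\D)$ can be handled separately. Here $m=O(U(n)+n)=O(U(n))$ and the number of purple intersections is $k=O(U(n)\log n)$. \autoref{thm:red-blue} then reports all purple intersections in $O(\lambda_{s+2}(U(n)\log n)\log n)$ expected time. Since $\lambda_{s+2}$ is superadditive-like, we have $\lambda_{s+2}(cN)=O(c\,\lambda_{s+2}(N))$, so this is $O(\lambda_{s+2}(U(n))\log^2 n)$.

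Knowing the purple intersections, we trace each object boundary through the faces. This requires, for each arc, the face containing its left endpoint. Either the sweep supplies this, or we use a point-location structure on $\stackpir(\D)$ with $O(n\log n)$ total cost. Consecutive purple points along an arc delimit stretches that lie inside a single face, and we add $D$ to the list of each such face, removing duplicates. Objects that contribute edges to $\bd f^*$ are added directly from the subdivision. Objects with no purple intersection fall into one face, which point location finds.

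The main obstacle is the reduction to \autoref{thm:red-blue}: the hypotheses must hold exactly. Edges of $\stackpir(\D)$ are pieces of object arcs, and red–blue overlaps occur where an object's own boundary forms a face edge. We handle these by treating shared pieces as non-crossings, justified by general position.
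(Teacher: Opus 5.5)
This is essentially the paper's proof. You build and simplify a random stacking via \autoref{lem:compute-stacking} and \autoref{obs:simplify}, take the conflict bounds from \autoref{lem:size-of-conflict-sets} (and you usefully make explicit the transfer from $\stackpi(\D)$ to $\stackpir(\D)$, which the paper leaves implicit), and then run \autoref{thm:red-blue} with the edges of $\stackpir(\D)$ as red arcs, the object boundary arcs as blue arcs, and $k=O(U(n)\log n)$ in expectation.

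There are two slips, neither of which affects the main argument. First, the Markov restart is not needed, since the paper only claims the sums in expectation. As written it is also slightly off: a threshold of three times the expectation for each of three quantities only bounds the union-bound failure probability by~$1$, so you need a larger constant. Alternatively, note that the bound on $\sum_f|K_f|$ follows deterministically from the other two. Second, superadditivity of $\lambda_{s+2}$ gives the inequality in the wrong direction. The step $\lambda_{s+2}(U(n)\log n)=O(\lambda_{s+2}(U(n))\log n)$ instead rests on $\lambda_{s+2}(N)/N$ growing extremely slowly, which the paper also uses without comment.
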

\begin{proof}
The most time-consuming part of the algorithm is computing the conflict lists. This is
done by applying the algorithm from \autoref{thm:red-blue} to $\stackpir(\D)$ and the
boundary arcs of the objects. Since the (expected) number of intersections is
$O(U(n)\log n)$ by \autoref{lem:size-of-conflict-sets}, the total running time
is $O( \lambda_{s+2}(U(n)\log n) \log n)$, which is $O( \lambda_{s+2}(U(n)) \log^2 n )$.
\end{proof}

\subsection{The SSSP algorithm}
Like all \sssp algorithms in unweighted graphs,
our algorithm performs a BFS-like search from the source object. Let $L_j$ be the set of objects
at distance~$j$ from the source, and define $L_{\leq j} := L_0 \cup\cdots\cup L_j$ and $L_{< j} := L_0 \cup\cdots\cup L_{j-1}$.
Given levels $L_0,\ldots, L_{j-1}$, our task is to compute~$L_{j}$, 
which contains the objects in $\D \setminus L_{< j}$ that intersect~$\union(L_{j-1})$.
Following earlier work~\cite{DBLP:conf/esa/BergC25,Klost23}, we will identify a set $\C_{j} \supseteq L_{j}$ 
of \emph{candidates} to be checked for intersection with~$L_j$. To control the
running time, we need to make sure that an object is in the candidate 
set $\C_j$ for only a few values of~$j$. How this is done is very different from the 
previous algorithms---this is where our stackings come into play. How we test the 
candidates for intersection with $L_j$ is facilitated by the stacking as well.
The following pseudocode gives an overview of our algorithm;
the details of Steps~\ref{step:loop-start}, \ref{step:Fj}, and~\ref{step:Lj} will be described later.
\begin{algorithm}[H] 
\caption{{\sc SSSP-for-Objects-with-Small-Union-Complexity}($\D,\dsource$)} \label{alg:sssp}
\begin{algorithmic}[1]
\State Compute a simplified stacking $\stackpir(\D)$ with its
       conflict lists, according to \autoref{thm:conflict-list-computation}. \label{step:init}
\State $L_0 := \{ \dsource \}$; \ \  $L_1 := \{ D\in \D \setminus \{ \dsource\} : D\cap\dsource \neq \emptyset \}$ 
\State $j := 2$; \ \ $\mbox{\emph{done}} := \mbox{{\sc false}}$ 
\While {not \emph{done}} 
    \State Compute $\union(L_{j-1})$, preprocess it for point location, and initialize $L_j := \emptyset$. \label{step:loop-start}
    \State $F_j := \{ f : \mbox{$f$ is a face of $\stackpir(\D)$ that intersects $\union(L_{j-1})$} \}$. \label{step:Fj}
    \For{each face $f\in F_j$}
    \State $\C_j(f) := K_f\setminus L_{< j}$ \hfill $\rhd$ the candidates in the face~$f$~\label{step:cand-list-f}
       \State \parbox[t]{123mm}{Determine all objects $D\in \C_j(f)$ that intersect $\union(L_{j-1})\cap f$ 
              and add each such object to $L_{j}$ (if it is not yet present in $L_j$).} \label{step:Lj}
    \EndFor \label{step:loop-end}
    \If{$L_{j}=\emptyset$} \label{step:if}
        \State  $L_{\infty} := \D \setminus L_{< j}$; \ \ 
                $\mbox{\emph{done}} := \mbox{{\sc true}}$ \hfill $\rhd$ nodes in $L_{\infty}$ are unreachable from $\dsource$ 
    \Else 
    \State $j := j+1$
    \EndIf
\EndWhile
\end{algorithmic}
\end{algorithm}

\begin{lemma} \label{lem:correctness}
{\sc SSSP-for-Objects-with-Small-Union-Complexity} is correct.
\end{lemma}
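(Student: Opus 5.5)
We prove by induction on $j$ that when the algorithm finishes iteration $j$, the set it has computed as $L_j$ is exactly the set of objects at distance $j$ from $\dsource$. From this, correctness of the termination rule follows immediately. If the true level $L_j$ is empty, then no object lies at distance $j$. Since every path to an object at distance $j'>j$ must pass through level $j$, all levels beyond $j$ are empty as well. Hence $\D\setminus L_{<j}$ is precisely the set of unreachable objects.

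The base cases $j=0,1$ hold by construction. For the inductive step, assume $L_0,\ldots,L_{j-1}$ are correct. The true $L_j$ is then the set of objects $D\in\D\setminus L_{<j}$ with $D\cap\union(L_{j-1})\neq\emptyset$. We must show two inclusions.

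\emph{Soundness.} Every object added in Step~\ref{step:Lj} belongs to $\C_j(f)=K_f\setminus L_{<j}$, so it is not in $L_{<j}$. It is also tested to intersect $\union(L_{j-1})\cap f\subseteq\union(L_{j-1})$. Hence every added object lies in the true $L_j$.

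\emph{Completeness.} This is the main step. Let $D\in\D\setminus L_{<j}$ intersect $\union(L_{j-1})$, and pick a point $p\in D\cap D'$ with $D'\in L_{j-1}$. We need a face $f\in F_j$ such that $D\in K_f$ and $D\cap\union(L_{j-1})\cap f\neq\emptyset$. Since $p\in\union(\D)$ and $\stackpir(\D)$ subdivides $\union(\D)$, $p$ lies in some face $f$. Because $p\in\union(L_{j-1})$, this face is in $F_j$.

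If $\bd D\cap f\neq\emptyset$, then $D\in K_f$ and we are done, since $p$ witnesses the intersection inside $f$.

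Otherwise $\bd D$ misses $f$. Since $f$ is connected and contains $p\in D$, we get $f\subseteq\intr(D)$ (or $f\subseteq D$). We handle this case through the owner of $f$. The owner $\owner(f)$ contains $f$, hence contains $p$, hence intersects $D'$. So $\owner(f)\in L_{\leq j}$.

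This case seems to be a genuine gap. If $D$ contains the face entirely and is not in $K_f$, the conflict-list mechanism alone does not detect $D$. I would therefore look for a different witness face, as follows.

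Since $D'$ is a topological disk in $L_{j-1}$, either $\bd D'$ meets $D$, or one of $D$ and $D'$ contains the other. I would move from $p$ toward a point of $\bd D$. A good choice is to walk inside $D\cap D'$ to a point of $\bd D\cap D'$, if one exists. The face $g$ containing that point intersects both $\union(L_{j-1})$ and $\bd D$, which gives $D\in K_g$ and $g\in F_j$.

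If no such point exists, then $D\subseteq\intr(D')$. Pick any point $q\in\bd D$. The face containing $q$ lies in $F_j$ because $q\in D'$, and $D$ conflicts with it. In this case the intersection test also succeeds, since $q\in D\cap\union(L_{j-1})\cap g$. The remaining case, $D'\subseteq D$ with $\bd D\cap D'=\emptyset$, is symmetric: take the face containing a point of $\bd D'$; hmm, actually $\bd D$ may avoid it.

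To close this last case cleanly, I would argue via connectivity. The set $D$ is connected and meets $\union(L_{j-1})$. Every face entirely inside $\intr(D)$ with $D\notin K_f$ is surrounded by faces containing $\bd D$. So I expect the final argument to take a point of $\bd D$ that is closest, within $D$, to $\union(L_{j-1})$. The main obstacle is making sure that the face reached this way actually meets $\union(L_{j-1})$ and does not merely touch $D$.
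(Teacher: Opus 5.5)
There is a genuine gap. Your soundness argument and your first completeness subcase (a point of $\bd D\cap D'$ exists) are fine. The proof is unfinished, however, because you leave open the case $\bd D\cap D'=\emptyset$. Your suggested fix, using a point of $\bd D$ closest to $\union(L_{j-1})$, cannot work: no geometric witness exists in general.

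You have also misidentified this case. If $\bd D\cap D'=\emptyset$, then $D'$ is connected, meets $D$, and avoids $\bd D$, so $D'\subseteq\intr(D)$, not $D\subseteq\intr(D')$. The configuration $D\subseteq\intr(D')$ has $\bd D\subseteq D'$ and is already covered by your first subcase. When $D'\subseteq\intr(D)$ and $\union(L_{j-1})$ lies deep inside $D$, there need be no face meeting $\union(L_{j-1})$ whose conflict list contains $D$. So no choice of witness face rescues this case.

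The missing idea is combinatorial: for $j\geq 2$ this case cannot occur. This is exactly why the loop starts at $j=2$ with $L_1$ computed directly.
\begin{itemize}
\item Since $D'\in L_{j-1}$ with $j-1\geq 1$, $D'$ intersects some $D''\in L_{j-2}$.
\item If $D'\subseteq D$, then $D\cap D''\supseteq D'\cap D''\neq\emptyset$, so $D\in L_{\leq j-1}$. This contradicts $D\notin L_{<j}$.
\item Hence $D$ does not contain $D'$. Connectivity of $D'$ then gives a point $q\in\bd D\cap D'$.
\item The face $g$ of $\stackpir(\D)$ containing $q$ lies in $F_j$, has $D\in K_g\setminus L_{<j}=\C_j(g)$, and satisfies $q\in D\cap\union(L_{j-1})\cap g$.
\end{itemize}
This is all that is needed, and it is essentially the paper's argument. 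Your initial split on whether $\bd D$ meets the face containing $p$ then becomes unnecessary.
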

\begin{proof}
We prove by induction on $j$ that the sets $L_j$ are computed correctly. Trivially,
$L_0$ and $L_1$ are computed correctly, and setting $L_{\infty} := \D \setminus L_{< j}$
is correct when $L_j=\emptyset$. We now show that $L_j$ is computed correctly in
Steps~\ref{step:loop-start}--\ref{step:loop-end}, given that $L_0,\ldots,L_{j-1}$
have been computed correctly. The algorithm only adds objects to $L_j$ that are not in $L_{<j}$
and intersect an object in~$L_{j-1}$, and these must indeed be in~$L_j$.
It remains to argue that any object~$D$ that should be in $L_j$ is actually added.
Thus, we need to show that the set $F_j$ computed in Step~\ref{step:Fj}
contains a face~$f$  such that $D\in K_f$ and
$D$ intersects $\union(L_{j-1})\cap f$. 
Observe that $D$ cannot fully contain $\union(L_{j-1})$ since then $D$ would already
be in~$L_{<j}$. (Here we use that $j\geq 2$ in the while-loop.)
Hence, $\bd D$ intersects $\union(L_{j-1})$.
Let $p\in \bd D \cap \union(L_{j-1})$ and let $f$ be the face of $\stackpir(\D)$ that contains~$p$.
Observe that $f\in F_j$. Moreover, $D\in K_f$ and $D$ intersects $\union(L_{j-1})\cap f$,
which finishes the proof.
\end{proof}
Before we dive into the details of the algorithm,
we first make an observation that will be crucial to bound the overall running time.
\begin{observation} \label{obs:crucial}
Any face $f\in\stackpir(\D)$ appears in at most three of the sets~$F_j$ that are created
in Step~\ref{step:Fj} of \autoref{alg:sssp}.
\end{observation}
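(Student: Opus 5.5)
Let $f$ be a face of $\stackpir(\D)$ and let $D^*:=\owner(f)$, so $f\subseteq D^*$ by \autoref{obs:simplify}(iii). The plan is to show that if $f\in F_j$, then the distance from $\dsource$ to $D^*$ is pinned down to within a window of three consecutive values of~$j$. Since $F_j$ is only created for $j\geq 2$ and the levels $L_0,L_1,\ldots$ are disjoint, this gives the bound.

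First, suppose $f\in F_j$, so $f$ intersects $\union(L_{j-1})$. Then some object $D\in L_{j-1}$ intersects~$f$. Because $f\subseteq D^*$, the object $D$ also intersects~$D^*$, so $D$ and $D^*$ are adjacent in $\ig[\D]$ (or equal).

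Next, write $d^*$ for the distance from $\dsource$ to $D^*$. Adjacency of $D$ and $D^*$ gives $|d^*-(j-1)|\leq 1$, that is, $j-1\in\{d^*-1,d^*,d^*+1\}$. Thus $j\in\{d^*,d^*+1,d^*+2\}$. If $D^*$ is unreachable, then $D$ would be unreachable too, which contradicts $D\in L_{j-1}$; so $f$ then lies in no $F_j$. Hence every $j$ with $f\in F_j$ belongs to a fixed set of three consecutive integers, determined by $f$ alone, and $f$ appears in at most three sets~$F_j$.

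The only point needing care is that the loop computes $F_j$ from the true level sets $L_{j-1}$. I would rely on \autoref{lem:correctness} for this, since it guarantees that the sets $L_{j-1}$ used in Step~\ref{step:Fj} are exactly the BFS levels. I also need that a face intersecting $\union(L_{j-1})$ really forces an intersection with a single object of $L_{j-1}$, which is immediate because the union is taken over the objects. I do not expect a real obstacle here; the observation is essentially the triangle inequality combined with the owner property of the simplified stacking.
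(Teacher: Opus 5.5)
Your proof is correct and follows essentially the paper's argument: both use $f\subseteq\owner(f)$, so any object of $L_{j-1}$ meeting $f$ is adjacent to or equal to $\owner(f)$, which confines $j$ to three consecutive values. The only difference is cosmetic. The paper anchors the window at the smallest $j^*$ with $f\in F_{j^*}$ and deduces $\owner(f)\in L_{j^*-1}\cup L_{j^*}$, whereas you anchor it directly at the BFS level of $\owner(f)$.
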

\begin{proof}
Let $j^*$ be the smallest index such that $f\in F_{j^*}$. Then $\owner(f)\in L_{j^*-1}\cup L_{j^*}$
and any object intersecting~$f$ is in $L_{\leq (j^*+1)}$. Hence,
$\union(L_{j})$ cannot intersect~$f$ for $j>j^*+1$.
\end{proof}
We also need the following folklore result on computing the union of a set of objects.
\begin{lemma}\label{lem:union-comp}
The union of a set $\D$ of $n$ constant-complexity objects in the plane with union
complexity $U(n)$ can be computed in $O(U(n)\log n)$ expected time.
\end{lemma}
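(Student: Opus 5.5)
The plan is to use randomized incremental construction or, more simply, divide and conquer with a random split, mirroring the stacking algorithm of \autoref{lem:compute-stacking}. I would take the divide-and-conquer route: split $\D$ arbitrarily into two halves $\D_1,\D_2$ of size about $n/2$, recursively compute $\union(\D_1)$ and $\union(\D_2)$ as planar subdivisions (each face labeled as inside or outside the union), and then overlay them.

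For the overlay I would use Mulmuley's algorithm~\cite{DBLP:journals/jacm/Mulmuley91}, which runs in $O((n_1+n_2)\log(n_1+n_2)+k)$ expected time, where $n_i=|\union(\D_i)|$ and $k$ is the number of crossings between the two boundaries. Afterwards we keep only the parts of edges that bound the combined union and merge the faces that lie in $\union(\D)$; this takes time linear in the overlay.

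The key bound is $k=O(U(n))$. Every crossing between $\bd\union(\D_1)$ and $\bd\union(\D_2)$ is a point on the boundary of two objects that lies in no other object of $\D_1\cup\D_2$. Hence it is a vertex of $\bd\union(\D)$, and there are at most $U(n)$ of these. Also $n_i\leq U(n/2)\leq U(n)$, assuming $U$ is monotone. So the merge costs $O(U(n)\log n)$ expected time.

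The recurrence $T(n)=2T(n/2)+O(U(n)\log n)$ then gives only $O(U(n)\log^2 n)$. The main obstacle is shaving this extra logarithm. The first thing I would try is to make the merge cost $O(U(n)+k)$ plus point-location overhead charged once, for example by merging along a sweep. If that fails, I would switch to a randomized incremental construction of the union (adding objects in random order, maintaining the vertical decomposition of the complement of the union with conflict lists). The standard Clarkson--Shor analysis gives $O(\sum_i U(i)/i^2 \cdot n)=O(U(n)\log n)$ expected time, using that the expected number of trapezoids created at step~$i$ is $O(U(i)/i)$ and that $U(i)/i$ is nondecreasing. Since the statement is folklore, citing the RIC bound, e.g.\ via the analysis of Agarwal--Pach--Sharir style union algorithms, would complete the proof.
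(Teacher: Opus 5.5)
Your final argument, a randomized incremental construction of the vertical decomposition of the complement of the union analyzed with Clarkson--Shor conflict lists to get $O(n\sum_i U(i)/i^2)=O(U(n)\log n)$, is exactly the paper's proof: it also invokes the standard RIC framework, noting that each trapezoid is defined by $O(1)$ objects and survives iff no object meets its interior. You can drop the divide-and-conquer detour, which you correctly observe gives only $O(U(n)\log^2 n)$, together with the unspecified sweep-based merge idea.
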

\begin{proof}
This can be done using a standard randomized incremental (RIC)
algorithm~\cite{DBLP:journals/dcg/BoissonnatDSTY92,DBLP:journals/dcg/ClarksonS89,DBLP:books/daglib/0077673} that computes
a vertical decomposition of the complement of the union. This problem fits
into the RIC framework---any ``trapezoid'' in the vertical decomposition of the complement 
is defined by $O(1)$ objects, and a trapezoid appears in the final vertical
decomposition iff its interior is not intersected by any of the objects---and
so the RIC algorithm can be implemented to run in $O(U(n)\log n)$
expected time. (One easily checks that the so-called update conditions are satisfied.)
\end{proof}

\subparagraph{Step~\ref{step:loop-start}: Computing~$\union(L_{j-1})$.}
Computing $\union(L_{j-1})$ in Step~\ref{step:loop-start}
can be done in $O(U(|L_{j-1}|)\log n)$ expected 
time\footnote{Here and in the following
we write $\log n$ instead of, for example, $\log |U(|L_j|)$; this is justified
since the arguments of the $\log$-function will always be polynomial in~$n$,
and it simplifies the expressions.} by \autoref{lem:union-comp}.
Preprocessing $\union(L_{j-1})$ for $O(\log n)$-time point-location 
queries can be done in $O(U(|L_{j-1}|) \log n)$ time~\cite{bcko-cgaa-08}.
Thus, the expected time for Step~\ref{step:loop-start} is $O(U(|L_{j-1}|)\log n)$.

\subparagraph{Step~\ref{step:Fj}: Finding the faces that intersect~$\union(L_{j-1})$.}
Let $\Finj$ be the set of faces
that are contained in $\intr(\union(L_{j-1}))$ and let $\Fxj$ be the set of faces
that intersect~$\bd\hspace{0.2mm}\union(L_{j-1})$. 
Thus, $F_j = \Finj \cup \Fxj$. Let $\bar{F}_j$ denote the set
of faces $f$ whose conflict list~$K_f$ contains at least one object from~$L_{j-1}$.
We can assume that for each object $D$ we have maintained (during the construction of the conflict lists)
a list containing all faces in whose conflict list $D$ participates.
Using these lists, we can construct~$\bar{F}_j$
in $O(\sum_{D\in L_{j-1}} k_D)$ time, where $k_D$ is the number of conflict lists containing~$D$.
The set $\bar{F}_j$ contains all faces from $\Fxj$ plus, possibly, 
some faces from~$\Finj$.

Let $\G^*$ be the dual graph of $\stackpir(\D)$; to simplify the presentation,
we will not distinguish between the nodes in $\G^*$ and the faces of~$\stackpir(\D)$.
We do a multi-source DFS on $\G^*$,
which will visit exactly those faces that intersect~$\union(L_{j-1})$;
thus, it is not a complete DFS that visits all faces.
The source nodes for the DFS are the faces in $\bar{F}_j$.
Suppose that during the DFS we are visiting a face~$f$. 
For each edge $e$ of $f$, we must then inspect the face $f_e$ on
the opposite side of~$e$ and decide if we should visit~$f_e$. We do \emph{not} visit~$f_e$
if either (i) $f_e$ has already been visited,
or (ii) $e$ is contained in an edge of $\union(L_{j-1})$.
Observe that $f_e$ may intersect $\union(L_{j-1})$ even when Condition~(ii) holds; see \autoref{fig:dfs}(i) for an example.
But then $f_e\in \Fxj$, so this is not a problem.
\begin{figure}
    \centering
    \includegraphics{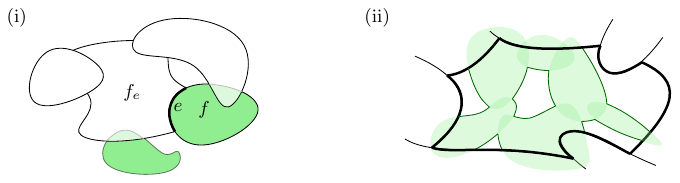}
    \caption{(i) The green objects form~$L_{j-1}$. Face $f_e$ is not visited through~$e$, but it is in $\Fxj$. \\
                (ii) The subdivision $\A_f$ of a face~$f$ (with $\bd f$ shown bold) induced 
                     by $\bd \union(L_{j-1}) \cap f$ (with $\union(L_{j-1})$ shown in green).
                     $\A_f$ has six regions, one of which is fully covered by $\union(L_{j-1})$.
                     The regions disjoint from $\union(L_{j-1})$ are all simply connected.}
    \label{fig:dfs}
\end{figure}

Condition~(i) is trivial to test.
Condition~(ii) can be tested in $O(1)$ time after performing a point-location query
with a point $p\in e$ in $\union(L_{j-1})$: if the point location
tells us that $p$ lies in the interior of $\union(L_{j-1})$ then we know that
$e$ is not contained in an edge of $\union(L_{j-1})$, and if the point location
tells us that $p\in e'$ for some edge~$e'$ of $\union(L_{j-1})$ then we only
need to check whether $e\subseteq e'$. Thus, the DFS takes time
$O(\sum_{f\in F_j} \deg(f)\cdot\log n)$, where $\deg(f)$ is the degree of~$f$ in~$\G^*$.
\begin{lemma} \label{lem:intersection-in-face}
Step~\ref{step:Fj} is correct and runs in time
$
O\left( \sum_{D\in L_{j-1}} k_D + \sum_{f\in F_j} \deg(f)\cdot\log n \right),
$
where $k_D$ is the number of conflict lists 
containing~$D$ and $\deg(f)$ is the degree of~$f$ in the dual graph of~$\stackpir(\D)$.
\end{lemma}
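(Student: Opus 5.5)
Correctness and running time will be handled separately.

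\emph{Correctness.} Two things must be shown: every face the DFS visits intersects $\union(L_{j-1})$, and every face of $F_j=\Finj\cup\Fxj$ is visited.

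For the first, I would use the conflict lists. A face $f\in\bar F_j$ has an object $D\in L_{j-1}$ with $\bd D\cap f\neq\emptyset$, so $f$ meets $\union(L_{j-1})$. For a non-source visit, the DFS moves from a visited face $f$ across an edge $e$ that is not contained in an edge of $\union(L_{j-1})$. I would show by induction that in this case $f_e$ meets $\union(L_{j-1})$. Since $f$ meets the union, either $f$ lies in $\intr(\union(L_{j-1}))$, or $\bd\,\union(L_{j-1})$ passes through $f$. In the first case $e\subset \union(L_{j-1})$, so $f_e$ intersects the union. The second case is delicate: $e$ could lie outside the union, and then crossing it is wasteful. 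To repair this, I would restrict the traversal to edges $e$ that intersect $\union(L_{j-1})$; this is checkable with the same point-location query. Faces wrongly entered would be discarded after checking their boundary against the union, at a cost charged to $\deg(f)$.

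For completeness, every face of $\Fxj$ is a source. If $\bd\,\union(L_{j-1})$ meets $f$, then some $\bd D$ with $D\in L_{j-1}$ meets $f$, so $D\in K_f$ and $f\in\bar F_j$. Now take $g\in\Finj$. Consider the connected component $C$ of $\intr(\union(L_{j-1}))$ containing $g$, and the faces of $\stackpir(\D)$ inside $C$, which are adjacent via dual edges crossing $C$. Some face meeting $C$ also meets $\bd C$; otherwise $C$ would be a union of whole faces, which is impossible because $\bd C$ consists of object boundary arcs of $L_{j-1}$ and these are conflicts of the faces containing them. That face is in $\bar F_j$. A path of faces inside $C$ connects it to $g$, and each dual edge on the path is an edge $e$ interior to $C$, hence not contained in $\bd\,\union(L_{j-1})$. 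The DFS therefore traverses the path and reaches $g$.

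\emph{Running time.} Building $\bar F_j$ costs $O(\sum_{D\in L_{j-1}}k_D)$ using the per-object lists. Each visited face $f$ lies in $F_j$, and each of its $\deg(f)$ incident edges costs one $O(\log n)$ point-location query plus $O(1)$ work. Marking prevents revisits, which gives the stated bound.

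The main obstacle is the first correctness claim: ensuring the DFS never strays into faces disjoint from $\union(L_{j-1})$. I would settle it with the refined edge test above, accepting edges $e$ with $e\cap \intr(\union(L_{j-1}))\neq\emptyset$.
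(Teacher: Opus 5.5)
Your completeness argument is the paper's. Every face of $\Fxj$ is a source, because a point of $\bd\,\union(L_{j-1})$ lies on $\bd D$ for some $D\in L_{j-1}$. A face of $\Finj$ is reached along a chain of faces of $\Finj$ that ends in a face of $\Fxj$, and all dual edges of this chain lie in $\intr(\union(L_{j-1}))$. What you add is soundness, which the paper's proof does not address: it only shows that every face of $F_j$ is visited and takes the time bound from the preceding description. You are right that soundness needs an argument. Read literally, rule~(ii) lets the search cross an edge of a face in $\Fxj$ that lies entirely outside $\union(L_{j-1})$. From there it can flood faces disjoint from the union, and the bound in terms of $\sum_{f\in F_j}\deg(f)\log n$ would no longer hold.

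Your repair is right in spirit but stated imprecisely. One point-location query cannot decide whether $e$ meets $\union(L_{j-1})$ or its interior; it only decides whether the chosen test point $p\in e$ does. So use the rule: cross $e$ iff $p\in\intr(\union(L_{j-1}))$.
\begin{itemize}
\item The rule is sound because $p\in e\subset f_e$, so no face is ever entered wrongly. Your discard step can therefore be dropped; its cost would depend on $f_e$ anyway, not on $\deg(f)$.
\item The rule is complete because every edge incident to a face of $\Finj$ lies entirely in $\intr(\union(L_{j-1}))$. An edge that meets $\bd\,\union(L_{j-1})$ puts the face beyond it in $\Fxj$, so that face is a source anyway.
\end{itemize}

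For completeness to go through, pick the target face as the paper does. Follow a curve in $\union(L_{j-1})$ from a point of $g$ to $\bd\,\union(L_{j-1})$, and stop at the first face along it that is not in $\Finj$. Your write-up instead takes an arbitrary face meeting $\bd C$. Such a face need not be joined to $g$ through faces contained in $C$, and an edge only partly inside $C$ may fail the test-point check. Stopping at the first non-$\Finj$ face guarantees that every edge on the path is incident to a face of $\Finj$, so every such edge passes the test.
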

\begin{proof}
The running time of our algorithms follows from the discussion above.
It remains to argue that the algorithm is correct, that is, that all faces in $F_j$
are visited by the DFS. To see this, first observe that any face from $\Fxj$
is a source of the DFS. Next, observe that any face $f\in\Finj$ has a path in $\G^*$
to a face $f'\in\Fxj$ whose internal nodes correspond to faces in~$\Finj$.
(Simply follow a curve $\gamma\subset \union(L_{j-1})$ from any point $p\in f$ 
to a point on $\bd\hspace{0.2mm} \union(L_{j-1})$, and take $f'$ to be the first
face along $\gamma$ that is in $\Fxj$. This face exists since the last
face is in~$\Fxj$.)
Moreover, all edges incident to a face in $\Finj$ satisfy Condition~(ii).
Hence, we can conclude that the DFS indeed visits any face $f\in F_j$.
\end{proof}

\subparagraph{Step~\ref{step:Lj}: Finding the candidates that intersect~$\union(L_{j-1})\cap f$.}
We start by giving a short overview of Step~\ref{step:Lj}, and then explain it in detail.

We first construct the subdivision~$\A_f$ of $f$
induced by $\bd\hspace{0.2mm}\union(L_{j-1}) \cap f$; see \autoref{fig:dfs}(ii). 
Each region\footnote{We refer to the faces of $\A_f$ as \emph{regions} to avoid
confusion with the faces of $\stackpir(\D)$.}  
of $\A_f$ is contained in $\union(L_{j-1})$ or is disjoint from it. 
We also compute the set $X_f$ of intersection points between the
boundary arcs of the objects in~$\C_j(f)$ and~$\bd f$, using our red-blue
intersection algorithm. 
For each region $R\in \A_f$, we create a set $\D(R)\subseteq \C_j(f)$ of objects, as follows.
If $D\subset f$, then we take an arbitrary point $p_D\in \bd D\cap f$ and we put $D$ into the set $\D(R)$
of the region $R$ containing~$p_D$.
If $D\not\subset f$, then we put $D$ into the set $\D(R)$ of each region~$R$ 
such that $\bd D$ intersects $\bd R\cap \bd f$. Each such intersection is in $X_f$, 
so we already computed the necessary information.
Observe that if a region $R\in\A_f$ is contained in $\union(L_{j-1})$,
then obviously all objects in $\D(R)$ intersect~$\union(L_{j-1})$.
To handle the regions $R$ disjoint from $\union(L_{j-1})$, we run 
(for each region $R$ separately) our red-blue intersection algorithm 
on the red region~$R$ and the blue boundary arcs of the
objects in $\D(R)$. Whenever we detect an intersection between
a boundary arc of an object~$D$ and a boundary arc of $R$ that
borders a region~$R'$ that is contained in~$\union(L_{j-1})$,
we report that~$D$ intersects $\union(L_{j-1})$. We also remove $D$ from consideration,
to avoid that we report it too often.

The reason why this is correct is as follows. Suppose $D$ intersect~$\union(L_{j-1})\cap f$.
If $D\subset f$ and $p_D$ is in a region $R$ that is disjoint from $\union(L_{j-1})$, 
then clearly $\bd D$ must cross the boundary between $R$ and a neighboring region~$R'$
of $\A_f$, otherwise $D\subset R$ and $D$ cannot intersect~$\union(L_{j-1})\cap f$;
see \autoref{fig:intersection-conditions}(i).
If $D\not \subset f$ then $D$ enters one or more regions of $\A_f$
from outside~$f$;
see \autoref{fig:intersection-conditions}(ii).
\begin{figure}
    \centering
    \includegraphics{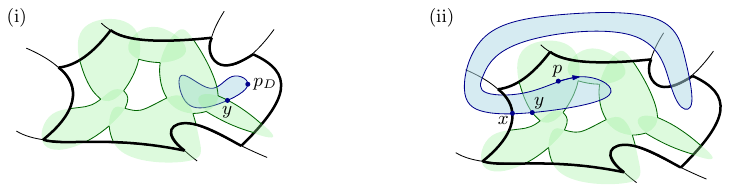}
    \caption{Illustration of how the subdivision $\A_f$ is used in Step~\ref{step:Lj} 
             to test whether an object $D \in K_f$ intersect $\union(L_(j-1)) \cap f$. 
             (i) $D$ is contained in~$f$.  (ii) $D$ intersects $\bd f$.}
    \label{fig:intersection-conditions}
\end{figure}
If all these regions are disjoint from $\union(L_{j-1})$ then, in order
to intersect~$\union(L_{j-1})$, $\bd D$ must cross the boundary between 
one such region~$R$ and a neighboring region~$R'$. Hence, all types of intersections are detected.

Recall that our red-blue algorithm required the red subdivision to be connected.
Thus, we must also argue that each region $R\in \A_f$ that is disjoint from~$\union(L_{j-1})$
is simply connected. To prove this, we use that the faces in a simplified stacking 
are simply connected, and that $\union(L_{<j})$ is connected.

To analyze the running time, we observe that (up to logarithmic overhead terms)
we can charge all the work to 
(i) the size of $\C_j(f)$, which is $O(|K_f|)$, and 
(ii) intersections between objects boundaries and face boundaries, of which there are $O(|K^*_f|)$, and
(iii) the complexity of $\A_f$, and
(iv) the complexity of~$f$. 
    
We now explain and analyze Step~\ref{step:Lj} in detail. 
\medskip

First, we construct~$\union(L_{j-1}) \cap f$, as follows. 
Let $L_{j-1}(f) := L_{j-1} \cap K_f$ be the set of objects in $L_{j-1}$ 
whose boundary intersects~$f$. We can obtain $L_{j-1}(f)$ in $O(|K_f|)$ time
by checking for each object in $K_f$ whether it is in $L_{j-1}$.
Next, we compute $\union(L_{j-1}(f))$ in 
$O(U(|L_{j-1}(f)|)\cdot\\log n)$ time by \autoref{lem:union-comp}.
We clip this union to within~$f$---that is,
we construct $\union(L_{j-1}(f)) \cap f$. Note that each intersection of
$\bd f$ and $\bd\hspace{0.2mm} \union(L_{j-1})$ contributes a point to~$K^*_f$, so 
we can do the clipping in $O( (U(|L_{j-1}(f)|) + |f| +|K^*_f|)\log n)$ time using a simple plane sweep algorithm.
Observe that $\union(L_{j-1}) \cap f = \union(L_{j-1}(f))\cap f$ unless there is an object
in $L_{j-1}$ that fully contains~$f$. We can test the latter in $O(\log n)$ time
by taking any point~$p\in f\setminus \union(L_{j-1}(f))$ and check if $p\in \union(L_{j-1})$
using point location.
(Recall that we already preprocessed $\union(L_{j-1})$ for point location queries in Steps~\ref{step:loop-start}.)
So far, the time we spent is
\begin{equation} \label{time1}
O\big( |K_f| + U(|L_{j-1}(f)|)\cdot\log n + (|K^*_f|+|f|)\log n \big) = 
O\big(  ( U(|K_f|) + |K^*_f|+|f|)\log n  \big).
\end{equation}
If $\union(L_{j-1})\cap f=f$ then all objects in $\C_{j}(f)$ intersect $\union(L_{j-1})$
and we are done. Moreover, when $\owner(f) \in L_{j-2}$ then $\C_j(f) = \emptyset$
because any object intersecting $f$ is then in $L_{<j}$. 
From now on, we therefore assume that
\begin{quotation}
 $\union(L_{j-1}) \cap f\neq f$ and $\owner(f) \not\in L_{j-2}$. \hfill $(\ast)$
\end{quotation}
We now need to determine the objects $D\in \C_j(f)$ 
that intersect $\union(L_{j-1})\cap f$. To this end, we first
construct the subdivision of $f$ induced by~$\bd\hspace{0.2mm}\union(L_{j-1})$. 
Given $\union(L_{j-1}) \cap f$, this is easily done in linear time. 
As mentioned earlier, we denote this subdivision by $\A_f$; see \autoref{fig:dfs}(ii).
Each region of $\A_f$ is either fully contained in $\union(L_{j-1})$ or it is
disjoint from it, and we label it accordingly. The subdivision~$\A_f$ will
come into play later.

After constructing~$\A_f$, we compute 
the set of intersections between the boundary arcs of the objects in~$\C_j(f)$ and the arcs comprising~$\bd f$.
We denote this set by~$X_f$. Note that $X_f \subseteq K^*_f$, where
$K^*_f$ is defined as in \autoref{lem:size-of-conflict-sets}.
($K^*_f$ and $X_f$ need not be the same, since
$X_f$ only considers objects $D\in \C_{j}(f)$, while $K^*_f$ considers all objects.) 
Since $f$ is simply connected, we can use \autoref{thm:red-blue} to do this
in $O(\lambda_{s+2}(|\C_j(f)|+|f|+ |K^*_f|) \log n)$ time. 
Since $\C_j(f) \subseteq K_f$, this is bounded by
\begin{equation} \label{time2}
O(\lambda_{s+2}(|K_f|+|f|+|K^*_f|)\log n).
\end{equation}
Now consider an object $D\in \C_j(f)$. We associate a set $\R_f(D)$ of regions 
to~$D$, as follows.\footnote{One can think of $\R_f(D)$ as the inverse of the ``mapping'' 
$\D(R)$ described intuitively above, whose formal definition will follow later.}
Let $X_f(D)$ be the set of intersection points of $\bd D$ with $\bd f$.
Thus, $X_f := \bigcup_D X_f(D)$.
\begin{itemize}
\item If $X_f(D) \neq \emptyset$ then
      we put all the regions of $\A_f$ into $\R_f(D)$ that contain at least one point
      of $X_f(D)$ on their boundary.
\item If $X_f(D)=\emptyset$ then we take an arbitrary point $p_D\in \bd D \cap f$
      and we put the region from $\A_f$ that contains~$p_D$ into~$\R_f(D)$.
\end{itemize}
The time to compute the sets $\R_f(D)$ is subsumed by the time needed to
compute the intersections between the boundary arcs of the objects in~$\C_j(f)$ 
and the arcs comprising~$\bd f$.
(Computing the region $R$ containing $p_D$ when $X_f(D)=\emptyset$ can be done using point location.)
The following lemma shows that 
the set $\R_f(D)$ limits the the number of regions we have to consider in
order to decide whether $D$ intersects $\union(L_{j-1}(f))\cap f$.
\begin{lemma} \label{lem:intersect-conditions}
An object $D\in \C_j(f)$ intersects $\union(L_{j-1}(f))\cap f$ iff either one of the following 
conditions holds:
\begin{enumerate}
\item[(i)] There is a region $R\in\R_f(D)$ that is fully contained in $\union(L_{j-1}(f))$.
\item[(ii)] There is a region 
      $R\in \R_f(D)$ such that $\bd D$ intersects $\bd R$ in a point $y\not\in X_f(D)$.
\end{enumerate}
\end{lemma}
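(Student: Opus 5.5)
The "if" direction is immediate and I would dispose of it first. If (i) holds, take $R\in\R_f(D)$ with $R\subseteq \union(L_{j-1}(f))$. By construction $\R_f(D)$ contains only regions that $D$ meets: a region with a point of $X_f(D)\subseteq \bd D\cap f$ on its boundary, or the region containing $p_D\in \bd D\cap f$. Since $R\subseteq f$, this gives a point of $D\cap \union(L_{j-1}(f))\cap f$.

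If (ii) holds, let $y\in \bd D\cap \bd R$ with $y\notin X_f(D)$. Then $y\notin \bd f$, because every point of $\bd D\cap \bd f$ lies in $X_f(D)$. So $y$ lies on a portion of $\bd R$ inside $\intr(f)$. Such portions come from $\bd\hspace{0.2mm}\union(L_{j-1})\cap f$, and hence $y$ lies in the closed set $\union(L_{j-1})\cap f$. Under $(\ast)$, and since no object of $L_{j-1}$ contains $f$, this set equals $\union(L_{j-1}(f))\cap f$. Thus $y\in D\cap \union(L_{j-1}(f))\cap f$.

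For the "only if" direction, suppose $D$ meets $\union(L_{j-1}(f))\cap f$, and let $q$ be a point of the intersection. I will construct a curve $\gamma\subseteq D\cap f$ from a point lying in some region $R_0\in\R_f(D)$ to $q$. I split into cases according to $X_f(D)$.

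\emph{Case $X_f(D)=\emptyset$.} Because $f$ is connected and $\bd D$ misses $\bd f$, either $D\subseteq \intr(f)$ or $f\subseteq D$. The second option is impossible: in that case $D\in K_f$ would require $D$ to meet $f$ via its boundary, so in fact $D\subset f$. Then $D$ is a topological disk inside $f$, and I connect $p_D$ to $q$ by a curve $\gamma$ inside $D$.

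\emph{Case $X_f(D)\neq\emptyset$.} I take a component $C$ of $D\cap f$ containing $q$ and argue that its boundary reaches a point $x\in\bd D\cap\bd f\subseteq X_f(D)$. This uses that $f$ is simply connected (Lemma~\ref{obs:simplify}) and $D$ is a disk, so that $C$ is bounded by alternating arcs of $\bd D$ and $\bd f$. A degenerate case is $C=f\subseteq D$, which I rule out: since $D\in K_f$, $\bd D$ meets $f$, so $f\not\subseteq\intr D$; and I would want $\bd D\cap \intr(f)\neq\emptyset$ here, which I need to check. I then take $\gamma\subseteq C$ from $x$ to $q$, starting in a region $R_0$ having $x$ on its boundary. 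By definition $R_0\in\R_f(D)$, and I may pick $R_0$ as the region $\gamma$ immediately enters.

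Now I follow $\gamma$. If $R_0\subseteq\union(L_{j-1}(f))$, then (i) holds. Otherwise $R_0$ is disjoint from the union while $q$ is not, so $\gamma$ leaves $R_0$ through a point $z\in\bd R_0\cap\intr(f)$ on an edge of $\bd\hspace{0.2mm}\union(L_{j-1})$; note $z\in D$. The main obstacle is to replace $z$ by a point of $\bd D\cap\bd R_0$. A crossing of $\bd R_0$ by $\gamma\subseteq D$ does not directly give a point of $\bd D$, and $z$ could be interior to $D$. I would instead argue as follows. If $\bd D\cap\bd R_0$ contains a point outside $\bd f$, then (ii) holds. Otherwise the part of $\bd R_0$ inside $\intr(f)$ lies entirely in $\intr(D)$ or entirely outside $D$, since $D$ is closed and each such boundary arc avoids $\bd D$. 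The latter contradicts $z\in D$. In the former case I must show that $D$ already meets the union, and also that some region of $\R_f(D)$ lies in the union. Here I would exploit that $D\notin L_{<j}$ while $\owner(f)$ and the objects of $L_{j-1}$ interact with $f$. I would also use the simple connectivity of the regions disjoint from the union, the claim stated earlier whose proof uses that $\union(L_{<j})$ is connected, to force $\bd D$ to hit $\bd R_0$ inside $f$. I expect this step to require the most care, and I would check it by treating $D\subset f$ and $D\not\subset f$ separately, as in the figure.
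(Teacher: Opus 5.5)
Your ``if'' direction is correct and is essentially the paper's argument. The gap is in the ``only if'' direction, at exactly the step you flag. A curve $\gamma\subseteq D$ that leaves $R_0$ crosses $\bd R_0$ at a point $z$ that need not lie on $\bd D$. The ``former case'' you leave open, where the exit arc of $\bd R_0$ lies in $\intr(D)$, cannot be closed, because the implication is false there.

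Here is a counterexample with four disks:
\begin{itemize}
\item $O$, of radius $10$, centred at the origin;
\item $D$, of radius $3$, centred at $(10,0)$;
\item $D'$, of radius $2.2$, centred at $(12,0)$;
\item $\dsource$, of radius $1$, centred at $(15,0)$.
\end{itemize}
Then $L_0=\{\dsource\}$ and $L_1=\{D'\}$. Put $O$ last in the stacking order, so that $f=O$ is a face with $\owner(f)=O$. For $j=2$ we get $\C_2(f)=\{D,O\}$ and $L_1(f)=\{D'\}$, and $(\ast)$ holds.

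The subdivision $\A_f$ has two regions. One is the lens $K=D'\cap O$, which lies within distance $0.9$ of $(10,0)$ and hence in $\intr(D)$. The other is $R_1=\overline{O\setminus D'}$.

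Both points of $X_f(D)=\bd D\cap\bd O$ have $|y|\approx 2.97$, whereas $K$ meets $\bd O$ only where $|y|\le 0.84$. So $\R_f(D)=\{R_1\}$, and~(i) fails. Also, $\bd D$ meets $\bd D'$ only at $x\approx 12.04$, which is outside $O$. Hence $\bd D\cap\bd R_1=X_f(D)$, and~(ii) fails as well. Yet $K\subset D$, so $D$ intersects $\union(L_1(f))\cap f$.

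Every extra fact you planned to invoke holds in this example: $D\notin L_{<j}$, $\owner(f)\in L_j$, and $R_1$ is simply connected. So no amount of care will close that case.

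The paper does not follow an arbitrary curve in $D$; it walks along $\bd D$ itself.
\begin{itemize}
\item Start from a point $p\in\bd D\cap\union(L_{j-1}(f))\cap f$ and follow $\bd D$ to the first point $x\in X_f(D)$.
\item The region $R$ reached at $x$ lies in $\R_f(D)$. If $R$ is contained in the union, (i) holds.
\item Otherwise, retrace $\bd D$ back towards $p$. It must leave $R$ at a point $y$ of $\bd D$ that is not on $\bd f$, so $y\notin X_f(D)$ and (ii) holds.
\end{itemize}
This way the exit point lies on $\bd D$ automatically, which is the idea your proposal lacks.

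Note, however, that this argument needs $p$ to lie on $\bd D$. The paper's proof assumes this tacitly, and the example above shows that no such $p$ exists in general when only $D$ meets the union. What is actually provable is the ``only if'' direction under the hypothesis that $\bd D$ meets $\union(L_{j-1}(f))\cap f$. That weaker form is all the algorithm needs, since the proof of \autoref{lem:correctness} works with the face containing a point of $\bd D\cap\union(L_{j-1})$.

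For the case $X_f(D)=\emptyset$ the statement does hold, but again not via a curve inside $D$. If $\bd D$ missed $\bd R$, then $\bd D\subset\intr(R)$. Since $R$ is simply connected, this gives $D\subseteq R$, contradicting that $D$ meets the union.
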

\begin{proof} 
($\Longleftarrow$) Any region in $\R_f(D)$ contains a point from $D$ by definition, so 
$D$ intersects $\union(L_{j-1}(f))\cap f$ if Condition~(i) is met.
Furthermore, the point $y$ in Condition~(ii) must be on
a part of $\bd R$ that separates $R$ from a region fully contained 
in $\union(L_{j-1}(f))$, since $y\notin X_f(D)$ implies $y\not\in\bd f$.
Hence, $D\in C_j(f)$ intersects $\union(L_{j-1}(f))\cap f$ in this case as well.
\medskip

\noindent ($\Longrightarrow$) Suppose Condition~(i) is not satisfied.
We must show that then Condition~(ii) holds.
\begin{itemize}
\item If $X_f(D) \neq \emptyset$ then consider a point~$p\in D \cap \big( \union(L_{j-1}(f))\cap f \big)$.
      Starting at $p$, follow $\bd D$ (clockwise around $D$, say) until a point $x\in X_f(D)$ is reached,
      as in \autoref{fig:intersection-conditions}(ii).
      This point $x$ lies on the boundary of a region~$R$ that is disjoint from~$\union(L_{j-1}(f))$,
      otherwise Condition~(i) would be satisfied. Tracing $\bd D$ backwards
      towards~$p$, we must leave $R$ in a point $y\not\in X_f(D)$.
\item If $X_f(D) = \emptyset$ then the region $R$ containing the point~$p_D$
      referred to in the definition of $X_f(D)$ must be disjoint from $\union(L_{j-1}(f))$, since
      Condition~(i) is not satisfied. Since $\bd D$ does not intersect $\bd f$,
      this implies that $\bd D$ must intersect $\bd R$ in some point $y\not\in X_f(D)$,
      otherwise $D\subset R$ and $D$ cannot intersect $\union(L_{j-1}(f))$;
      see \autoref{fig:intersection-conditions}(i).
\end{itemize}
We conclude that Condition~(ii) holds in both cases, which finishes the proof.
\end{proof}
To find the objects $D\in \C_j(f)$ intersecting $\union(L_{j-1}(f))\cap f$, we must thus
determine which objects satisfy either of the Conditions~(i) or~(ii).

Condition (i) is trivial to test. To deal with Condition~(ii) we proceed as follows.
For each region~$R\in\A_f$ that is disjoint from~$L_{j-1}(f)$,
let $\D(R)$ be the set of objects $D\in \C_j(f)$ such that $R\in\R_f(D)$.
We must decide for each $D\in\D(R)$ whether $\bd D$ intersects $\bd R$
in a point $y\not\in X_f(D)$.
The crucial property ensuring we can do this efficiently, is the following.
\begin{lemma}
Any region $R\in\A_f$ that is disjoint from $\union(L_{j-1})$ is simply connected.  
\end{lemma}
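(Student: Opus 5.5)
Suppose, for contradiction, that a region $R\in\A_f$ disjoint from $\union(L_{j-1})$ is not simply connected. Then some closed curve $\gamma\subset\intr(R)$ is not contractible in~$R$, so the bounded component $B$ of $\Reals^2\setminus\gamma$ contains a point $q\notin R$. Since $\gamma\subset f$ and $f$ is simply connected (\autoref{obs:simplify}), we have $B\subseteq f$. Hence $q\in f\setminus R$ lies in another region of $\A_f$.

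The first step is to produce a point of $\union(L_{j-1})$ inside~$B$. The boundary of $R$ within $f$ consists of arcs of $\bd\,\union(L_{j-1})$, and $\A_f$ is obtained by cutting $f$ only along such arcs. Because $\gamma$ encloses $q$ and separates it from $R$'s outer part, the boundary component of $R$ lying inside $B$ is made of arcs of $\bd\,\union(L_{j-1})$. Points just across these arcs lie in $\union(L_{j-1})$. So $B$ contains a point $p\in D$ for some $D\in L_{j-1}$.

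The second step uses connectivity. The set $\union(L_{<j})=\union(L_{\le j-1})$ is connected, since every object in it is linked to $\dsource$ through a chain of pairwise intersecting objects. It also contains $\dsource$, and we claim it is not contained in~$B$. By $(\ast)$, $\owner(f)\notin L_{j-2}$. If $\owner(f)$ were in $L_{<j-1}$, it would be in $L_{\le j-3}$, and then no object of $L_{j-1}$ could meet $f$, which contradicts $f\in F_j$. So $\owner(f)\notin L_{<j-1}$. Hence either $\owner(f)\in L_{j-1}$, which is impossible because $f\not\subseteq\union(L_{j-1})$ by $(\ast)$, or $\owner(f)\in L_{\ge j}$.

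To show $\union(L_{<j})\not\subseteq B$, note that $j\ge2$ means $L_{<j}$ contains $\dsource$ and objects at distance $\le j-2$. None of these can meet $f$, because every object meeting $f$ meets $\owner(f)$ and so has distance $\ge j-1$. In particular $\dsource\cap B\subseteq\dsource\cap f=\emptyset$. So the connected set $\union(L_{<j})$ contains $p\in B$ and points of $\dsource$ outside~$B$, and therefore it crosses $\gamma\subset R$.

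The final step derives the contradiction, and this is where care is needed. A point of $\gamma$ lying in $\union(L_{<j})$ lies in $f\cap\union(L_{<j})$. Objects of $L_{<j-1}$ do not meet $f$, so this point lies in $\union(L_{j-1})$. That contradicts $R\cap\union(L_{j-1})=\emptyset$.

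The main obstacle is the first step: rigorously obtaining a point of $\union(L_{j-1})$ inside $B$. The clean route is to argue that $f\setminus R$ is relatively closed in $f$ and that $B\setminus R$ is nonempty. Then the region of $\A_f$ containing $q$ either lies in $\union(L_{j-1})$, or is separated from $R$ by arcs of $\bd\,\union(L_{j-1})$ lying inside~$B$. In both cases $B$ meets $\union(L_{j-1})$.
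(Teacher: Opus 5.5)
Your proof is correct and follows essentially the paper's argument. Both proofs use three ingredients: simple connectivity of $f$ forces a hole of $R$ to contain part of $\union(L_{j-1})$; the BFS-layer structure then drags an earlier-layer object into $f$ or $\owner(f)$; and the cases for $\owner(f)$ are excluded by $(\ast)$ and $f\in F_j$. The only difference is the order: the paper takes a single neighbor $D\in L_{j-2}$ meeting the hole and uses simple connectivity of $\owner(f)$ to get $\owner(f)\in L_{\leq j-1}$, whereas you first pin down $\owner(f)\in L_j$ and then use connectivity of $\union(L_{<j})$ to force a crossing of $\gamma\subset R$ (an ingredient the paper's overview also cites).
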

\begin{proof}
Suppose for a contradiction that $R$ contains a hole~$h$. Since $f$ itself
is simply connected by the properties of a simplified stacking,
this implies that $h$ is at least partially contained in $\union(L_{j-1})$. (We say ``at least partially''
because $h$ might have a hole itself.) Since $j\geq 2$, there must then be an object
$D\in L_{j-2}$ that intersects~$h$. Since $\owner(f)$ is simply connected,
this implies that $\owner(f)$ intersects $D$, and so $\owner(f) \in L_{\leq j-1}$.
We cannot have $\owner(f)\in L_{<j-2}$, since then $\union(L_{j-1})$ cannot intersect~$f$.
Furthermore, $\owner(f)\in L_{j-2}$ directly contradicts Assumption~$(\ast)$.
Finally, $\owner(f)\in L_{j-1}$ contradicts Assumption~$(\ast)$ as well, since
$\owner(f)\in L_{j-1}$ implies that $\union(L_{j-1}) \cap f = f$.
\end{proof}
The above lemma implies that we can compute the intersections between
the boundary arcs of the objects in $\D(R)$ and $\bd R$ by applying the 
red-blue intersection algorithm from \autoref{thm:red-blue} 
to~$R$ and the set of arcs forming the boundaries of the objects in~$\D(R)$. 
This takes $O(\lambda_{s+2}(|R|+|\D(R)|+k_R) \log n)$ time,
where $|R|$ is the complexity of the region~$R$ and $k_R$ is the number of reported intersections.
In principle, an arc $\beta\subset \bd D$ can intersect $\bd R$ in too many points. 
(The intersections of $\beta$ with $\bd R \cap \bd f$ can be bounded 
using \autoref{lem:size-of-conflict-sets}, but for the intersections 
of $\beta$ with $\bd R\setminus \bd f$ this is not the case.)
Fortunately, this is not a problem: as soon as the plane-sweep algorithm 
from \autoref{thm:red-blue} detects an intersection~$y$
between some boundary arc $\beta\subset \bd D$ and $\bd R\setminus \bd f$, it can act as 
if $y$ is the right endpoint of~$\beta$ and delete~$\beta$. This is allowed since we now know
that Condition~(ii) holds for~$D$. 
Hence, $k_R$ is bounded
by the number of intersections of boundary arcs with $\bd R \cap \bd f$,
which is $\sum_{D\in \C_j(f)} |X_f(D)\cap R|$, plus~1. Recall that
$X_f = \bigcup_D X_f(D)$. Hence, the time needed to handle $R$ is
\[
O(\lambda_{s+2}(|R|+|\D(R)|+|X_f\cap R|) \log n).
\]
To obtain the total time over all regions $R$, we first observe that
$\sum_{R\in \A_f} |R| = O(|\A_f|)$. Furthermore, 
\[
\sum_{R\in \A_f} |\D(R)| \leq \sum_{D\in \C_j(f)} \left( |X_f(D)|+1 \right) = O(|K^*_f| + n_f ),
\]
where $n_f$ is the number of objects $D\in \C_j(f)$ that are fully contained in~$f$.
Finally, we note that $\bigcup_{R\in \A_f} |X_f\cap R| =O(|X_f|)=O(|K^*_f|)$.
Hence, the total time we spend over all regions~$R\in \A_f$ is
\begin{equation} \label{time3}
O(\lambda_{s+2}(|A_f|+|K^*_f|+n_f) \cdot \log n).
\end{equation}
By adding the bounds in (\ref{time1}), (\ref{time2}), and (\ref{time3}), we obtain the following result. 
\begin{lemma} \label{lem:find-intersections}
Step~\ref{step:Lj} runs in 
\[
O\big(U(|K_f|)\log n\big) + O\big( \lambda_{s+2}(|A_f|+|K_f|+|K^*_f|+n_f) \cdot \log n \big)  + O\big(|f| \log n\big))
\]
time per face $f\in F_j$.
\end{lemma}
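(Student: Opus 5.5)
The lemma is essentially a bookkeeping statement: Step~\ref{step:Lj} has already been split into three phases, and I will bound each one by one of (\ref{time1}), (\ref{time2}), and (\ref{time3}), then add them. Three tasks remain. First, collect the three bounds. Second, account for the cost of the remaining sub-steps that are not explicitly inside those bounds. Third, rewrite the sum in the form stated.

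The phases are as follows.
\begin{enumerate}[(a)]
\item Compute $L_{j-1}(f)$, $\union(L_{j-1}(f))$, clip it to $f$, and perform the single point-location test. By (\ref{time1}) this costs $O((U(|K_f|)+|K^*_f|+|f|)\log n)$. Here I use that $U$ is monotone and $|L_{j-1}(f)|\le|K_f|$.
\item Construct $\A_f$ and label its regions. This is linear in the size of $\union(L_{j-1})\cap f$ together with $\bd f$, so it costs $O(|\A_f|)$. This is absorbed, since $|\A_f|\le \lambda_{s+2}(|\A_f|)$.
\item Compute $X_f$ with \autoref{thm:red-blue}, which is allowed because $f$ is simply connected. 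This gives (\ref{time2}). The sets $\R_f(D)$ come along with it, plus one point location per $D$ with $X_f(D)=\emptyset$, costing $O(|K_f|\log n)$ in total.
\item Run the per-region red-blue computations together with the early-deletion trick. By (\ref{time3}) this costs $O(\lambda_{s+2}(|\A_f|+|K^*_f|+n_f)\log n)$.
\item Test Condition~(i) for each $D$ by scanning $\R_f(D)$. The total cost is $\sum_D |\R_f(D)| = O(|K^*_f|+|K_f|)$.
\end{enumerate}

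Correctness is not at issue here, since it was handled by \autoref{lem:intersect-conditions} and the simple-connectivity lemma. The statement only concerns running time. I also need to cover the trivial exits: the case where $\union(L_{j-1})\cap f=f$, and the case where $\owner(f)\in L_{j-2}$. Both cost $O(|K_f|)$ after phase~(a), and this is dominated.

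For the summation I use that $\lambda_{s+2}$ is superadditive, or at least monotone with $\lambda_{s+2}(a)+\lambda_{s+2}(b)\le\lambda_{s+2}(a+b)$ and $\lambda_{s+2}(m)\ge m$. This lets me merge (\ref{time2}) and (\ref{time3}) into the single term $O(\lambda_{s+2}(|\A_f|+|K_f|+|K^*_f|+n_f)\log n)$. The linear terms $|K_f|$ and $|K^*_f|$ from (a), (c), and (e) are absorbed into it. The $|f|$ inside (\ref{time2}) is either absorbed into $|\A_f|$, because $\A_f$ refines $f$ so $|f|\le|\A_f|$, or kept as the separate $O(|f|\log n)$ term. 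The $U(|K_f|)\log n$ term from (a) stays separate.

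The main subtle point I anticipate is the superadditivity of $\lambda_{s+2}$, which is standard for Davenport--Schinzel sequences since concatenating sequences on disjoint symbol sets works. The other point to check is that no stray term, in particular the point-location costs, exceeds the stated bound. Each point-location cost is $O(\log n)$ per object or per region, and so is absorbed.
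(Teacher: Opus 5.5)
Your proposal is correct and follows the paper's proof, which consists only of adding the bounds (\ref{time1}), (\ref{time2}), and (\ref{time3}); the extra sub-steps you account for (building $\A_f$, the point locations, testing Condition~(i), and the trivial exits) are indeed absorbed. One small remark: merging the two $\lambda_{s+2}$ terms needs only monotonicity of $\lambda_{s+2}$ together with $|f|=O(|\A_f|)$, so superadditivity is not the key point here; your alternative of keeping the $|f|$ from (\ref{time2}) as a separate $O(|f|\log n)$ term does not work, because in (\ref{time2}) the quantity $|f|$ sits inside $\lambda_{s+2}(\cdot)$.
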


\subparagraph{Putting everything together.}
To analyze the running time of the whole \sssp algorithm, we need to sum the 
the running times we obtained above for the various steps in the while-loop
over all the iterations of the loop. Using that \autoref{obs:crucial} 
implies that summing over all iterations does not blow up the running time,  
we obtain our main result, which we repeat here for convenience.
\maintheorem*
\begin{proof}
The correctness of our algorithm was proved in \autoref{lem:correctness}. We now
analyze the total running time.

The expected time for Step~\ref{step:init} is $O(\lambda_{s+2}(U(n)) \log^2 n )$ 
by \autoref{thm:conflict-list-computation}. The total time for Step~\ref{step:loop-start}
over all iterations is $\sum_j O(U(|L_j|)\cdot  \log n)$, which is bounded by $O(U(n) \log n)$
since the sets $L_j$ are disjoint and $U(\cdot)$ is at least linear. 
By \autoref{lem:intersection-in-face}, the total time for Step~\ref{step:Fj} is
\[
O\left( \sum_j \left\{ \sum_{D\in L_{j-1}} k_D + \sum_{f\in F_j} \deg(f)\cdot\log n \right\} \right),
\]
where $k_D$ is the number of conflict lists containing~$D$ and $\deg(f)$ is the degree of~$f$ in~$\G^*$.
Note that $\sum_j \sum_{D\in L_{j-1}} k_D = \sum_f |K_f| = O(U(n)\log n)$. Moreover,
$\sum_j\sum_{f\in F_j} \deg(f) = \sum_{f\in F} \deg(f)$ by \autoref{obs:crucial}, and
$\sum_{f\in F} \deg(f) = O(|F|)$ because $F$ is the set of faces of a planar graph.
We have $|F|=O(U(n))$ by \autoref{lem:size-of-stacking},
so the total time for Step~\ref{step:Fj} over all iterations is~$O(U(n)\log n)$.

It remains to analyze the total time for all Steps~\ref{step:Lj}. By \autoref{lem:find-intersections},
this is
\begin{equation} \label{eq:total}
\sum_j \sum_{f\in F_j}\left\{ O\big(U(|K_f|)\cdot \log n\big) + O\big( \lambda_{s+2}(|A_f|+|K_f|+|K^*_f|+n_f) \cdot \log n \big)  + O\big(|f| \log n\big)) \right\}  
\end{equation}
We have $\sum_j \sum_{f\in F_j} |K_f| = O(\sum_{f\in F} |K_f|)$ by \autoref{obs:crucial},
which is $O(U(n)\log n)$ by \autoref{lem:size-of-conflict-sets}. The same bound holds
for $\sum_j \sum_{f\in F_j} |K^*_f|$. Furthermore, every vertex of $\A_f$ for some $f\in F_j$
is either a vertex of $\union(L_{j-1})$, or a vertex of~$f$, or an intersection
point between $\bd f$ and $\bd D$ for some $D\in L_{j-1}$. Observe that such an intersection
is a point in $K^*_f$. Hence, 
\[
\sum_j \sum_{f\in F_j} |\A_f|= O\left(\sum_j \left\{ U(|L_j|) + \sum_{f\in F_j} (|f|+|K^*_f|)  \right\} \right).
\]
Finally, $\sum_j \sum_{f\in F_j} n_f = O(n)$ since each object can be contained in only
a single face of $f\in F$, and $\sum_j \sum_{f\in F_j} |f| = O(|\stackpir(\D)|) = O(U(n))$.
(Here we again use \autoref{obs:crucial}.) Combining everything, we see that (\ref{eq:total}) is bounded
by $O(\lambda_{s+2}(U(n))\log^2 n )$.

The bounds for pseudodisks and fat objects follow from the fact
that $U(n)=O(n)$ for pseudodisks~\cite{pseudodisk-union-86} and $U(n)=O(n \hspace{0.2mm} 2^{O(\log^* n)})$
for locally fat objects~\cite{AronovBES14}.
\end{proof}

\section{Almost exact diameter}
\label{sec:diamater}
We now turn our attention to the \diamp problem. As explained in the introduction, 
we will use the framework of Chang, Gao and Le~\cite{Chang0024}. This requires two
main ingredients: a near-linear \sssp algorithm and a so-called $r$-clustering.
The former was presented in the previous section, the latter will be presented below.

\subsection{Star-based $r$-clusterings}
Many graph problems can be solved efficiently using a suitable decomposition of the input graph.
For planar graphs, the $r$-divisions introduced by Frederickson~\cite{DBLP:journals/siamcomp/Frederickson87} 
are a widely-used example of such a decomposition.
An \emph{$r$-division} of a planar graph $\G=(V,E)$  is a pair $(\C,S)$
where $\C$ is a collection of (not necessarily disjoint) subsets of $V$, called \emph{regions},
and $S \subseteq V$ is a set of \emph{boundary nodes}, with the following properties.
\begin{enumerate}
\item Every node in $V$ belongs to at least one region $C \in \mathcal{C}$, that is, $\bigcup_{C\in \C} C = V$.
\item Every region $C \in \C$ contains at most $r$ nodes.
\item Let $C^{\circ} := C \setminus S$ be the set of \emph{interior nodes} of a region~$C\in \C$.
    Then a node $v\in C^{\circ}$ does not appear in any other region, and all of its neighbors are contained in $C$.
\end{enumerate}
\begin{lemma}[Lemma~2 from \cite{DBLP:journals/siamcomp/Frederickson87}] \label{lem: planar-r-division}
For any $n$-node planar graph, we can compute in $O(n \log n)$ time an $r$-division consisting of 
$O(n/r)$ regions with $O(\sqrt{r})$ boundary nodes each.
\end{lemma}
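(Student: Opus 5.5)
The statement is Frederickson's classical result, and the plan is to follow his recursive-separator construction, using the Lipton--Tarjan planar separator theorem as a black box. Throughout, a planar graph on $m$ nodes has a separator of $O(\sqrt{m})$ nodes, computable in $O(m)$ time, that splits it into two parts of at most $2m/3$ nodes each with no edges between them. One may assume $\G$ is connected and has bounded degree. This is standard: split high-degree nodes into paths (or triangulate and use the dual-degree trick), which changes the size only by a constant factor and preserves planarity. Boundary nodes of the transformed graph are then mapped back to original nodes at the end.

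\textbf{Phase 1: coarse division.} Recursively apply the separator theorem. At each step, put the separator nodes into $S$, and let each side together with its incident separator nodes form a subproblem. Stop once a piece has at most $r$ nodes. Charging each separator to the piece it splits gives the recurrence $B(m) \le c\sqrt{m} + B(\alpha m) + B((1-\alpha)m)$ with $1/3 \le \alpha \le 2/3$ and $B(m)=0$ for $m \le r$. This solves to $|S| = O(n/\sqrt{r})$ and yields $O(n/r)$ regions of size at most $r$. The recursion has depth $O(\log(n/r))$ and does $O(n)$ work per level, for $O(n\log n)$ time in total. Property~3 holds because a non-separator node lies on exactly one side of each split, and all its neighbours stay in its piece (or are separator nodes copied into it).

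\textbf{Phase 2: fix per-region boundary counts.} Phase~1 bounds only the \emph{total} number of boundary nodes; an individual region may have far more than $O(\sqrt{r})$. For every region $C$ with more than $c'\sqrt{r}$ boundary nodes, use the weighted version of the separator theorem with weight on the boundary nodes. This splits $C$ into pieces each carrying at most $2/3$ of $C$'s boundary weight, adding $O(\sqrt{|C|}) = O(\sqrt{r})$ new boundary nodes per split. Repeat until every region has $O(\sqrt{r})$ boundary nodes.

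\textbf{Main obstacle: counting.} The step that needs care is showing that Phase~2 still gives $O(n/r)$ regions and $O(n/\sqrt{r})$ boundary nodes. Each split triggered in a region with $b > c'\sqrt{r}$ boundary nodes reduces the excess $b - c'\sqrt{r}$ by a constant fraction, because the $O(\sqrt{r})$ newly added separator nodes are dominated by $b$. So the number of extra splits is $O(\sum_C b_C/\sqrt{r})$. Summing $b_C$ over regions counts each boundary node once per region it lies in, which is $O(1)$ times under bounded degree, so $\sum_C b_C = O(n/\sqrt{r})$. Hence there are $O(n/r)$ extra splits, adding $O(n/r)$ regions and $O(n/\sqrt{r})$ boundary nodes, which preserves both bounds. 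Each Phase~2 split costs linear time in a region of size at most $r$, so Phase~2 takes $O(n)$ time and the total is $O(n\log n)$.
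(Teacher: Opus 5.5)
Your plan is correct, and it matches the source: the paper gives no proof of this lemma and imports it from Frederickson, whose argument is the same two phases you describe (recursive separators for pieces of size at most $r$ with $O(n/\sqrt{r})$ total boundary, then separators weighted on boundary nodes to split overloaded regions, with bounded degree so each boundary node lies in $O(1)$ regions). Two details to tighten: the Phase~1 subproblem sizes should include the $O(\sqrt{m})$ separator nodes copied into each side, which does not change the bounds; and in Phase~2 each child's excess shrinks by a constant fraction, but what actually gives $O(\sum_C b_C/\sqrt{r})$ splits is an additive drop, namely that $\sum (b_C - a\sqrt{r})$ over the regions still to be split decreases by $\Omega(\sqrt{r})$ per split for a suitable constant $a$, once $c'$ is a large enough multiple of the separator constant.
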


\noindent Intersection graphs do not admit $r$-divisions with the bounds from
\autoref{lem: planar-r-division} for $r= o(n)$, since they can have arbitrarily large cliques. 
This led Chang, Gao, and Le~\cite{Chang0024} to introduce \emph{clique-based $r$-clusterings},
where the boundary consists of a small number of cliques instead of individual nodes.
We generalize this to \emph{star-based $r$-clusterings}, as defined next.
\medskip

Let $\G=(V,E)$ be a connected graph.
Following Chang, Gao, and Le, we need the regions in our $r$-division of $\G$ to be connected. 
Thus, we define a \emph{cluster} to be a subset $C\subseteq V$ such that 
the induced subgraph~$\G[C]$ is connected, and we use the term \emph{$r$-clustering} 
instead of the term $r$-division. For a node $v\in V$, we define $\mystar(v)$ to be the closed neighborhood of~$v$.
With a slight abuse of terminology (since there can be edges between the neighbors
of~$v$) we refer to $\mystar(v)$ as a \emph{star}. 
Our clustering is \emph{star-based}, meaning that the cluster boundaries 
consist of stars. It is defined as 
follows; see \autoref{fig:star-based-clustering}.
\begin{figure}
\centering
\includegraphics{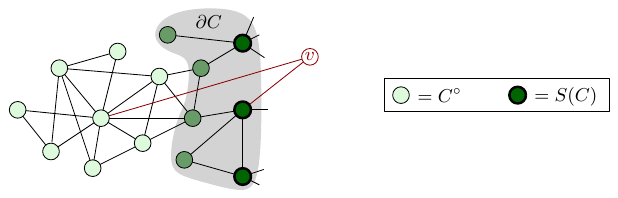}
\caption{Illustrating Property~\ref{prop: internal-to-outside} of a star-based $r$-clustering:
        any node $v\notin C$ with a neighbor in~$C^{\circ}$ must also have a neighbor in~$S(C)$.}
\label{fig:star-based-clustering}
\end{figure}

\begin{definition}\label{def: star-based-clustering}
    Let $\G=(V,E)$ be a connected graph with $n$ nodes and let $r$ be an integer with $1\leq r< n$. 
    A \emph{star-based $r$-clustering} of $\G$ consists of a set $\C$ of clusters 
    together with a set~$S(C)\subseteq C$ of star centers for each~$C \in \mathcal{C}$, such that the following holds:
    \begin{enumerate}
        \item \label{prop: everything assigned} Every node in $V$ belongs to at least 
               one cluster $C \in \mathcal{C}$, that is, $\bigcup_{C\in \C} C = V$.
        \item \label{prop: small clusters} Every cluster $C \in \C$ contains at most $r$ nodes.
        \item \label{prop: internal-to-outside} Define the boundary of a cluster $C$ as 
              $\bd C := C \cap \bigcup_{v\in S(C)} \mystar(v)$ and define its interior 
              as $C^\circ := C\setminus \bd C$. 
              If a node $v\not\in C$ has a neighbor in $C^\circ$ 
              then $v$ also has a neighbor in~$S(C)$.
    \end{enumerate}
    We  call $|\C|$ the \emph{size} of the clustering, we call
    $\max\{ |S(C)| : C\in \C \}$ the \emph{local boundary complexity} of the clustering, 
    and we call $\sum_{C \in \C} |S(C)|$ the \emph{global boundary complexity} of the clustering.
\end{definition}
In the remainder of this section we will prove the following theorem.
\begin{theorem}\label{thm: star-based-clustering}
Let $\D$ be a set of $n$ constant-complexity objects from a family of objects with  
union complexity~$U(n)$, such that any two boundary arcs intersect at most~$s$ times, 
and for which the intersection graph $\ig[D]$ is connected.
A star-based $r$-clustering of $\ig[D]$ of size $O(U(n) /\sqrt{r})$, local boundary complexity $O(\sqrt{r})$, 
and global boundary complexity $O(U(n)/\sqrt{r})$ can be computed in $O(U(n) \log ^2 n)$ expected time.
\end{theorem}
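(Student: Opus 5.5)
We plan to build the clustering from a random stacking and an $r$-division of a planar graph associated with it. First compute a (simplified) stacking $\stackpir(\D)$ with owners and conflict lists, using \autoref{thm:conflict-list-computation}. This gives a planar subdivision with $O(U(n))$ faces, vertices and edges, in which every face $f$ has an owner $\owner(f)\supseteq f$. Let $\G^*$ be a planar graph on these $O(U(n))$ features, for instance the dual graph augmented with vertices of the subdivision, or the planar graph of vertices and edges of $\stackpir(\D)$. Give each feature the weight $|K_f|+1$; the total weight is $O(U(n)\log n)$ in expectation by \autoref{lem:size-of-conflict-sets}. To avoid the $\log n$ factor in the bound, we will instead assign each object $D$ to a single face, a home face $\home{D}$ that contains a point of $\bd D$, chosen to be a face that $D$ meets on the boundary of $\union(\D)$ or owned by $D$ when possible. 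Then the total weight is $O(U(n)+n)=O(U(n))$. We apply a weighted version of \autoref{lem: planar-r-division} to $\G^*$ with parameter $r$. This yields $O(U(n)/r)$ pieces, each of weight at most $r$, and each with $O(\sqrt r)$ boundary features. Replacing $r$ by $\sqrt r$-scaled quantities gives the stated size $O(U(n)/\sqrt r)$ with local boundary complexity $O(\sqrt r)$. More precisely, we use pieces of weight $\Theta(r)$, so there are $O(U(n)/r)$ of them with $O(\sqrt r)$ boundary features each, for global complexity $O(U(n)/\sqrt r)$. Any slack in the size bound can absorb the splitting into connected clusters.

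For each piece $P$, define the cluster $C$ to be the objects whose home face lies in $P$, together with the owners of the boundary faces of $P$. Define $S(C)$ to be the owners of the boundary faces of $P$ and the objects owning the faces incident to boundary vertices. Property~\ref{prop: internal-to-outside} is the crux. Suppose $v\notin C$ intersects $u\in C^\circ$. A point of $v\cap u$ lies in some face $g$. Follow a path inside $u\cup v$ from the home region of $u$ to that of $v$; since $v$'s home is outside $P$ and $u$'s is inside, the path crosses a boundary face $f$ of $P$. On such a crossing, $v$ or $u$ meets $f\subseteq\owner(f)$. If $u$ meets $f$, then $u\in\mystar(\owner(f))$, which contradicts $u\in C^\circ$. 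Hence $v$ meets $f$, and therefore $v$ is adjacent to $\owner(f)\in S(C)$. Making this rigorous requires that the path be drawable inside a single object near the crossing. This needs care: we route it along $\bd u$ or $\bd v$, using the fact that every point of the plane covered by an object lies in some face whose owner contains it.

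Two further issues remain: connectivity of clusters and the size bound. To get connectivity, we split each $C$ into connected components of $\ig[C]$. Components that contain no star center must then be merged or attached. Here we would use that $\ig[\D]$ is connected, so each component touches a neighbouring component through a boundary star, and we add the relevant star center to make it connected. Controlling how many components arise is the main obstacle. I would try to bound it by charging each component to a boundary feature of $P$, which gives $O(\sqrt r)$ per piece and keeps the global bounds. Finally, the running time is dominated by computing the stacking and conflict lists, which takes $O(U(n)\log^2 n)$ expected time. The $r$-division of $\G^*$ takes $O(U(n)\log n)$ time, and assembling the clusters is linear in the conflict-list sizes.
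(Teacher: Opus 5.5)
Your overall plan is the paper's: a stacking, an $r$-division of its dual graph, owners of boundary faces as star centers, and a curve through $u\cup v$. Your Property~3 argument is also fine for the unsplit set $C$. The gap is exactly at what you call ``the main obstacle'': once you split $\ig[C]$ into components, the argument breaks.

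You need $\owner(f)$ to lie in the \emph{same} component as $u$. Your $C$ contains only objects homed in $P$ plus owners of boundary faces of $P$. So the owners of the interior faces your curve passes between $u$'s home face and $f$ need not be in $C$, and nothing links $u$ to $\owner(f)$ inside $\ig[C]$. Your rule does not prevent every object meeting $u$ from being homed outside $P$ while owning no boundary face of $P$. Then $\{u\}$ is a cluster with $S=\emptyset$, $u$ is interior, and any neighbour of $u$ violates Property~3. Merging such components is unspecified and endangers $|C|\le r$ and the $O(\sqrt r)$ local bound. Appealing to ``slack'' also does not bound the number of clusters.

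The missing idea is the paper's membership rule. Region $C^*$ contributes $R(C^*)=\{D: V(D)\cap C^*\neq\emptyset\}$, i.e.\ the owner of \emph{every} face node in it, and $D$ is a star center iff one of its nodes in $C^*$ is a boundary node. Consecutive faces on the dual path then have intersecting owners, all lying in $R(C^*)$. Hence the owner chain from $D_1$ to $\owner(g)$ stays inside one component, and Property~3 holds per component directly. The size bound follows as well. Since $|C|<n$ and $\ig[\D]$ is connected, some outside object is adjacent to some $u\in C$, and either $u\in\bd C$ or Property~3 applies. So every cluster has a star center, and the number of clusters is at most the global boundary complexity.

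Your weighted $r$-division also fails as stated, because one face can be the home of far more than $r$ objects. Take $n-1$ pairwise disjoint small objects inside a large object $T$. The roughly $n/2$ of them preceding $T$ in $\pi$ own no face and can only be homed at the face of $T$. So, for $r$ well below $n/2$, no piece containing that face has weight at most $r$, and Property~2 is lost. The paper avoids weights by attaching, for each object $D$ owning no face, a pendant node $v_D$ to the face containing $\rep(D)$. The graph stays planar with $O(U(n))$ nodes, and Frederickson's lemma applies unweighted; the heavy face simply becomes a boundary node shared by several regions.

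Finally, your time bound does not follow from what you compute. By \autoref{thm:conflict-list-computation}, the conflict lists cost $O(\lambda_{s+2}(U(n))\log^2 n)$. The paper needs only the unsimplified stacking of \autoref{lem:compute-stacking} plus point location.
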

It is important to note that while the definition of a star-based
$r$-clustering refers to the stars of the nodes in~$S$, these stars
need not be computed explicitly: we only need to compute the 
set $\C$ of clusters and the set $S$ of star centers.

\subparagraph{\rm\em Remark.} Chang, Gao, and Le~\cite{Chang0024}
presented an algorithm to compute a clique-based $r$-clustering for a much more
restricted family of objects, namely constant-complexity pseudodisks that are similarly-sized and fat.
Besides the obvious fact that their cluster boundaries consist of cliques
instead of stars (which can be an advantage), they also have a stronger version of 
our Property~\ref{prop: internal-to-outside}: in their clustering,
an interior node of some cluster~$C$ does not have a neighbor outside~$C$.
Fortunately, our weaker property is sufficient for the \diamp algorithm
to be correct. On the positive side, we have an~$O(\sqrt{r})$ bound on
the local boundary complexity, while they only have a bound on the global boundary
complexity. This will help us to reduce the running time of the \diamp algorithm from
$\tilde{O}(n^{2-1/18})$ to $\tilde{O}(n^{2-1/14})$.

\subparagraph{Computing a star-based $r$-clustering.} 
We now present an algorithm that computes 
a star-based $r$-clustering for the intersection graph~$\ig[\D]$ of
a set $\D$ of $n$ constant-complexity objects with union complexity~$U(n)$. 
We assume that $\ig[D]$ is connected---this is easy to check by computing $\union(\D)$.
Our algorithm consists of the following steps.
\begin{enumerate}
    \item \label{step:cluster-init}
        Compute a stacking $\stackpi(\D)$ according to \autoref{lem:compute-stacking};
        we do not need the stacking to be simplified. Let $\G$ be the dual graph of $\stackpi(\D)$ (which is planar); 
        when convenient, we do not distinguish between a face of the stacking and its node in $\G$.
        \smallskip

        For each object~$D \in \D$, pick a representative point $\rep(D)$. If~$D$ owns at least one face, 
        then let~$\rep(D)$ be an arbitrary point inside a face owned by~$\rep(D)$. Otherwise, let~$\rep(D)$ be 
        an arbitrary point in~$D$ that is not on the boundary of a face.
    \item \label{step:augment} Augment $\G$ as follows. For every object $D$ that does not own any face,
        create a node $v_D$ and add an edge from $v_D$ to (the node corresponding to) the face 
        that contains~$\rep(D)$; see \autoref{fig:augment}. 
        \begin{figure}
        \centering
        \includegraphics{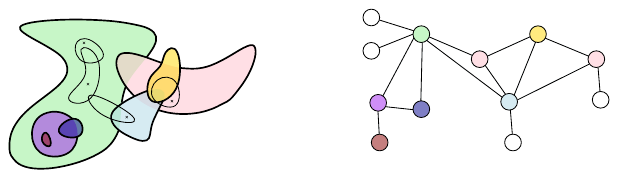}
        \caption{Left: a stacking of pseudodisks. There are four pseudodisks that do not own any face;
        their representative points are shown as crosses. Right: the augmented dual graph.}
        \label{fig:augment}
        \end{figure}
        Let $\G^*$ be the augmented graph, which is still planar.
        For each object~$D$, let~$V(D)$ be the set of nodes in~$\G^*$ that represent~$D$. So~$V(D)$ 
        contains the faces owned by~$D$ if there are any, and~$v_D$ otherwise.
    \item \label{alg: compute-r-division} 
        Compute an $r$-division $(\C^*,S^*)$ of $\G^*$ using \autoref{lem: planar-r-division}.
    \item \label{step:R}
       For each region $C^*\in\C^*$, create a subset $R(C^*)\subseteq \D$ that contains the objects
       corresponding to the nodes in~$C^*$. In other words,
       \[
       R(C^*) := \{ D \in \D : V(D) \cap C^* \neq \emptyset \}.
       \]
       We make an object $D\in R(C^*)$ a star center with respect to $R(C^*)$ 
       iff  $V(D) \cap S^*\cap C^* \neq \emptyset$. 
       Let $\R := \{ R(C^*) : C^* \in \C^* \}$ be the resulting collection of subsets.
       Observe that an object~$D$ can be
       a star center with respect to one set $R\in \R$ while $D$ is not a star center 
       with respect to some other set $R'\ni D$.
       This is necessarily to control the (local and global) boundary complexity.
    \item \label{step:region-to-cluster} For each $R \in \R$, compute the set~$\C(R)$ of connected components of $\ig[R]$.
          Let $\C$ be the set of connected components computed over all $R\in\R$.
          We report $\C$, with for each cluster $C\in\C$ its set $S(C)$ of star centers,
          as our star-based $r$-clustering.
\end{enumerate}
\medskip

\noindent Proving that the star-based $r$-clustering computed by the above algorithm has
Properties~\ref{prop: everything assigned} and~\ref{prop: small clusters} 
from \autoref{def: star-based-clustering} is straightforward.
The proof that it has Property~\ref{prop: internal-to-outside} is based on the fact
that every face in the stacking $\stackpi(\D)$ has an owner. Intuitively,
we can use this as follows. Suppose there is a cluster $C_1$ such that an object $D_2\not\in C_1$ intersects an
object~$D_1\in C_1^{\circ}$. 
Now trace a curve $\gamma\subset D_1\cup D_2$ through~$\stackpi(\D)$,
from the face responsible for adding $D_1$ to $C_1$, to the relevant face for $D_2$.
Consider the path in~$\G^*$ corresponding to the visited faces.
This path must contain a node in $S^*$ that is a face whose
owner is in $S(C_1)$ and that intersects~$D_2$ (which proves Property~\ref{prop: internal-to-outside})
or $D_1$ (which contradicts that $D_1\in C_1^{\circ}$).

Before we give a detailed proof that all three properties are indeed satisfied, 
we analyze the running time of the algorithm.
\begin{lemma} \label{lem:clustering-runtime}
The algorithm above runs in $O(U(n)\log^2 n)$ expected time.    
\end{lemma}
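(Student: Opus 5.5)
We bound the cost of the five steps one at a time; the only step needing care is Step~\ref{step:region-to-cluster}, since it computes connected components of intersection graphs.

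\emph{Steps~\ref{step:cluster-init}--\ref{step:augment}.} By \autoref{lem:compute-stacking}, the stacking $\stackpi(\D)$ with an owner per face is computed in $O(U(n)\log^2 n)$ expected time and has complexity $O(U(n))$. Its dual graph $\G$ follows in linear time from the doubly-connected edge list. Scanning the faces and their owners tells us which objects own a face, and a representative point inside an owned face is read off directly. For each object $D$ that owns no face, we choose $\rep(D)$ (for instance a vertex of $D$ perturbed into $\intr(D)$, avoiding face boundaries by general position) and locate it in $\stackpi(\D)$ with a point-location structure. Building that structure takes $O(U(n)\log n)$ time, and the $n$ queries take $O(n\log n)$. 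The augmented graph $\G^*$ therefore has $O(U(n)+n)=O(U(n))$ nodes and edges and is planar.

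\emph{Steps~\ref{alg: compute-r-division}--\ref{step:R}.} \autoref{lem: planar-r-division} gives the $r$-division of $\G^*$ in $O(U(n)\log U(n))=O(U(n)\log n)$ time, using that $U(n)$ is polynomial in $n$. The total size of all regions is $O(U(n))$, because an $r$-division with $O(|\G^*|/r)$ regions of $O(\sqrt r)$ boundary nodes each has total size $O(|\G^*|)$. We compute the sets $R(C^*)$ and their star centers by scanning each region and mapping every node to the object it represents, deduplicating with timestamped marks. This takes time linear in the total region size, $O(U(n))$, and hence $\sum_{R\in\R}|R|=O(U(n))$.

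\emph{Step~\ref{step:region-to-cluster}: the main obstacle.} We must compute the connected components of $\ig[R]$ for every $R\in\R$ without ever listing edges. For each $R$ we compute $\union(R)$ by \autoref{lem:union-comp} in $O(U(|R|)\log n)$ expected time. Two objects lie in the same component of $\ig[R]$ iff they lie in the same connected component of $\union(R)$, since connectivity of a union of connected sets is exactly connectivity through pairwise intersections. So we label the components of $\union(R)$ by a traversal of its boundary structure and assign each object to a component via a point-location query with one of its points, for example $\rep(D)$ or any point of $\bd D$. This costs $O((U(|R|)+|R|)\log n)$ per set. Since $U$ is at least linear, we may assume it is superadditive (replacing $U$ by a superadditive majorant if needed, as the paper implicitly does when summing over the disjoint $L_j$). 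Then $\sum_R U(|R|)\le U(\sum_R|R|)=U(O(U(n)))$, but this bound is too weak. Instead we use $U(m)=O(m\cdot U(n)/n)$ for $m\le n$, or simply $U(m)=O(m\log^* n)$-type near-linearity, so that $\sum_R U(|R|)=O(U(n))$ up to the bound on total size. The natural convention is to assume $U(m)/m$ is nondecreasing, which gives $\sum_R U(|R|)\le (\sum_R |R|)\cdot U(n)/n$ whenever each $|R|\le n$. This is the delicate point; I would state that convention explicitly, and for pseudodisks and locally fat objects it clearly holds. Summing all steps yields $O(U(n)\log^2 n)$ expected time, dominated by the stacking construction.
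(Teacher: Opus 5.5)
Your approach is the paper's: the stacking cost from \autoref{lem:compute-stacking} dominates, the $r$-division of the $O(U(n))$-size planar graph $\G^*$ costs $O(U(n)\log n)$, Steps~2 and~4 are absorbed (point location for $\rep(D)$), and Step~5 computes $\union(R)$ for each $R\in\R$ using $\sum_{R\in\R}|R|=O(U(n))$. Steps~1--4 are fine. So is your reduction from components of $\ig[R]$ to components of $\union(R)$.

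The gap is the final summation in Step~5. Under your convention that $U(m)/m$ is nondecreasing, and using $|R|\le r$, you get
\[
\sum_{R\in\R} U(|R|) \;\le\; \frac{U(r)}{r}\sum_{R\in\R}|R| \;=\; O\Bigl(U(n)\cdot\frac{U(r)}{r}\Bigr).
\]
In general this is $O(U(n)^2/n)$, not $O(U(n))$. The factor $U(n)/n$ would cancel only if $\sum_R|R|$ were $O(n)$. It is only $O(U(n))$, because an object can own many faces of the stacking, and those faces can lie in many regions.

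So your argument bounds Step~5 by $O(U(n)\cdot\frac{U(r)}{r}\cdot\log n)$. That is within $O(U(n)\log^2 n)$ only under the extra assumption $U(r)/r=O(\log n)$. This assumption covers pseudodisks and locally fat objects, but your closing sentence claims the bound for an arbitrary union-complexity function, and that does not follow. You should state the hypothesis explicitly, or restrict the claim to near-linear $U$.

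For comparison, the paper simply charges $O(|R|\log^2|R|)$ per set and sums using $\sum_R|R|=O(U(n))$. That per-set charge is exactly the linear-union-complexity case (pseudodisks, where the clustering is used for \diamp), so the paper does not settle the general-$U$ case either. You were right that this is the delicate point, but your resolution of it is miscomputed.
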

\begin{proof}
Step~\ref{step:cluster-init} runs in $O(U(n)\log^2 n)$ expected time by \autoref{lem:compute-stacking}.
Since \autoref{lem:compute-stacking} guarantees that the size of $\G^*$ is $O(U(n))$, Step~\ref{alg: compute-r-division}  runs in $O(U(n)\log n)$ time.
The times for Steps~\ref{step:augment} and~\ref{step:R} are subsumed by this.
(Finding the face containing $\rep(D)$ in Step~\ref{step:augment} can be done by point location.)
Finally, Step~\ref{step:region-to-cluster} can be done in $O(|R| \log^2 |R|)$ time 
for each $R\in \R$, since it boils down to computing $\union(R)$.
Thus, the total time for Step~\ref{step:region-to-cluster} is $\sum_{R\in\R} O(|R| \log^2 |R|)$,
which is $O(U(n)^2\log n)$ because $\sum_{R\in\R}|R| = O(\sum_{C^*\in\C^*}|C^*|) =O(U(n))$.
\end{proof}
%
\subparagraph{Proof of correctness.} 
We now prove that $\C$ fulfills the properties of a star-based $r$-clustering.
\begin{lemma}\label{lem: star-prop}
The above algorithm computes a valid star-based $r$-clustering.
\end{lemma}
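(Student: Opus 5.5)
We verify the three properties of \autoref{def: star-based-clustering} one at a time; Property~\ref{prop: internal-to-outside} is the only real work.

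\emph{Properties~\ref{prop: everything assigned} and~\ref{prop: small clusters}.} Every object $D$ has $V(D)\neq\emptyset$: either it owns a face, or the node $v_D$ was created in Step~\ref{step:augment}. Every node of $\G^*$ lies in some region of the $r$-division, so $D\in R(C^*)$ for some $C^*$. The components of $\ig[R(C^*)]$ partition $R(C^*)$, so $D$ lies in some cluster. Each cluster is contained in some $R(C^*)$, and $|R(C^*)|\leq |C^*|\leq r$, which gives Property~\ref{prop: small clusters}. Each cluster is connected by construction. We take $S(C)$ to be the star centers of $R(C^*)$ that lie in $C$; this is a subset of $C$, as required.

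\emph{Property~\ref{prop: internal-to-outside}.} Let $C\in\C(R(C^*))$, let $D_1\in C^\circ$, and let $D_2\notin C$ with $D_1\cap D_2\neq\emptyset$. The plan is to build a path in $\G^*$ from a node of $V(D_1)\cap C^*$ to a node of $V(D_2)$ in which every node is a face (or $v_D$-node) whose object intersects $D_1$ or $D_2$.

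\textbf{Construction of the path.} Pick $x\in D_1\cap D_2$. Take a curve $\gamma_1\subset D_1$ from a point of a face (or from $\rep(D_1)$) in $V(D_1)\cap C^*$ to $x$, and a curve $\gamma_2\subset D_2$ from $x$ to $\rep(D_2)$. The faces of $\stackpi(\D)$ crossed by $\gamma=\gamma_1\gamma_2$ form a walk in the dual graph $\G$; the nodes $v_D$ attach only at the endpoints. Each face $g$ on the walk meets $\gamma_1\subset D_1$ or $\gamma_2\subset D_2$, and $g\subseteq\owner(g)$, so $\owner(g)$ intersects $D_1$ or $D_2$. This is where ownership is crucial.

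\textbf{Leaving the region.} Suppose the walk stays inside $C^*$ without meeting a boundary node of $S^*$. Then every node is interior to $C^*$, so every owner along the walk belongs to $R(C^*)$. Consecutive owners are chained via $D_1$ and $D_2$. We must check that $D_2\in R(C^*)$ and that $D_2$ is in the same component as $D_1$, which contradicts $D_2\notin C$. The endpoint of the walk is a node of $V(D_2)$, and it lies in $C^*$ by assumption. Since $D_2$ intersects $D_1$, it is in the same component $C$.

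So the walk must hit a node $u\in S^*\cap C^*$. Take the first node where the walk leaves the interior of $C^*$: by the $r$-division property, an interior node has all its neighbors in $C^*$, so the walk must pass through a boundary node $u\in S^*\cap C^*$ before exiting. Let $D=\owner(u)$ (or the object with $u=v_D$). Then $D$ is a star center of $R(C^*)$, and $D$ intersects $D_1$ or $D_2$. If $D$ intersects $D_1$, then $D\in C$, so $D\in S(C)$ and $D_1\in\mystar(D)$, which contradicts $D_1\in C^\circ$. Hence $D$ intersects $D_2$, but then we still need $D\in C$.

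\textbf{Main obstacle: $D\in C$.} To guarantee $D\in C$, take $u$ to be the first boundary node along the walk. All earlier nodes are interior to $C^*$, and their owners form a chain from $D_1$ in which each owner intersects $D_1$ or $D_2$; the owners meeting $D_1$ are directly in $C$. The delicate case is when the walk has already passed $x$, onto $\gamma_2$, before reaching $u$. Then all owners met on $\gamma_2$ intersect $D_2$, and $D_2$ must be connected to $C$ inside $R(C^*)$. The face $g$ containing $x$ is interior, so $\owner(g)\in R(C^*)$ intersects $D_1$ and is therefore in $C$. Connecting $D_2$ itself is the subtle point. I would resolve it by choosing $\gamma_1$ and $\gamma_2$ so that the walk through the faces containing $x$ certifies the connection, treating the sub-cases by whether the owner of the face containing $x$ (or $D_2$ itself) is a star center, and handling the augmented $v_D$ nodes (degree~1) separately.
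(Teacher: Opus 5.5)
Your treatment of Properties~1 and~2, the construction of the walk from $\gamma=\gamma_1\gamma_2$, and the choice of $u$ as the first node of the walk in $S^*\cap C^*$ all follow the paper's proof. The case where $\owner(u)$ meets $D_1$ is also handled correctly. The remaining case, where $\owner(u)$ meets only $D_2$, is the one you flag as the main obstacle. It is left unresolved, and the fix you sketch cannot work.

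You try to link the owners met along $\gamma_2$ to $C$ through $D_2$, but $D_2$ is not available as a connector. If any node of $V(D_2)$ were in $C^*$, then $D_2\in R(C^*)$. Since $D_2$ meets $D_1$, it would then lie in $C$, contradicting $D_2\notin C$. Hence $D_2\notin R(C^*)$. No choice of $\gamma_1,\gamma_2$ and no case analysis around the faces containing $x$ can route connectivity in $\ig[R(C^*)]$ through $D_2$.

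The missing idea is to build the chain from the owners of consecutive nodes of the walk, not through $D_1$ and $D_2$. Two faces that are adjacent in the dual graph $\G$ share an edge of $\stackpi(\D)$. Each face is contained in its owner, so both owners contain that edge and therefore intersect. Likewise, $v_{D_1}$ is adjacent only to the face containing $\rep(D_1)$, and since $\rep(D_1)\in D_1$, that face's owner meets $D_1$.

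All nodes of the walk up to and including $u$ lie in $C^*$, because the earlier ones are interior nodes of $C^*$. So their objects all belong to $R(C^*)$, and consecutive objects intersect. This gives a path in $\ig[R(C^*)]$ from $D_1$ to $\owner(u)$. Hence $\owner(u)\in C$, so $\owner(u)\in S(C)$, and $\owner(u)$ meets $D_2$ because the $D_1$ case is excluded.

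The fact that every face on the walk meets $D_1$ or $D_2$ is needed only for the final node $u$, not for the connectivity argument. This is exactly how the paper closes the proof.
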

\begin{proof}
    It follows directly from the construction that $\C$ satisfies Properties~\ref{prop: everything assigned} 
    and~\ref{prop: small clusters}. 
    To establish Property~\ref{prop: internal-to-outside}, consider an object $D_1$ that is an interior node in a cluster $C_1\in \C$
    and an object $D_2\not\in C_1$ that is a neighbor of $D_1$ in $\ig[\D]$. We will prove that~$D_2$ has a neighbor in~$S(C_1)$. 
    
    Let~$C^*_1 \in \C^*$ be the region in the $r$-division~$(\C^*, S^*)$ 
    that contributed cluster~$C_1$ as a connected component of~$\ig[R(C^*_1)]$ in Step~\ref{step:region-to-cluster}. 
    Since~$D_1 \in C_1$, there is a node~$f_1 \in V(D_1)$ representing it in~$\G^*$ with
    $f_1 \in C^*_1$. Let~$f_2 \in V(D_2)$ be an arbitrary node of~$\G^*$ that represents~$D_2$. 
    If~$f_2 \in C^*_1$, then (as~$D_2$ intersects~$D_1$) it would belong to the 
    same connected component of~$\ig[R(C^*_1)]$ and therefore to~$C_1$. Hence~$f_2 \notin C^*_1$. 
    
    We now claim the following: there is a (not necessarily simple) path~$\pi$ in~$\G^*$ from~$f_1$ to~$f_2$, 
    such that all internal nodes on the path correspond to faces of~$\stackpi(\D)$ (i.e., nodes of~$\G$) that 
    intersect~$D_1$ or~$D_2$ (or both). If~$f_1$ and~$f_2$ represent faces of~$\stackpi(\D)$, 
    we can argue the existence of~$\pi$ as follows. Take any curve~$\gamma$ that starts 
    in the interior of face~$f_1$ and ends in the interior of face~$f_2$ and that lies fully in~$D_1 \cup D_2$; 
    such a curve exists since $D_1$ and $D_2$ own $f_1$ and $f_2$ (and, hence, fully cover them) and $D_1$ intersects $D_2$. Since~$\G$ is the dual graph of~$\stackpi(\D)$, a path~$\pi$ with the claimed properties can then be obtained 
    as the sequence of faces intersected by~$\gamma$; see \autoref{fig:gamma-new}. 
    \begin{figure}
    \centering
    \includegraphics{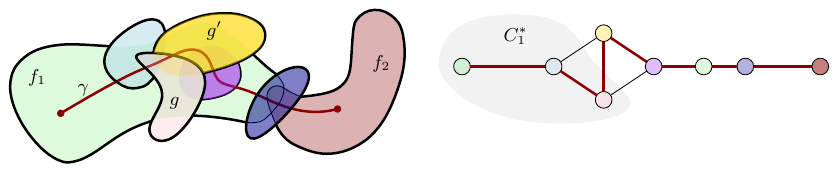}
    \caption{Illustration for the proof of \autoref{lem: star-prop}. Left: the curve~$\gamma\subset D_1\cup D_2$
             from $f_1$ to~$f_2$. Right: the corresponding path~$\pi$ in the dual graph~$\G^*$.
             If $g$ is the first face whose successor is not in~$C^*_1$, then the object corresponding to $g$
             is a star center in~$C_1$.}
    \label{fig:gamma-new}
    \end{figure}
    If node~$f_1$ does not represent a face of~$\stackpi(\D)$ but is the unique node~$v_{D_1}$ in~$\G^*$ representing~$D_1$, then we start curve~$\gamma$ at point~$\rep(D_1)$ instead; note that~$v_{D_1}$ is adjacent in~$\G^*$ to the node for the face containing~$\rep(D_1)$ and that the owner of that face intersects~$D_1$. Similarly, if~$f_2$ does not represent a face of~$\stackpi(\D)$ then we end~$\gamma$ at~$\rep(D_2)$ instead. We obtain the internal vertices of path~$\pi$ as the faces visited by~$\gamma$. 

    Using path~$\pi$ we complete the proof. Let~$g$ be the first node of~$\pi$ whose successor~$g'$ does not belong to~$C^*_1$. 
    Since~$f_1$ belongs to~$C^*_1$ but~$f_2$ does not, this is well-defined. As~$g$ itself belongs to~$C^*_1$, we have~$g \neq f_2$. 
    The fact that~$g \in C^*_1$ has a neighbor~$g'$ in another region implies that~$g \in S^*$: it is a boundary node of the $r$-division. 
    If~$f_1 = g$, then the vertex~$f_1$ representing~$D_1$ in~$C^*_1$ belongs to~$S^*$, which would imply that~$D_1 \in S(C_1)$. 
    But that means~$D_1$ is not an interior node of~$C_1$; a contradiction. Hence~$f_1 \neq g$, which means~$g$ is an internal node on path~$\pi$ that 
    represents a face of~$\stackpi(\D)$. Our choice of~$\pi$ guarantees that~$g$ intersects~$D_1$ or~$D_2$. To complete the proof, we distinguish these two cases.
    \begin{itemize}
        \item Suppose~$g$ intersects~$D_1$. Since~$g \in C^*_1$, the object~$\owner(g)$ is added to~$R(C^*_1)$ during Step~\ref{step:R}. Object~$D_1$ represented by~$f_1$ is added as well. 
        Then the intersecting objects~$\owner(g)$ and~$D_1$ end up in the \emph{same} connected component of~$\ig[R(C^*_1)]$ 
        in Step~\ref{step:region-to-cluster}, which is the cluster~$C_1$. Since~$g$ is a boundary node of~$C^*_1$, 
        this means~$\owner(g)$ is added to~$S(C_1)$ in the algorithm (that is, it is a star center with respect to $C_1$).
        Since~$\owner(g)$ belongs to~$S(C_1)$ and is adjacent to~$D_1$ in~$\ig[\D]$, this contradicts that~$D_1$ is an internal node of~$C_1$.
        \item Suppose~$g$ intersects~$D_2$. We claim that for each node~$x$ on the prefix~$\pi'$ of~$\pi$ that ends with~$g$, the object~$\owner(x)$ is contained in~$C_1$. 
        To see this, note that each such node belongs to~$C^*_1$ by choice of~$g$ and that 
        consecutive nodes in the dual graph of~$\stackpi(\D)$ 
        represent two intersecting objects. Hence, all such objects belong to the \emph{same} connected component of~$\ig[R(C^*_1)]$, which means all these objects belong to the resulting cluster~$C_1$. Consequently,~$\owner(g) \in C_1$. As~$g$ is a boundary node of the $r$-division, this implies~$\owner(g) \in S(C_1)$. Since~$g$ intersects~$D_2$, this establishes that Property~\ref{prop: internal-to-outside} holds.\qedhere
    \end{itemize}
\end{proof}

To finish the proof of \autoref{thm: star-based-clustering}, it remains to bound the size and the
local and global boundary complexity of the computed star-based $r$-clustering.
\begin{lemma}\label{lem: star-size}
    The star-based $r$-clustering $\C$ 
    has size $O(U(n)/\sqrt{r})$, 
    local boundary complexity $O(\sqrt{r})$, and global boundary complexity 
    $O(U(n) /\sqrt{r})$.
\end{lemma}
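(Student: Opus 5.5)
The bounds follow by transferring the guarantees of the $r$-division from \autoref{lem: planar-r-division} to the clusters. First I bound the size of $\G^*$. The dual graph $\G$ of $\stackpi(\D)$ has one node per face, so it has $O(U(n))$ nodes by \autoref{lem:compute-stacking}. The augmentation in Step~\ref{step:augment} adds at most one node per object, which is $O(n)=O(U(n))$ more since $U(n)\geq n$. Hence $N:=|V(\G^*)|=O(U(n))$. Applying \autoref{lem: planar-r-division} to $\G^*$ gives $|\C^*|=O(N/r)=O(U(n)/r)$ regions, each with $O(\sqrt r)$ boundary nodes.

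\emph{Local boundary complexity.} Fix a cluster $C\in\C$ and let $C^*$ be the region it comes from. Each star center $D\in S(C)$ has a node of $V(D)$ in $S^*\cap C^*$. Distinct objects have disjoint sets $V(D)$, because each face has a single owner and each node $v_D$ represents only $D$. So the map from $S(C)$ to nodes of $S^*\cap C^*$ is injective, and $|S(C)|\leq|S^*\cap C^*|=O(\sqrt r)$. The same injection over all clusters from one $R(C^*)$ gives $\sum_{C\in\C(R(C^*))}|S(C)|\leq |S^*\cap C^*|$, since each star center of $R(C^*)$ lies in exactly one connected component. Summing over $C^*$ gives a global boundary complexity of $O(\sqrt r\cdot U(n)/r)=O(U(n)/\sqrt r)$.

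\emph{Size.} This is the delicate part, because one $R(C^*)$ can split into many connected components. I would charge each component to a boundary node as follows. For a component $C$ of $\ig[R(C^*)]$ with $S(C)=\emptyset$, the part of $C^*$ representing $C$ consists only of interior nodes of the $r$-division. These nodes have all their $\G^*$-neighbours inside $C^*$. Adjacent nodes of $\G^*$ correspond to intersecting objects: neighbouring faces share an edge, so their owners touch, and $v_D$ is adjacent to a face containing $\rep(D)$, whose owner intersects $D$. Hence the nodes of $C$ together with all their $\G^*$-neighbours represent objects in $C$.

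Since $\ig[\D]$ is connected, $\G^*$ is connected; this holds because $\union(\D)$ is connected and the dual graph of the stacking is connected. So if $C$ is not all of $\D$, some node of $C$ has a neighbour outside these nodes, and that neighbour must lie in $C^*$. This is a contradiction unless $C$ contains a boundary node, so every component except in the trivial case has $S(C)\neq\emptyset$. Then the number of components is at most the number of star centers, and $|\C|\leq\sum_C\max(1,|S(C)|)=O(U(n)/\sqrt r)$. The trivial case $C=\D$ cannot occur for $r<n$.

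The main obstacle is the claim that nodes of $C^*$ belonging to a boundary-free component have all neighbours in that component. Interior nodes have neighbours only in $C^*$, and adjacency forces intersection, so the neighbours' objects join the same component. I would verify this carefully, including for nodes $v_D$.
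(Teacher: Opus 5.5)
Your proof is correct. For $|\G^*|$ and for the local and global boundary complexity it is the paper's argument. Your injection via the disjoint sets $V(D)$ even gives the accurate inequality where the paper asserts that the number of star centers of $R(C^*)$ \emph{equals} $|S^*\cap C^*|$.

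For the size bound you take a different route. The paper shows that every cluster has a star center by taking an edge of $\ig[\D]$ leaving $C$, which exists because $|C|\le r<n$ and $\ig[\D]$ is connected. It then applies either the definition of $\bd C$ or Property~3, already established in \autoref{lem: star-prop}. That argument is purely graph-theoretic: in \emph{any} star-based $r$-clustering of a connected graph with $r<n$, the size is at most the global boundary complexity.

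You instead argue inside $\G^*$. If $S(C)=\emptyset$, the nodes of $C^*$ representing objects of $C$ are all interior nodes of the $r$-division, so their neighbourhoods stay inside this set. Connectivity of $\G^*$ would then force $C^*$ to contain all of the at least $n>r$ nodes. This makes the size bound independent of \autoref{lem: star-prop}, but it needs the global fact that $\G^*$ is connected.

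Your justification for that fact (``the dual graph of the stacking is connected'') is stated rather than proved. Connectedness of $\union(\D)$ alone does not suffice: two tangent disks have a connected union but faces sharing no edge. The missing sentence is the general-position argument. With no tangencies, intersecting objects share interior points, so $\intr(\union(\D))$ is connected. A curve in it that avoids the finitely many vertices of $\stackpi(\D)$ passes between faces only through shared edges, and each $v_D$ is attached to a face node.

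This is the same curve-tracing the paper uses locally, for a curve in $D_1\cup D_2$, in \autoref{lem: star-prop}. So both proofs rest on the same geometric fact: yours applies it globally to all of $\G^*$, the paper's applies it once per intersecting pair.
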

\begin{proof} 
    Each region $C^*$ in the $r$-division $(\C^*,S^*)$ computed in 
    Step~\ref{alg: compute-r-division} contains $O(\sqrt{r})$ boundary nodes by \autoref{lem: planar-r-division}.
    Hence, after Step~\ref{step:R} each subset in~$R(C^*)$ contains $O(\sqrt{r})$ 
    star centers. 
    A cluster generated in Step~\ref{step:region-to-cluster} for some subset $R(C^*)$
    obviously cannot contain more boundary nodes than~$R(C^*)$.
    Thus, the local boundary complexity of $\C$ 
    is $O(\sqrt{r})$.
    \medskip

   Observe that the augmented graph~$\G^*$ has $O(U(n))$ nodes.
    By \autoref{lem: planar-r-division}, $(\C^*,S^*)$ thus has 
    $O(U(n)/r)$ regions, each with $O(\sqrt{r})$ nodes in $S^*$, so
    $\sum_{C^* \in \C^*} |S^*\cap C^*| = O(U(n)/\sqrt{r}).$ 
    By construction, the number of star centers in $R(C^*)$ is equal to $|S^*\cap C^*|$.
    Because the clusters generated 
    in Step~\ref{step:region-to-cluster} for a given subset $R(C^*)\in \R$ are disjoint,
    it must hold that $\sum_{C\in \C(R)} |S(C)|=|S^*\cap C^*|$,
    where $\C(R)$ is the set of clusters generated for~$R$.
    Because $\C = \bigcup_{R \in \R} \C(R)$, we thus have
    \[
    \sum_{C \in \C} |S(C)| = \sum _{R \in \R} \sum_{C\in \C(R)} |S(C)| = 
    \sum_{C^*\in\C^*} |S^*\cap C^*| = O(U(n) / \sqrt{r}),
    \]
    proving the bound on the global boundary complexity.
    \medskip
    
    It remains to show that the size of $\C$ is $O(U(n) /\sqrt{r})$.
    We do this by proving the following claim: each cluster contains a star center.
    Since the global boundary complexity is  $O(U(n) /\sqrt{r})$, this claim implies
    that the size of $\C$ is $O(U(n) /\sqrt{r})$ as well.
    
    To prove the claim, consider a cluster $C$ generated for a region $R \in \R$. 
    Recall that $C$ is a connected component of~$\ig[R]$. Because $r<n$ 
    we have $|R|<n$ and, hence, $|C| <n$. Since $\ig[\D]$ is connected,
    there must be some node $v \not \in C$ that has an edge to some node $u \in C$. 
    If $u\in \bd C$ then, by definition, $u\in \mystar(s)$ for some $s\in S(C)$.
    Thus, $C$ contains a star center, as claimed.
    If $u\not\in \bd C$ then $v$ has a neighbor in~$S(C)$, by Property~\ref{prop: internal-to-outside}
    of a star-based $r$-clustering, which also proves the claim.
\end{proof}

\subparagraph{\rm\em Remark.} Define a \emph{star-based $r$-division} to be a star-based
$r$-clustering $\C$ without the condition that the subsets in $\C$ induce
a connected subgraph. In other words, we can work with regions instead of clusters.
Then Step~\ref{step:region-to-cluster} is not needed, which implies that 
we only have $O(n/r)$ regions, like in Frederickson's $r$-division for planar graphs.
This improved bound on the size of~$\C$ may be useful in applications where
connectedness of the regions is not required.

\subsection{The \diamp algorithm}
We now present an algorithm to approximate 
the diameter of the intersection graph $\G := \ig[\D]$ of a set $\D$ of pseudodisks, up to an additive error of~2.
Note that the objects in $\D$ must be pseudodisks;
it is not sufficient if their union complexity~$U(n)$ is near-linear.
Also recall that $U(n)=O(n)$ for pseudodisks.
In the following, it will be convenient to denote the node set of $\ig[\D]$ by $V$
(even though $V=\D$).
\medskip

Let $d(u,v)$ denote the hop distance between two nodes $u,v$ in $\G$.
Recall that the \emph{eccentricity} of a node $u$ is defined as 
$\ec(u) := \max_{v\in V}  d(u,v)$. Thus, $\diam(\G) = \max_{u\in V} \ec(u)$
and it is sufficient to approximate the eccentricity of each node~$u\in V$ 
up to an additive error of~2. 
Following the framework of Chang, Gao, and Le~\cite{Chang0024}
we will approximate the eccentricities using an (in our case star-based) $r$-clustering
and using so-called patterns, as introduced by Fredslund-Hansen, Mozes, and Wulff-Nilsen~\cite{DBLP:conf/isaac/Fredslund-Hansen21}. 
The concept of patterns has also been used in earlier works~\cite{DBLP:conf/stoc/LiP19,WULFFNILSEN2013831}.

\subparagraph{Patterns.}
Let $Q = \langle q_0,\dots,q_{k-1} \rangle$ be a sequence of (not necessarily neighboring)
nodes in a given graph~$\G$. The \emph{pattern of a node $v$ with respect to $Q$} is the 
$k$-dimensional vector $\pt[v,Q]$ such that
$\pt[v,Q][i] = d(v,q_i)-d(v,q_0)$ for every $0 \leq i \leq k-1$.
The distance between some node $u$ and a pattern $\pt[v,Q]$ is defined as $d(\pt[v,Q],u):=\min_i \{ d(u,q_i)+\pt[v,Q][i]\}$.

Observe that if two nodes $u,v$ have the same pattern with respect to~$Q$, 
then $d(u,q_i) = d(v,q_i) + \delta$ for all~$i$, where $\delta:=d(u,q_0)-d(v,q_0)$. 
Hence, either all distances from $u$ to the nodes in~$Q$ are the same as the distances from~$v$,
or they are all larger, or they are all smaller.
This can be used to speed up the diameter computation for a planar graph~$\G$, as follows. 
Let $C$ be a region in an $r$-division $(\C,S)$ of~$\G$, and suppose we want to determine
the node $v\not\in C$ that is furthest away from any of the nodes in~$C$.
Then, by the observation just made, we only need to consider one node for each pattern
with respect to $Q:=C \cap S$, namely the node $v$ with the largest distance to that pattern.
This is true because any path from a node outside $C$ to a node in~$C$ must pass through a node in $Q$.

In our setting, we have a pseudodisk graph and we will use a star-based $r$-clustering~$\C$.
For each cluster~$C$ we take $Q$ to be an arbitrary ordering of the nodes in~$S(C)$.
The properties of a star-based $r$-clustering do not guarantee that any path
from a node outside~$C$ to a node in $C$ must pass through a node in~$S(C)$ but
observe that for nodes $u$ and $v$ for which a shortest path $\pi(u,v)$ 
goes through a node (which is a star center) in $Q$, we have $d(u,v)=d(\pt[v,Q],u)+d(v,q_0)$. 
When $\pi(u,v)$ visits a node adjacent to a node in~$Q$,
then we have a slightly weaker guarantee, as stated in the following lemma.
The lemma is similar to Lemma~16 by Chang, Gao, and Le~\cite{Chang0024}.
\begin{lemma}\label{lem: approx-path-boundary}
    Let $Q$ be an arbitrary sequence of nodes in a graph~$\G$.
    \begin{enumerate}[(i)]
    \item For any two nodes $u,v\in V$ we have $d(u,v) \leq d(\pt[v,Q], u) + d(v,q_0)$. \label{approx-i}
    \item If there is a shortest path $\pi(u,v)$  that contains a neighbor of 
          a node in $Q$, then we additionally have $d(\pt[v,Q],u) + d(v,q_0) \leq d(u,v)+2$.  \label{approx-ii}
    \item If there is a shortest path $\pi(u,v)$  that contains a node in $Q$, then 
         $d(\pt[v,Q],u) + d(v,q_0) = d(u,v)$.  \label{approx-iii}
    \end{enumerate}
\end{lemma}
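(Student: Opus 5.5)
The whole argument unfolds the definition $d(\pt[v,Q],u)=\min_i\{d(u,q_i)+d(v,q_i)-d(v,q_0)\}$. Adding $d(v,q_0)$ gives the key identity
\[
d(\pt[v,Q],u)+d(v,q_0)=\min_i \{ d(u,q_i)+d(v,q_i) \}.
\]
Each of the three parts then compares this minimum with $d(u,v)$. All three follow from the triangle inequality in the unweighted graph~$\G$ (distances are symmetric), with no geometry needed.

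For~(\ref{approx-i}), fix the index $i$ attaining the minimum. By the triangle inequality, $d(u,v)\leq d(u,q_i)+d(q_i,v)$, which is exactly the right-hand side.

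For~(\ref{approx-iii}), let $q_i\in Q$ lie on a shortest path $\pi(u,v)$. Subpaths of shortest paths are shortest, so $d(u,q_i)+d(q_i,v)=d(u,v)$. The minimum is therefore at most $d(u,v)$, and combined with~(\ref{approx-i}) we get equality.

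For~(\ref{approx-ii}), let $w\in\pi(u,v)$ be adjacent to some $q_i\in Q$. Then
\[
d(u,q_i)+d(q_i,v)\leq d(u,w)+1+d(w,v)+1=d(u,v)+2,
\]
using $d(u,w)+d(w,v)=d(u,v)$ because $w$ lies on a shortest path. The minimum over $i$ is at most this value, giving the claimed upper bound.

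The only step requiring care is the initial rewriting of the pattern distance into the symmetric form $\min_i\{d(u,q_i)+d(v,q_i)\}$. After that every part is a one-line triangle-inequality argument, so I expect no real obstacle.
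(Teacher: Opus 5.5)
Your proof is correct and follows essentially the same route as the paper. The paper also argues that $d(P_{v,Q},u)+d(v,q_0)$ is the length of a walk from $u$ to $v$ through some node of $Q$ (your identity $\min_i\{d(u,q_i)+d(v,q_i)\}$, which you simply make explicit). It then applies the triangle inequality at a node $w$ of the shortest path that is adjacent to, for part~(ii), or equal to, for part~(iii), a node of $Q$.
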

\begin{proof}
    Because $d(\pt[v,Q],u) + d(v,q_0)$ is the length of a valid path between $u$ and $v$---namely
    a path via a node in~$Q$---we have $d(u,v) \leq d(\pt[v,Q],u) + d(v,q_0)$.
    This proves~(i).
    
    Now suppose $\pi(u,v)$ contains a node~$w$ that has a neighbor $q \in Q$. 
    Note that $d(\pt[v,Q],u) + d(v,q_0) \leq d(u,q) +d(q,v)$ 
    because $q \in Q$. Because $w$ and $q$ are neighbors, we have
    \[
    d(u,q) +d(q,v) \leq d(u,w) + d(w,v) +2 = d(u,v)+2,
    \]
    which proves 
    part~(ii) of the lemma. Part~(iii) follows in a similar manner.
\end{proof}
To be able to make use of the above lemma we need the following result.
\begin{lemma}\label{lem:inside-to-outside-cluster}
Let $\pi$ be a path in $\G$ that starts at a node~$u$ in a cluster~$C$
and that visits a node~$v\not\in C$. Then there is a node $s \in S(C)$ 
that is a node in $\pi$ or that has a neighbor in~$\pi$.
\end{lemma}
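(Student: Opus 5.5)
We argue by walking along $\pi$ from its start and stopping at the first place where it leaves~$C$. Write $\pi=\langle u=w_0,w_1,\ldots,w_k\rangle$ with $w_k=v$. Since $u\in C$ and $v\not\in C$, there is a smallest index $i\geq 1$ with $w_i\not\in C$; then $w_{i-1}\in C$ and $w_i\not\in C$ are neighbors in~$\G$. We will show that some star center $s\in S(C)$ lies on $\pi$ or is adjacent to a node of~$\pi$, by examining the node $w_{i-1}$.

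We split into two cases according to whether $w_{i-1}$ is on the boundary or in the interior of~$C$, as defined in Property~\ref{prop: internal-to-outside} of \autoref{def: star-based-clustering}.

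\emph{Case 1: $w_{i-1}\in\bd C$.} By definition $\bd C = C\cap\bigcup_{s\in S(C)}\mystar(s)$, so $w_{i-1}\in\mystar(s)$ for some $s\in S(C)$. Since $\mystar(s)$ is the closed neighborhood of~$s$, either $w_{i-1}=s$, in which case $s$ is a node of~$\pi$, or $w_{i-1}$ is a neighbor of~$s$, in which case $s$ has a neighbor in~$\pi$.

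\emph{Case 2: $w_{i-1}\in C^\circ$.} Then $w_i\not\in C$ has a neighbor, namely $w_{i-1}$, in $C^\circ$. Property~\ref{prop: internal-to-outside} therefore gives a node $s\in S(C)$ adjacent to~$w_i$, and $w_i$ lies on~$\pi$.

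Both cases give the required $s$, so the argument is a direct unpacking of the definitions. The only points needing care are that the first exit index $i$ exists, which follows from $u\in C$ and $v\not\in C$, and that the closed-neighborhood definition of $\mystar(s)$ allows $s$ itself to lie on~$\pi$ in Case~1. Neither connectivity of $C$ nor the pseudodisk assumption is needed.
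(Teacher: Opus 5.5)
Your proof is correct and matches the paper's argument: take the first node of $\pi$ outside $C$, then split on whether its predecessor lies in $\bd C$ (so it lies in $\mystar(s)$ for some $s\in S(C)$) or in $C^\circ$ (so Property~\ref{prop: internal-to-outside} gives a star center adjacent to the exiting node).
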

\begin{proof}
    Consider the first node $w$ of $\pi$ that does not belong to $C$; 
    such a node exists since $v \notin C$. Let $w'$ be the predecessor of $w$, 
    which therefore belongs to $C$. If $w' \in \bd C$ then, by definition,
    $w'$ is in $S(C)$ itself or $w'\in \mystar(s)$ for some $s\in S(C)$.
    Hence, we are done in this case. If $w' \notin \bd C$ then, by Property~\ref{prop: internal-to-outside}, 
    node~$w$  has a neighbor in $S(C)$, and we are also done.
\end{proof}
\autoref{lem:inside-to-outside-cluster} can be used to argue that patterns can
provide us with a good approximation of distances. To use patterns efficiently, however,
we need that there are not too many patterns for a given sequence~$Q$.
Chang, Gao, and Le~\cite{Chang0024} prove that this is indeed the case
for pseudodisk graphs. They show (Theorem~14) that the so-called distance 
encoding VC-dimension of pseudodisk graphs is~4---this is where we need
the objects to be pseudodisks and not just any family with near-linear union complexity.
They then combine this
with earlier results on the number of different patterns when the VC-dimension
is bounded (Lemma~15). This results in the following lemma.
\begin{lemma}[combines Theorem 14 and Lemma 15 from~\cite{Chang0024}] \label{lem: pattern}
    Let $\D$ be a set of pseudodisks and $Q$ be a sequence of $k$ vertices in $\ig[\D]$. 
    Let $\cP := \{\pt[v,Q] : v \in V\}$ be the set of all patterns with
    respect to $Q$. If $d(q_0,q_i) \leq \Delta$ for every $1\leq i \leq k-1$, 
    then $|\cP|=O((k\cdot \Delta)^4).$
\end{lemma}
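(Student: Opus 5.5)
The lemma combines two results of Chang, Gao, and Le~\cite{Chang0024}: a topological bound on the distance-encoding VC-dimension of pseudodisk graphs (their Theorem~14) and a generic counting lemma for patterns in graphs of bounded distance-encoding VC-dimension (their Lemma~15). The plan is to check that the hypotheses of Lemma~15 are met in our setting and read off the bound. First, by the triangle inequality, $|\pt[v,Q][i]| = |d(v,q_i)-d(v,q_0)| \leq d(q_0,q_i) \leq \Delta$ for every node $v$ and every index~$i$. Every pattern is therefore a vector in $\{-\Delta,\ldots,\Delta\}^k$ with first coordinate~$0$.

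Second, encode patterns as set systems over a ground set of pairs $(i,t)$ with $0\leq i\leq k-1$ and $-\Delta\leq t\leq\Delta$. This ground set has $O(k\Delta)$ elements. The distance-encoding set of a node $v$ is $\{(i,t) : d(v,q_i)-d(v,q_0)\leq t\}$, and it determines $\pt[v,Q]$ uniquely, since each coordinate is recovered as the smallest threshold~$t$ that is present. Theorem~14 of~\cite{Chang0024} states that the VC-dimension of this family of sets, taken over all nodes~$v$, is at most~$4$. Its proof is purely topological and uses only the pseudodisk property, not fatness or similar sizes. This is why the lemma requires pseudodisks rather than just near-linear union complexity.

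Third, apply the Sauer--Shelah lemma. A set system of VC-dimension at most $d$ on a ground set of size $N$ has $O(N^d)$ distinct sets. With $N=O(k\Delta)$ and $d=4$, the number of distinct encoding sets is $O((k\Delta)^4)$, and by injectivity $|\cP|$ obeys the same bound.

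The main obstacle is confirming that the definitions used by Chang, Gao, and Le (pattern normalization, the choice of $q_0$, the ground set) match ours exactly, and that Theorem~14 does not quietly use their similarly-sized or fatness assumptions. If the definitions differ, we would redo the Sauer--Shelah step ourselves using the encoding above. The substantive geometric ingredient, the VC-dimension bound of~$4$, is taken directly from~\cite{Chang0024}.
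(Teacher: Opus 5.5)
Your proposal is correct and follows the paper's route. The paper gives no proof of its own: it cites Theorem~14 of Chang, Gao, and Le for the distance-encoding VC-dimension bound of~$4$, and their Lemma~15 for the count. Your triangle-inequality bound $|P_{v,Q}[i]|\leq\Delta$, followed by Sauer--Shelah on the ground set $\{0,\dots,k-1\}\times\{-\Delta,\dots,\Delta\}$, is exactly the argument behind that Lemma~15.
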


\subparagraph{The algorithm.}
Our algorithm is the same as that of Chang, Gao, and Le~\cite{Chang0024}, except that
we replace their clique-based $r$-clustering with our star-based $r$-clustering
and replace their \sssp algorithm with ours. 

The additive error of~2 in the computation
does not change: they can pick any node in a clique as a representative, while we must take
the center of a star, but the important property is that this representative has distance~1
to all other nodes in the clique or star. This is enough to give an additive error of~2;
see our \autoref{lem: approx-path-boundary} and their Lemma~16.
We note that the improvement to an additive error of~1, which is described
in the arXiv version of their paper~\cite{DBLP:journals/corr/abs-2401-12881},
does not apply when working with stars instead of cliques.

Our running time is better than theirs, as explained next.
Recall that we use the nodes in~$S(C)$ to define the patterns for a cluster~$C$.
Our clusters have local boundary complexity~$O(\sqrt{r})$ and size~$r$.
By \autoref{lem: pattern}, this means that the number of different patterns
for a cluster is $O((r\sqrt{r})^4)= O(r^6)$. They only have a bound on the
global boundary complexity, so they obtain a bound of $O((r\cdot r)^4)=O(r^8)$
on the number of patterns. A careful analysis shows that this improves the
running time from~$\tilde{O}(n^{2-1/18})$ to~$\tilde{O}(n^{2-1/14})$.

To keep the paper self-contained, we next describe and analyze the algorithm in detail.
\medskip

Recall that the goal of the algorithm is to compute the eccentricity of each node. 
We start by giving the intuition behind the algorithm. 

We use our star-based $r$-clustering $\C$ on the intersection graph~$\G := \ig[\D]$, 
which allows us to distinguish two types of shortest paths:
shortest paths that are fully contained in one cluster $C\in\C$, 
and shortest paths that leave the cluster. 
To handle the first type we can simply do the following for each cluster~$C$:
run the \sssp algorithm from \autoref{thm:sssp} on $\ig[C]$ with each $v\in C$ as the source.
Because the size of each cluster is $O(r)$ and we have $O(n/\sqrt{r})$ clusters, 
this takes $\tilde{O}(r^2 \cdot (n/\sqrt{r})) = \tilde{O}(n^2/\sqrt{r})$ time.

For the other type we use patterns. We first compute, for each node $v$ and 
each $Q(C)$ the pattern $\pt$. Here, $Q(C)$ denotes an arbitrary ordering of $S(C)$. This requires us to run our \sssp algorithm 
for each node in every sequence $Q(C)$. Since $\sum_{C\in\C}|S(C)|=O(n/\sqrt{r})$
we can do this in $\tilde{O}(n^2/\sqrt{r})$ time in total.  

After computing the patterns
$\pt$ for every $v\in V$ and $C\in \C$, we can approximate distances achieved by
paths of the second type. Using that the number of different patterns of a cluster
is $O(r^6)$ for each cluster~$C$ and the bounds from \autoref{thm: star-based-clustering},
we can show that the total time to handle paths of the second type is $O(r^7 (n/\sqrt{r}))$.
Setting $r=n^{1/7}$ then gives the final result.
\medskip

\noindent The following algorithm formalizes the intuitive description given above.
\begin{enumerate}
    \item \label{step:diam1} \begin{enumerate}
          \item Compute a star-based $r$-clustering $\C$ of~$\G := \ig[\D]$, 
                according to \autoref{thm: star-based-clustering}. For each star center $s \in \bigcup_{C \in \C} S(C)$, 
                run the \sssp algorithm from \autoref{alg:sssp} on $\G$ with source $s$. 
           \item \label{step:diam1b} For each cluster $C \in \C$, do the following. Create a sequence $Q(C)$ by arbitrarily ordering the nodes of $S(C)$. 
                 For each node $v \in V$, generate the pattern $\pt$ and store these patterns in a set~$\cP_C$.
                 (Recall that we already computed the distance from $v$ to each $s\in Q(C)$,
                 which is all that is needed to generate the patterns.) 
                 Sort the set of patterns, remove duplicates, and keep for each node~$v$ a pointer
                  to its pattern in~$\cP_C$. Finally, compute all pairwise distances in $\ig[C]$
                  by  running the \sssp algorithm $|C|$~times on $\ig[C]$, once for each $u\in C$ as the source.
            \end{enumerate}
    \item \label{step:diam2} 
          For each cluster $C \in \C$, each pattern $P \in \cP_C$, and each node $u \in C$, 
          compute the distance $d(P,u) :=\min_i \{ d(u,q_i)+P[i]\}$ of $u$ to the pattern~$P$.
          Let $u(P,C)$ be a node $u\in C$ that maximizes $d(P,u)$.
    \item \label{step:diam3} For each node $v$ we compute an approximation $\widetilde{\ec}(v)$ of $\ec(v)$.
          To this end, we compute for each cluster~$C\in \C$ a value $\Delta(v,C)$ that
          approximates the largest distance from~$v$ to any node in $C$. More precisely, 
          $\Delta(v,C)$ will satisfy
          \begin{equation} \label{eq:Delta}
          \max_{u\in C} d(v,u) \leq \Delta(v,C) \leq \max_{u\in C} d(v,u) + 2.
          \end{equation}
          After computing all values $\Delta(v,C)$, we set $\widetilde{\ec}(v):=\max_{C\in \C}\{\Delta(v,C)\}$.
          \smallskip
          
          Computing $\Delta(v,C)$ is done as follows. 
          Let $Q(C)$ be the sequence created for $C$ in Step~\ref{step:diam1} 
          and let $q_0$ be the first element of $Q(C)$. 
          Let $P = \pt$ be the pattern of $v$ with respect to~$Q(C)$.
          We distinguish two cases. 
          \smallskip
          
          \emph{Case (a): $v \not \in C$.} 
           In this case the node~$u(P,C)$
          computed in Step~\ref{step:diam2} is the node in~$C$ that is (almost) furthest from $v$,
          so we set $\Delta(v,C) := d(v,q_0)+d(P,u(P,C))$. 
          \smallskip
          
          \emph{Case (b): $v \in C$.} In this case a shortest path between $v$ and a vertex $u \in C$
           is either completely contained in $C$ or
          it intersects the neighborhood of~$S(C)$ (by \autoref{lem:inside-to-outside-cluster}).
          For each $u\in C$ we take the shortest of these two options, 
          and then we take the maximum over~$u$. Thus, we set
          \[
          \Delta(v,C)=\max_{u \in C} \big\{\min\{d_{\ig[C]}(u,v), d( v,q_0)+d(P,u) \} \big\},
          \]
          where $d_{\ig[C]}(u,v)$ is the distance between $u$ and $v$ in $\ig[C]$.   
\end{enumerate}
\medskip

\noindent We now prove \autoref{thm:diam}, which we restate for convenience,  by analyzing
the approximation error and running time of the above algorithm.
\diamtheorem*
\begin{proof}
    We first analyze the approximation error and then the running time.
    \medskip

    Our algorithm computes an approximate eccentricity $\widetilde{\ec}(v)$ for each $v\in V$.
    To bound the approximation error, it thus suffices to show that $\ec(v) \leq \widetilde{\ec}(v) \leq \ec(v)+2$.   
    To show this, it suffices to show that Inequality~(\ref{eq:Delta}) holds for all pairs~$(v,C)$.
    Let $P := \pt$ be the pattern of $v$ with respect to $Q(C)$.
    As in Step~\ref{step:diam3} of the algorithm, we consider two cases.
    \begin{itemize}
    \item If $v\not\in C$ then we have
          \[
          \Delta(v,C) = d(v,q_0) + d(P,u(P,C)) 
                 = d(v,q_0) + \max_{u\in C} d(P,u)
                 = \max_{u\in C} \hspace{0.2mm} \big\{ d(v,q_0) + d(P,u) \big\}.
          \]
          By \autoref{lem: approx-path-boundary}(\ref{approx-i}) this implies $\max_{u\in C} d(u,v)\leq \Delta(v,C)$.
          Moreover, because $v\not\in C$, there is a node~$s\in S(C)$ that is either a node
          on a shortest path from $v$ to $u(P,C)$ or the neighbor of such a node,
          by \autoref{lem:inside-to-outside-cluster}. 
          By \autoref{lem: approx-path-boundary}(\ref{approx-ii}) and~(\ref{approx-iii})
          we therefore have 
          \[
          \Delta(v,C) = d(v,q_0) + d(P,u(P,C)) \leq d(v,u(P,C)) + 2 \leq \max_{u\in C} d(u,v) + 2
          \]          
    \item If $v\in C$ then it suffices to argue that for all $u\in C$ we have
        \[
        d(u,v) \leq \min\{d_{\ig[C]}(u,v), \ d( v,q_0)+d(P,u) \}\leq d(u,v) + 2.
        \]
        Clearly, $d(u,v) \leq d_{\ig[C]}(u,v)$. Moreover,
        $d(u,v) \leq d(v,q_0)+d(P,u)$ by \autoref{lem: approx-path-boundary}(\ref{approx-i}).
        We conclude that the first inequality holds.
        
        If there is a shortest path $\pi(u,v)$ that stays inside $C$ then
        $d_{\ig[C]}(u,v)=d(u,v)$. To prove the second inequality, it therefore remains to show that 
        $d(v,q_0)+d(P,u) \leq d(u,v)+2$ when there is a shortest path $\pi(u,v)$
        that visits a node $w\not\in C$. By \autoref{lem:inside-to-outside-cluster}
        we then have a node $s \in S(C)$ that is in $\pi(u,v)$ or has a neighbor in $\pi(u,v)$.
        Similarly to before, we can use \autoref{lem: approx-path-boundary}(\ref{approx-ii})
        to conclude $d(v,q_0) + d(P,u) \leq d(v,u) + 2$.
    \end{itemize}

    \noindent It remains to show that the algorithm runs in $\tilde{O}(n^{2-1/14})$ time.
    \begin{itemize}
    \item Step~\ref{step:diam1} first computes a star-based $r$-clustering $\C$,
        which can be done in $\tilde{O}(n)$ time by \autoref{thm: star-based-clustering}. 
        We then run our \sssp algorithm from all star centers, of which there are $O(n/\sqrt{r})$. It follows from
        \autoref{thm:sssp} that this takes
        $\tilde{O}(n^2/\sqrt{r})$ time in total. Generating the patterns $\pt$ for all pairs $v,C$
        can be done by looking up 
        the computed distances. This takes $\sum_{v\in V} \sum_{C\in \C} |S(C)| =O(n^2/\sqrt{r})$ time,
        because $\sum_{C\in\C} |S(C)|= O(n/\sqrt{r})$ by our bound on the global boundary complexity.
        Moreover, removing duplicates
        by sorting has only a logarithmic overhead. (Doing a comparison between two patterns with respect to~$Q(C)$
        takes $\Theta(|Q(C)|)$ time, but the total number of patterns is a factor $|Q(C)|$ smaller
        than the sum of their sizes.) Finally,
          computing all distances $d_{\ig[C]}(u,v)$ takes $\tilde{O}(\sum_{C\in\C} |C|^2) = \tilde{O}(r^2 |\C|) = \tilde{O}(n r\sqrt{r})$
          expected time, since we run our \sssp algorithm $|C|$~times.
          Hence, Step~\ref{step:diam1}
        takes $\tilde{O}(n^2/\sqrt{r})$ time.
    \item 
        The total time spent in Step~\ref{step:diam2} is bounded by
        \[
        O\left( \sum\limits_{C\in \C} \sum\limits_{P\in \cP_C} \sum\limits_{u\in C} \mbox{length of pattern $P$} \right)
          =  O\left( \sum\limits_{C\in \C} (\mbox{\# patterns in $C$}) \cdot |C| \cdot |S(C)| \right).
        \]
        Because each cluster is connected, the distance between any two nodes in $Q(C)$
        is at most~$|C|$. Hence, \autoref{lem: pattern} and the bound from \autoref{thm: star-based-clustering}
        on the local boundary complexity together imply that
        \[
        \mbox{\# patterns in $C$} = O((|C| \cdot |S(C)|)^4) = O( (r\sqrt{r})^4) = O(r^6).
        \]
        Recall that the size $|C|$ of a cluster is at most~$r$ and that
        $\sum_{C\in\C}|S(C)|$, the global boundary complexity of our $r$-clustering,
        is $O(n/\sqrt{r})$. Hence,
        the total running time for~ Step~\ref{step:diam2} becomes
        \[
        O\left( \sum\limits_{C\in \C} r^6 \cdot r \cdot |S(C)| \right) = O\left( r^7 \cdot (n/\sqrt{r}) \right).
        \]
    \item To analyze Step~\ref{step:diam3}, we first observe that the time needed 
          for each node $v\in V$ to compute $\Delta(v,C)$ in Case~(a)
          is dominated by the time we already spent in earlier steps.
          Since we have the values $d(P,u)$ available, computing all
          the values $d(u,q_0)+d(P,u)$ in Case~(b) takes $O(|C|^2)$ for each cluster. Hence, the total expected time for Case~(b)
          is $\tilde{O}(\sum_{C\in\C} |C|^2) = \tilde{O}(r^2 |\C|) = \tilde{O}(n r\sqrt{r})$.
    \end{itemize}
    Summing the bounds for the three steps, we see that the expected total running time
    is $\tilde{O}(n^2/\sqrt{r}+r^7 (n/\sqrt{r}))$. Setting $r=n^{1/7}$ we get the 
    claimed running time of $\tilde{O}(n^{2-1/14})$.
\end{proof}
The techniques used to obtain \autoref{thm:diam} can also be used to construct a distance oracle,
similarly to what is done by Chang, Gao, and Le~\cite{Chang0024} in their setting.
Like for \diamp, we not only obtain a more general result but we also get a better bound,
in this case on the amount of storage. Somewhat surprisingly, the storage bound
for our oracle is smaller than the running time of our \diamp algorithm.
\begin{corollary}
    Let $D$ be a set of $n$ constant-complexity pseudodisks. Then we can construct 
    in $\tilde{O}(n^2)$ expected time a distance oracle for $\ig[D]$  that uses $O(n^{2-1/13})$ storage, 
    has query time $O(1)$, and returns the distance between two query nodes up to an additive error of 2.  
\end{corollary}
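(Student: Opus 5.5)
We build the oracle on the same ingredients as the diameter algorithm: a star-based $r$-clustering $\C$ of $\G:=\ig[\D]$ from \autoref{thm: star-based-clustering}, with $|\C|=O(n/\sqrt r)$, local boundary complexity $O(\sqrt r)$ and global boundary complexity $O(n/\sqrt r)$. We fix $r$ at the end by balancing the storage terms. Construction time is not the bottleneck: computing all pairwise distances by running the \sssp algorithm of \autoref{thm:sssp} from every node takes $\tilde O(n^2)$ expected time. From these distances we can read off every pattern and every table entry below within the same bound.

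\textbf{Stored data.} For every node $v$ we store a pointer, per cluster $C$, to its pattern $P_{v,Q(C)}\in\cP_C$ and the value $d(v,q_0(C))$. This costs $O(n\cdot |\C|)=O(n^2/\sqrt r)$ storage. For every cluster $C$, every pattern $P\in\cP_C$ and every $u\in C$ we store $d(P,u)$. By \autoref{lem: pattern} (clusters are connected, so $\Delta\le r$) there are $O(r^6)$ patterns per cluster, so this table uses $O(|\C|\cdot r^6\cdot r)=O(nr^{6.5})$ storage. For each cluster we also store all internal distances $d_{\ig[C]}(u,v)$, which takes $O(r^2)$ per cluster and $O(nr^{1.5})$ in total. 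Because a node may lie in several clusters, we additionally fix for each node $u$ one \emph{home} cluster $C_u\ni u$ and store it. Each stored distance is either an exact graph distance or a pattern-based value, so every entry is a plain number and no auxiliary structures are needed.

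\textbf{Query and error.} Given $u,v$, let $C:=C_u$ and $P:=P_{v,Q(C)}$. If $v\notin C$, we return $d(v,q_0)+d(P,u)$. If $v\in C$, which we test via a stored membership bit or a hash table of size $O(\sum|C|)$, we return $\min\{d_{\ig[C]}(u,v),\,d(v,q_0)+d(P,u)\}$. Both are $O(1)$ lookups. The error analysis repeats the proof of \autoref{thm:diam} for a single pair. The lower bound $d(u,v)\le$ answer follows from \autoref{lem: approx-path-boundary}(\ref{approx-i}) and $d\le d_{\ig[C]}$. For the upper bound, either some shortest path stays in $C$, in which case the first term is exact, or it leaves $C$ (which is automatic when $v\notin C$). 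In the latter case \autoref{lem:inside-to-outside-cluster} gives a star center on or adjacent to the path, and \autoref{lem: approx-path-boundary}(\ref{approx-ii}) bounds the error by $2$.

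\textbf{Balancing.} The total storage is $O(n^2/\sqrt r+nr^{6.5}+nr^{1.5})$. Balancing $n^2/\sqrt r=nr^{6.5}$ gives $r^7=n$, i.e.\ $r=n^{1/7}$, and hence storage $O(n^{2-1/14})$. This is weaker than the claimed $n^{2-1/13}$, so the main obstacle is shaving the factor in the pattern table. The first thing I would try is to exploit that the table only needs $d(P,u)$ for $u$ in a cluster of size $r$ while the per-node pattern pointers cost $n\cdot|\C|$. Concretely, I would use a star-based $r$-\emph{division} (the remark after \autoref{lem: star-size}) with only $O(n/r)$ regions instead of $O(n/\sqrt r)$ clusters, adjusting the pattern bound to use the distance between star centers within the graph. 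This makes the per-node cost $n^2/r$ and the table cost $(n/r)\cdot r^6\cdot r=nr^6$. Balancing gives $r^7=n$ and storage $n^{2-1/7}$, provided connectivity can be replaced by a bound $\Delta=O(r)$ on the $q_0$–$q_i$ distances. If connectivity is genuinely needed, I would instead combine $|\C|=O(n/\sqrt r)$ with a sharper pattern count $O(r^{5.5})$ obtained by applying \autoref{lem: pattern} with $k=O(\sqrt r)$ and a distance bound $\Delta=O(\sqrt r\cdot\text{something})$. This gives $n^2/\sqrt r=nr^{6}$, so $r^{6.5}=n$ and storage $n^{2-1/13}$, which matches the claim. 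Checking that $\Delta$ can indeed be taken smaller, for instance by noting that patterns only matter up to truncation, is the step I would verify most carefully.
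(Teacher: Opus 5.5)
\textbf{There is a genuine gap.} As you note yourself, the storage you actually establish is $O(n^{2-1/14})$, and neither proposed repair closes the gap.

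Route~(a) drops connectivity, but \autoref{lem: pattern} needs $d(q_0,q_i)\le\Delta$. In the paper, the $O(r^6)$ pattern count comes exactly from connectivity of clusters ($\Delta\le|C|\le r$). For regions of an $r$-division there is no such bound, so the pattern count, and with it your claimed $n^{2-1/7}$, is unsupported.

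Route~(b) needs a bound $\Delta=o(r)$, and this is false in general. A connected cluster can be path-like, with two star centers at distance $\Theta(r)$. You also give no argument that ``truncating'' patterns preserves the values $d(P,u)$ the query needs.

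\textbf{The missing idea.} You already set it up but did not exploit it: your query only ever uses $d(P,u)$ for $C=C_u$, the home cluster of $u$. So store $d(P,u)$ only for $P\in\cP_C$ and those $u$ with $C_u=C$. Each node has exactly one home cluster, so this table has
\[
\sum_{C\in\C}|\cP_C|\cdot|\{u : C_u=C\}|=O(r^6\cdot n)
\]
entries, instead of $O(|\C|\cdot r^6\cdot r)=O(nr^{6.5})$. Total storage becomes $O(n^2/\sqrt r+nr^6+nr^{3/2})$, and choosing $r:=n^{2/13}$ gives $O(n^{2-1/13})$.

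The entries $d(P,u)$ cannot simply be read off from all-pairs distances; each costs $O(|S(C)|)=O(\sqrt r)$ time. Filling the restricted table therefore takes $O(nr^{6.5})=O(n^2)$ time, so preprocessing stays $\tilde{O}(n^2)$. With your unrestricted table it would take $O(nr^7)$, which is too much at this $r$.

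With this change your construction is exactly the paper's, with the roles of the two query nodes swapped. Your pointer table, the intra-cluster distances, the query, and the error analysis via \autoref{lem:inside-to-outside-cluster} and \autoref{lem: approx-path-boundary} are all correct.
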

\begin{proof}
The oracle is constructed as follows.

We first run Step~\ref{step:diam1} of our \diamp algorithm to compute a star-based $r$-clustering $\C$ of~$\G := \ig[\D]$
as well as, for each cluster~$C\in\C$, the sequence~$Q(C)$ of star centers and the set~$\cP_C$ of patterns.
This time, however, we use a different value of $r$, as specified later.
Recall that each node $u\in V$ can be present in several clusters. We designate one of these
clusters as the \emph{home cluster} of $u$, denoted by~$\home{u}$. Distinguishing home
clusters allows us reduce the storage of our oracle, compared to the running time of our \diamp algorithm.
Our distance oracle now stores the following information.
\begin{enumerate}
    \item For each cluster $C \in \C$ and each $u,v \in C \times C$, we store $d_{\ig[C]}(u,v)$.
    \item For each cluster $C \in \C$, each node $u \in C$ such that $\home{u}=C$, and each pattern $P \in \cP_C$, we store $d(P,u)$.
          Note that we do not explicitly store the patterns $P$, which can have size $\Theta(\sqrt{r})$; a label identifying each pattern suffices.
    \item For each node $u\in V$ and each cluster $C \in \C$, we store a pointer from~$u$ to the pattern $P\in \cP_C$ 
    such that $P =\pt[u,Q(C)]$, and we store $d(u,q_0)$ where $q_0$ is the first element of~$Q(C)$.
\end{enumerate}
The oracle answers a distance query with nodes $u,v\in V$ as follows.
Let $C:= \home{v}$, let $q_0$ be the first element of~$Q(C)$, and let $P := \pt[u,Q(C)]$
be the pattern of $u$ with respect to~$Q(C)$.
\begin{itemize}
    \item If $u\not \in C$, we return $d(u,q_0)+d(P,v)$. 
    \item If $u \in C$, we  return $\min\{d_{\ig[C]}(u,v), d(u,q_0)+d(P,v) \}$.
\end{itemize} 

The values needed to answer the query have all been precomputed, thus a query takes $O(1)$ time.
The correctness proof of the query is similar to the correctness proof of the \diamp algorithm.
Indeed, if the shortest path from $u$ to $v$ visits a node outside~$C$
then we can use \autoref{lem:inside-to-outside-cluster}
and \autoref{lem: approx-path-boundary}(\ref{approx-ii}) to conclude that
$d(u,q_0)+d(P,v)\leq d(u,v)+2$, which is easily seen to imply that we always report
the distance up to an additive error of at most~2.
\medskip

It remains to analyze the preprocessing time and the amount of storage.
It follows from the analysis of the \diamp algorithm that we need $\tilde{O}(n^2 /\sqrt{r}+r^7(n/\sqrt{r}))$ expected time 
to construct the oracle. 
As for the storage, observe that we require $O(\sum_{C\in\C} |C|^2)=O(n r\sqrt{r})$ storage for the first part of the oracle. 
For the second part we require $O(nr^6)$ storage, since there are $O(r^6)$ patterns per cluster and each vertex has only one home cluster. 
This is less than the computation time needed for the second part, because
we do not explicitly store each pattern $P \in \cP_C$ and we only store $d(P,u)$ for nodes $u$ with $\home{u}=C$. 
For the third part, we require $O(n^2 / \sqrt{r})$ storage.
In total we need $O(n^2 /\sqrt{r}+nr^6)$ storage. This is minimized when we set $r:=n^{2/13}$,
leading to a storage bound of~$O(n^{2-1/13})$.
The preprocessing time then becomes~$\tilde{O}(n^2)$.
\end{proof}

\section{Concluding remarks}
We presented the first $O(n\polylog n)$ \sssp algorithm for arbitrary (constant-complexity) pseudodisks. 
This significantly extends the class of objects for which \sssp can be solved in near-linear time:
so far, the only $O(n\polylog n)$ \sssp algorithm for pseudodisks required them to
be similarly-sized, fat, and satisfy a strong monotonicity assumption,  or 
(for arbitrarily-sized objects) be a Euclidean disk.
Our algorithm not only works for any arbitrary set of pseudodisks, but it works
for any set~$\D$ of constant-complexity objects whose union complexity is $O(n\polylog n)$.
More precisely, the expected running time of our algorithm is $O(\lambda_{s+2}(U(n))\cdot \log^2 n)$,
where $U(n)$ denotes the union complexity of $\D$ and $s$ is the
maximum number of intersections between any pair of boundary arcs.

An obvious question is whether the running time can be improved. 
We note that until recently~\cite{DBLP:conf/esa/Brewer025,DBLP:conf/esa/BergC25} the fastest
\sssp algorithm for Euclidean disks ran in $O(n\log^2 n)$ time, and that these recent improvements
for disks use techniques that do not apply in our much more general setting.
Hence, improving the $O(\lambda_{s+2}(U(n))\cdot \log^2 n)$ running time that we achieve might
be quite challenging.

Apart from log-shaving, one can also wonder if there are more general families
of objects that admit an $O(n\polylog n)$ \sssp algorithm. This
is most likely not possible for segments, because of Erickson's $\Omega(n^{4/3})$ 
lower bound~\cite{DBLP:journals/dcg/Erickson96} for Hopcroft's problem. 
We do not see any natural class of objects that is more general than 
objects with near-linear union complexity but less general than segments, 
so it is not quite clear what to look for.
\medskip

We also presented an efficient \emph{star-based $r$-clustering} for objects with
small union complexity; this is a variant of widely used $r$-divisions of 
planar graphs~\cite{DBLP:journals/siamcomp/Frederickson87}. We expect that 
our star-based clusterings may find other applications as well.
\medskip

Finally, we showed that our \sssp algorithm and star-based $r$-clusterings
together are enough to run the \diamp algorithm of Chang, Gao, and Le~\cite{Chang0024}
to approximate the diameter of a pseudodisk graph up to an additive factor of~2.
This not only extends the class of objects for which the diameter can be
computed almost exactly, but the stronger complexity guarantees of our
star-based $r$-clustering also mean that the $\tilde{O}(n^{2-1/14})$
running time that we achieve is better than what Chang, Gao, and Le obtain.
Can the running time be improved even further? Or can we extend the result
from pseudodisks to objects with near-linear union complexity? For computing
the exact diameter, this is ruled out (under SETH) by the lower bound of
Bringmann~\etal~\cite{DBLP:conf/compgeom/BringmannKKNP22}, but does a subquadratic
algorithm exist that is almost exact? 
And can we perhaps even
obtain an exact algorithm with subquadratic running time for pseudodisks? Since this is so
far only known for unit disks and squares, a more modest goal would be to
obtain a subquadratic exact algorithm for (arbitrarily-sized) disks.

\bibliography{references}

@article{DBLP:journals/algorithmica/BaschGR03,
  author       = {Julien Basch and
                  Leonidas J. Guibas and
                  G. D. Ramkumar},
  title        = {Reporting Red - Blue Intersections between Two Sets of Connected Line
                  Segments},
  journal      = {Algorithmica},
  volume       = {35},
  number       = {1},
  pages        = {1--20},
  year         = {2003},
  doi          = {10.1007/S00453-002-0967-4}
}

@inproceedings{DBLP:conf/esa/BergC25,
  author       = {Mark de {Berg} and
                  Sergio Cabello},
  title        = {An ${O}(n\log n)$ Algorithm for Single-Source Shortest Paths in Disk Graphs},
  booktitle    = {Proc.~33rd Annual European Symposium on Algorithms ({ESA})},
  series       = {LIPIcs},
  volume       = {351},
  pages        = {81:1--81:15},
  year         = {2025},
  doi          = {10.4230/LIPICS.ESA.2025.81}
}

@article{DBLP:journals/dcg/Har-PeledS01,
  author       = {Sariel Har{-}Peled and
                  Micha Sharir},
  title        = {Online Point Location in Planar Arrangements and Its Applications},
  journal      = {Discret. Comput. Geom.},
  volume       = {26},
  number       = {1},
  pages        = {19--40},
  year         = {2001},
  doi          = {10.1007/S00454-001-0026-Y}
}

@book{DBLP:books/daglib/0080837,
  author       = {Micha Sharir and
                  Pankaj K. Agarwal},
  title        = {Davenport-Schinzel sequences and their geometric applications},
  publisher    = {Cambridge University Press},
  year         = {1995},
  isbn         = {978-0-521-47025-4}
}

@article{Szemeredi-DS,
  author       = {Endre Szemer\'edi},
  title        = {On a problem of {Davenport} and {Schinzel}},
  journal      = {Acta Arith.},
  volume       = {25},
  pages        = {213--224},
  year         = {1973}
}

@article{doi:10.1137/20M1388371,
author = {Liu, Chih-Hung},
title = {Nearly Optimal Planar $k$ Nearest Neighbors Queries under General Distance Functions},
journal = {SIAM Journal on Computing},
volume = {51},
number = {3},
pages = {723-765},
year = {2022},
doi = {10.1137/20M1388371}
}

@article{DBLP:journals/cpc/Sharir03,
  author       = {Micha Sharir},
  title        = {The Clarkson-Shor Technique Revisited And Extended},
  journal      = {Comb. Probab. Comput.},
  volume       = {12},
  number       = {2},
  pages        = {191--201},
  year         = {2003},
  doi          = {10.1017/S0963548302005527}
}

@article{DBLP:journals/dcg/ClarksonS89,
  author       = {Kenneth L. Clarkson and
                  Peter W. Shor},
  title        = {Application of Random Sampling in Computational Geometry, {II}},
  journal      = {Discret. Comput. Geom.},
  volume       = {4},
  pages        = {387--421},
  year         = {1989},
  doi          = {10.1007/BF02187740}
}

@article{DBLP:journals/jacm/Mulmuley91,
  author       = {Ketan Mulmuley},
  title        = {A Fast Planar Partition Algorithm, {II}},
  journal      = {J. {ACM}},
  volume       = {38},
  number       = {1},
  pages        = {74--103},
  year         = {1991},
  doi          = {10.1145/102782.102785}
}

@book{DBLP:books/daglib/0077673,
  author       = {Ketan Mulmuley},
  title        = {Computational geometry - an introduction through randomized algorithms},
  publisher    = {Prentice Hall},
  year         = {1994}
}

@article{DBLP:journals/dcg/BoissonnatDSTY92,
  author       = {Jean{-}Daniel Boissonnat and
                  Olivier Devillers and
                  Ren{\'{e}} Schott and
                  Monique Teillaud and
                  Mariette Yvinec},
  title        = {Applications of Random Sampling to On-line Algorithms in Computational
                  Geometry},
  journal      = {Discret. Comput. Geom.},
  volume       = {8},
  pages        = {51--71},
  year         = {1992},
  doi          = {10.1007/BF02293035}
}

@article{KaplanMRSS20,
  author       = {Haim Kaplan and Wolfgang Mulzer and Liam Roditty and Paul Seiferth and Micha Sharir},
  title        = {Dynamic Planar {V}oronoi Diagrams for General Distance Functions and Their Algorithmic Applications},
  journal      = {Discret. Comput. Geom.},
  volume       = {64},
  number       = {3},
  pages        = {838--904},
  year         = {2020},
  doi          = {10.1007/S00454-020-00243-7},
}

@article{DBLP:journals/dcg/Erickson96,
  author       = {Jeff Erickson},
  title        = {New Lower Bounds for {H}opcroft's Problem},
  journal      = {Discret. Comput. Geom.},
  volume       = {16},
  number       = {4},
  pages        = {389--418},
  year         = {1996},
  doi          = {10.1007/BF02712875}
}

@article{AronovBES14,
  author       = {Boris Aronov and Mark de Berg and Esther Ezra and Micha Sharir},
  title        = {Improved Bounds for the Union of Locally Fat Objects in the Plane},
  journal      = {{SIAM} J. Comput.},
  volume       = {43},
  number       = {2},
  pages        = {543--572},
  year         = {2014},
  doi          = {10.1137/120891241}
}

@inproceedings{DBLP:conf/esa/Brewer025,
  author       = {Bruce W. Brewer and
                  Haitao Wang},
  title        = {An Optimal Algorithm for Shortest Paths in Unweighted Disk Graphs},
  booktitle    = {Proc.~33rd Annual European Symposium on Algorithms ({ESA})},  
  series       = {LIPIcs},
  volume       = {351},
  pages        = {31:1--31:8},
  year         = {2025},
  doi          = {10.4230/LIPICS.ESA.2025.31}
}

@article{CabelloJ15,
  author       = {Sergio Cabello and Miha Jej{\v c}i{\v c}},
  title        = {Shortest paths in intersection graphs of unit disks},
  journal      = {Comput. Geom.},
  volume       = {48},
  number       = {4},
  pages        = {360--367},
  year         = {2015},
  doi          = {10.1016/j.comgeo.2014.12.003},
}

@article{EfratIK01,
  author       = {Alon Efrat and Alon Itai and Matthew J. Katz},
  title        = {Geometry Helps in Bottleneck Matching and Related Problems},
  journal      = {Algorithmica},
  volume       = {31},
  number       = {1},
  pages        = {1--28},
  year         = {2001},
  doi          = {10.1007/S00453-001-0016-8},
}

@article{ChanS19,
  author       = {Timothy M. Chan and Dimitrios Skrepetos},
  title        = {All-Pairs Shortest Paths in Geometric Intersection Graphs},
  journal      = {J. Comput. Geom.},
  volume       = {10},
  number       = {1},
  pages        = {27--41},
  year         = {2019},
  doi          = {10.20382/JOCG.V10I1A2},
}

@inproceedings{ChanS16,
  author       = {Timothy M. Chan and Dimitrios Skrepetos},
  editor       = {Seok{-}Hee Hong},
  title        = {All-Pairs Shortest Paths in Unit-Disk Graphs in Slightly Subquadratic Time},
  booktitle    = {Proc.~27th International Symposium on Algorithms and Computation (ISAAC)},
  series       = {LIPIcs},
  volume       = {64},
  pages        = {24:1--24:13},
  year         = {2016},
  doi          = {10.4230/LIPICS.ISAAC.2016.24},
}

@inproceedings{Chang0024,
  author       = {Hsien{-}Chih Chang and Jie Gao and Hung Le},
  title        = {Computing Diameter+2 in Truly-Subquadratic Time for Unit-Disk Graphs},
  booktitle    = {Proc. 40th International Symposium on Computational Geometry ({SoCG})},
  series       = {LIPIcs},
  volume       = {293},
  pages        = {38:1--38:14},
  year         = {2024},
  doi          = {10.4230/LIPICS.SoCG.2024.38}
}

@inproceedings{DBLP:conf/focs/ChanCGKLZ25,
  author       = {Timothy M. Chan and
                  Hsien{-}Chih Chang and
                  Jie Gao and
                  S{\'{a}}ndor Kisfaludi{-}Bak and
                  Hung Le and
                  Da Wei Zheng},
  title        = {Truly Subquadratic Time Algorithms for Diameter and Related Problems
                  in Graphs of Bounded VC-dimension},
  booktitle    = {Proc.~66th {IEEE} Annual Symposium on Foundations of Computer Science ({FOCS})},
  pages        = {2728--2765},
  year         = {2025},
  doi          = {10.1109/FOCS63196.2025.00140}
}

@book{bcko-cgaa-08,
  author    = {Mark de {B}erg and
               Otfried Cheong and
               Marc J. van Kreveld and
               Mark H. Overmars},
  title     = {Computational Geometry: Algorithms and Applications (3rd Edition)},
  publisher = {Springer},
  year      = {2008},
  doi       = {10.1007/978-3-540-77974-2},
}

@article{DBLP:journals/siamcomp/Frederickson87,
  author       = {Greg N. Frederickson},
  title        = {Fast Algorithms for Shortest Paths in Planar Graphs, with Applications},
  journal      = {{SIAM} J. Comput.},
  volume       = {16},
  number       = {6},
  pages        = {1004--1022},
  year         = {1987},
  doi          = {10.1137/0216064}
}

@article{Klost23,
  author       = {Katharina Klost},
  title        = {An algorithmic framework for the single source shortest path problem with applications to disk graphs},
  journal      = {Comput. Geom.},
  volume       = {111},
  pages        = {101979},
  year         = {2023},
  doi          = {10.1016/j.comgeo.2022.101979}
}

@article{WULFFNILSEN2013831,
author = {Christian Wulff-Nilsen},
title = {Constant time distance queries in planar unweighted graphs with subquadratic preprocessing time},
journal = {Computational Geometry},
volume = {46},
number = {7},
pages = {831-838},
year = {2013},
doi = {doi.org/10.1016/j.comgeo.2012.01.016}
}

@article{DBLP:journals/csur/Sommer14,
  author       = {Christian Sommer},
  title        = {Shortest-path queries in static networks},
  journal      = {{ACM} Comput. Surv.},
  volume       = {46},
  number       = {4},
  pages        = {45:1--45:31},
  year         = {2014},
  doi          = {10.1145/2530531}
}

@article{DBLP:journals/jacm/CharalampopoulosGLMPWW23,
  author       = {Panagiotis Charalampopoulos and
                  Pawel Gawrychowski and
                  Yaowei Long and
                  Shay Mozes and
                  Seth Pettie and
                  Oren Weimann and
                  Christian Wulff{-}Nilsen},
  title        = {Almost Optimal Exact Distance Oracles for Planar Graphs},
  journal      = {J. {ACM}},
  volume       = {70},
  number       = {2},
  pages        = {12:1--12:50},
  year         = {2023},
  doi          = {10.1145/3580474}
}

@article{DBLP:journals/dcg/Berg08,
  author       = {Mark de {Berg}},
  title        = {Improved Bounds on the Union Complexity of Fat Objects},
  journal      = {Discret. Comput. Geom.},
  volume       = {40},
  number       = {1},
  pages        = {127--140},
  year         = {2008},
  doi          = {10.1007/S00454-007-9029-7}
}

@article{DBLP:journals/talg/Cabello19,
  author       = {Sergio Cabello},
  title        = {Subquadratic Algorithms for the Diameter and the Sum of Pairwise Distances
                  in Planar Graphs},
  journal      = {{ACM} Trans. Algorithms},
  volume       = {15},
  number       = {2},
  pages        = {21:1--21:38},
  year         = {2019},
  doi          = {10.1145/3218821}
}

@inproceedings{DBLP:conf/compgeom/BringmannKKNP22,
  author       = {Karl Bringmann and
                  S{\'{a}}ndor Kisfaludi{-}Bak and
                  Marvin K{\"{u}}nnemann and
                  Andr{\'{e}} Nusser and
                  Zahra Parsaeian},
  title        = {Towards Sub-Quadratic Diameter Computation in Geometric Intersection
                  Graphs},
  booktitle    = {Proc.~38th International Symposium on Computational Geometry (SoCG)},
  series       = {LIPIcs},
  volume       = {224},
  pages        = {21:1--21:16},
  year         = {2022},
  doi          = {10.4230/LIPICS.SOCG.2022.21}
}

@article{DBLP:journals/siamcomp/GawrychowskiKMS21,
  author       = {Pawel Gawrychowski and
                  Haim Kaplan and
                  Shay Mozes and
                  Micha Sharir and
                  Oren Weimann},
  title        = {Voronoi Diagrams on Planar Graphs, and Computing the Diameter in Deterministic
                  {\~{O}}(n\({}^{\mbox{5/3}}\)) Time},
  journal      = {{SIAM} J. Comput.},
  volume       = {50},
  number       = {2},
  pages        = {509--554},
  year         = {2021},
  doi          = {10.1137/18M1193402}
}

@article{DBLP:journals/jacm/Pettie15,
  author       = {Seth Pettie},
  title        = {Sharp Bounds on Davenport-Schinzel Sequences of Every Order},
  journal      = {J. {ACM}},
  volume       = {62},
  number       = {5},
  pages        = {36:1--36:40},
  year         = {2015},
  doi          = {10.1145/2794075}
}

@article{gao-zhang,
author = {Gao, Jie and Zhang, Li},
title = {Well-Separated Pair Decomposition for the Unit-Disk Graph Metric and Its Applications},
journal = {SIAM Journal on Computing},
volume = {35},
number = {1},
pages = {151-169},
year = {2005},
doi = {10.1137/S0097539703436357}
}

@InProceedings{chan-skrepetos,
  author =	{Chan, Timothy M. and Skrepetos, Dimitrios},
  title =	{{Approximate Shortest Paths and Distance Oracles in Weighted Unit-Disk Graphs}},
  booktitle =	{Proc.~34th International Symposium on Computational Geometry (SoCG)},
  pages =	{24:1--24:13},
  year =	{2018},
  volume =	{99},
  doi =		{10.4230/LIPIcs.SoCG.2018.24}
}

@inproceedings{DBLP:conf/isaac/BergJL25,
  author       = {Mark de Berg and
                  Bart M. P. Jansen and
                  Jeroen S. K. Lamme},
  editor       = {Ho{-}Lin Chen and
                  Wing{-}Kai Hon and
                  Meng{-}Tsung Tsai},
  title        = {Star-Based Separators for Intersection Graphs of c-Colored Pseudo-Segments},
  booktitle    = {Proc.~36th International Symposium on Algorithms and Computation ({ISAAC})},
  series       = {LIPIcs},
  pages        = {12:1--12:16},
  year         = {2025},
  doi          = {10.4230/LIPICS.ISAAC.2025.12}
}

@article{DBLP:journals/algorithmica/AronovBT25,
  author       = {Boris Aronov and
                  Mark de Berg and
                  Leonidas Theocharous},
  title        = {A Clique-Based Separator for Intersection Graphs of Geodesic Disks
                  in {\textdollar}{\textbackslash}mathbb \{R\}{\^{}}2{\textdollar}},
  journal      = {Algorithmica},
  volume       = {87},
  number       = {12},
  pages        = {1997--2017},
  year         = {2025},
  doi          = {10.1007/S00453-025-01337-5}
}

@article{DBLP:journals/corr/abs-2401-12881,
  author       = {Hsien{-}Chih Chang and
                  Jie Gao and
                  Hung Le},
  title        = {Computing Diameter+1 in Truly Subquadratic Time for Unit-Disk Graphs},
  journal      = {CoRR},
  volume       = {abs/2401.12881},
  year         = {2024},
  doi          = {10.48550/ARXIV.2401.12881},
  eprinttype    = {arXiv}
}

@article{pseudodisk-union-86,
  author    = {Klara Kedem and
               Ron Livne and
               J{\'{a}}nos Pach and
               Micha Sharir},
  title     = {On the Union of Jordan Regions and Collision-Free Translational Motion
               Amidst Polygonal Obstacles},
  journal   = {Discret. Comput. Geom.},
  volume    = {1},
  pages     = {59--70},
  year      = {1986},
  doi       = {10.1007/BF02187683}
}

@inproceedings{DBLP:conf/isaac/Fredslund-Hansen21,
  author       = {Viktor Fredslund{-}Hansen and
                  Shay Mozes and
                  Christian Wulff{-}Nilsen},
  title        = {Truly Subquadratic Exact Distance Oracles with Constant Query Time
                  for Planar Graphs},
  booktitle    = {Proc.~32nd International Symposium on Algorithms and Computation ({ISAAC})},
  volume       = {212},
  pages        = {25:1--25:12},
  year         = {2021},
  doi          = {10.4230/LIPICS.ISAAC.2021.25}
}

@inproceedings{DBLP:conf/stoc/LiP19,
  author       = {Jason Li and
                  Merav Parter},
  title        = {Planar diameter via metric compression},
  booktitle    = {Proceedings of the 51st Annual {ACM} {SIGACT} Symposium on Theory
                  of Computing ({STOC})},
  pages        = {152--163},
  year         = {2019},
  doi          = {10.1145/3313276.3316358}
}

@article{DBLP:journals/jacm/ThorupZ05,
  author       = {Mikkel Thorup and
                  Uri Zwick},
  title        = {Approximate distance oracles},
  journal      = {J. {ACM}},
  volume       = {52},
  number       = {1},
  pages        = {1--24},
  year         = {2005},
  doi          = {10.1145/1044731.1044732}
}

\end{document}